\documentclass[10pt, a4paper ]{article}
\usepackage{setspace}

\usepackage{authblk}
\usepackage[numbers,sort&compress, super]{natbib}
\usepackage{mathtools}
\usepackage[a4paper, total={6.5in, 9in}]{geometry}
\usepackage[justification=justified]{caption}
\usepackage{amsmath}
\usepackage{amsthm, mathtools}
\usepackage{multicol, blindtext}
\usepackage{amssymb}
\usepackage[shortlabels]{enumitem}
\usepackage{qcircuit}
\usepackage[english]{babel}
\usepackage{algorithm}
\usepackage{dsfont}
\usepackage{color} 
\usepackage[framemethod=TikZ]{mdframed}
\usepackage[framemethod=TikZ]{mdframed}
\usepackage{algorithm}
\usepackage{algpseudocode}
\usepackage{braket}
\usepackage{etoolbox}
\apptocmd{\sloppy}{\hbadness 10000\relax}{}{}
\usepackage[labeled]{multibib}
\newcites{S}{Supplementary Material References}
\usepackage{url}

\usepackage{tikz}
\usepackage{graphicx}
\usetikzlibrary{positioning}
\usepackage[breaklinks=true,colorlinks=true,linkcolor=blue,urlcolor=blue,citecolor=blue]{hyperref}

\usepackage{soul}

\usepackage[utf8]{inputenc}
\usepackage{pgfplots}
\pgfplotsset{compat=newest}
\usepgfplotslibrary{groupplots}
\usepgfplotslibrary{dateplot}

\usepackage{graphicx}
\usepackage{subcaption}

\usepackage{tocloft}

\usepackage{etoolbox}

\newcommand{\defref}[1]{Definition \ref{#1}}
\newcommand{\secref}[1]{Section \ref{#1}}
\newcommand{\appref}[1]{Supplementary Material Section \ref{#1}}
\newcommand{\theref}[1]{Theorem \ref{#1}}
\newcommand{\figref}[1]{Fig.~\ref{#1}}
\renewcommand{\eqref}[1]{equation (\ref{#1})}

\newcommand{\suppfigref}[1]{Supplementary Figure~\ref{#1}}

\newcommand{\comment}[1]{\footnote{}}

\newcommand{\tr}{\textsf{tr}}

\newcommand{\BQP}{\textsf{BQP}}
\newcommand{\BPP}{\textsf{BPP}}
\newcommand{\IBM}{\textsf{QCIBM}}
\newcommand{\QCBM}{\textsf{QCBM}}

\DeclareMathOperator*{\argmin}{arg\,min}
\DeclareMathOperator{\MMD}{\textsf{MMD}}
\DeclareMathOperator{\SD}{\textsf{SD}}
\DeclareMathOperator{\SH}{\textsf{SHD}}
\DeclareMathOperator{\W}{\textsf{W}}
\DeclareMathOperator{\OT}{\textsf{OT}}
\DeclareMathOperator{\KL}{\textsf{KL}}

\DeclareMathOperator{\TV}{\textsf{TV}}
\DeclareMathOperator{\BosonSampling}{\textsf{BosonSampling}}

\DeclareMathOperator{\QAOA}{\textsf{QAOA}}
\DeclareMathOperator{\PH}{\textsf{PH}}
\DeclareMathOperator{\NP}{\textsf{NP}}

\DeclareMathOperator{\EVAL}{\textsf{EVAL}}
\DeclareMathOperator{\GEN}{\textsf{GEN}}

\DeclareMathOperator{\C}{\textsf{C}}

\DeclareMathOperator{\IQP}{\textsf{IQP}}
\DeclareMathOperator{\PP}{\textsf{PP}}

\DeclareMathOperator{\PQAOA}{\textsf{PostQAOA}}
\DeclareMathOperator{\PIQP}{\textsf{PostIQP}}

\DeclareMathOperator{\PBQP}{\textsf{PostBQP}}
\DeclareMathOperator{\PBPP}{\textsf{PostBPP}}

\DeclareMathOperator{\Pee}{\textsf{P}}

\newtheorem{theorem}{Theorem}[section]
\newtheorem{corollary}{Corollary}[theorem]
\newtheorem{definition}{Definition}[section]

\newcommand{\revneg}{\mathrel{\reflectbox{\rotatebox[origin=c]{180}{$\neg$}}}}

\DeclareUnicodeCharacter{2212}{-}

\definecolor{ForestGreen}{RGB}{34, 139, 34}
\definecolor{DeepSkyBlue}{RGB}{0, 191,255}
\definecolor{Lavender}{RGB}{230, 230, 250}
\newcommand\crule[3][black]{\textcolor{#1}{\rule{#2}{#3}}}



\newcommand{\red}[1]{{#1}}

\setstcolor{red}

\newcommand{\threeqqvm}{%
    \begin{tikzpicture}[transform canvas={scale=0.09}]
\draw[fill=gray] (0,0) circle (20pt);
\draw[fill=gray] (4,0) circle (20pt);
\draw[fill=gray] (2,3) circle (20pt);
\draw[gray, thick] (0,0) -- (4,0) -- (0,0) -- (2,3) -- (4,0) -- (2,3);
\end{tikzpicture}
}

\newcommand{\fourqqvm}{%
\begin{tikzpicture}[transform canvas={scale=0.08}]
\draw[fill=gray] (0,0) circle (20pt);
\draw[fill=gray] (4,0) circle (20pt);
\draw[fill=gray] (0,4) circle (20pt);
\draw[fill=gray] (4,4) circle (20pt);
\draw[gray, thick] (0,0) -- (4,0) -- (4,4) -- (0,4) -- (0,0) -- (4,4) -- (4,0) -- (0,4) ;
\end{tikzpicture}
}

\title{The Born Supremacy: Quantum Advantage and Training of an Ising Born Machine}

\author[1, *]{Brian Coyle}
\author[1]{Daniel Mills}
\author[1, 2]{Vincent Danos}
\author[1, 3]{Elham Kashefi}
\affil[1]{School of Informatics, 10 Crichton Street, University of Edinburgh, United Kingdom.}
\affil[2] {CNRS, \'{E}cole Normale Sup\'{e}rieure, Paris, 75005 Paris, France.}
\affil[3] {CNRS, LIP6, Sorbonne Universit\'{e}, 4 place Jussieu, 75005 Paris, France}
\affil[ ]{*Corresponding author: brian.coyle@ed.ac.uk}

\date{}

\begin{document}


\maketitle


\begin{abstract}
    The search for an application of near-term quantum devices is widespread. Quantum machine learning is touted as a potential utilisation of such devices, particularly those out of reach of the simulation capabilities of classical computers. In this work, we study such an application in generative modelling, focusing on a class of quantum circuits known as Born machines. Specifically, we define a subset of this class based on Ising Hamiltonians, and show that the circuits encountered during gradient-based training cannot be efficiently sampled from classically, up to multiplicative error in the worst case. Our gradient-based training methods use cost functions known as the Sinkhorn divergence and the Stein discrepancy which have not previously been used in the gradient based training of quantum circuits, and we also introduce quantum kernels to generative modelling. We show that these methods outperform the previous standard method, which used maximum mean discrepancy ($\MMD$) as a cost function, and achieve this with minimal overhead. Finally, we discuss the ability of the model to learn hard distributions and provide formal definitions for `quantum learning supremacy'. We also exemplify the work of this paper by using generative modelling to perform quantum circuit compilation.
\end{abstract}

\begin{multicols}{2}

\section*{Introduction} \label{sec:intro}

As quantum devices with $\sim 80-200$ qubits, but without fault tolerance, begin to be built, we near the dawn of the Noisy Intermediate Scale Quantum (NISQ) \cite{preskill_quantum_2018} technology era. Because of the low number of qubits, the limited connectivity between them, \red{and the low circuit depth permitted by low decoherence times,} these devices cannot perform many of the most famous algorithms thought to demonstrate exponential speedups over classical algorithms \cite{shor_polynomial-time_1997, harrow_quantum_2009}.

In spite of this, NISQ devices could provide efficient solutions to other problems which cannot be solved in polynomial time by classical means. Showing this to be true is referred to as a demonstration of \textit{quantum computational supremacy}\cite{bremner_classical_2011, gao_quantum_2017, bremner_average-case_2016,  aaronson_computational_2013, farhi_quantum_2016, boixo_characterizing_2018}\red{, with the first such experimental realisation occurring recently \cite{arute_quantum_2019}}.

Proposals for demonstrations of quantum computational supremacy on NISQ technology typically involve sampling from the output distribution of random quantum circuits. While a realisation of such an advantage is of great theoretical importance, generating random samples is not obviously independently interesting. We incorporate this sampling into a \textit{useful} application, keeping the provable quantum advantage, but in a context with more practical applicability. 

Specifically, we explore generative modelling in quantum machine learning (QML) which is the task of \textit{generalising} from a finite set of samples, $\{\mathbf{y}\}^M$, drawn from a data set. By learning the underlying probability distribution from which these samples are drawn, $\pi(\mathbf{y})$, a model should be able to generate new samples from said distribution. 

Generative models range from simple Na{\"i}ve Bayes\cite{maron_automatic_1961} models, to complicated neural networks, like generative adversarial networks (GANs)\cite{goodfellow_generative_2014}. The intrinsic randomness inherent in quantum mechanics allows for the definition of a new class of generative models which are without a classical analogue. Known as \textit{Born machines}\cite{cheng_information_2017, liu_differentiable_2018, benedetti_generative_2019}, they have the ability to produce statistics according to Born's measurement rule. Specifically, for a state $\ket{\psi}$, a measurement produces a sample $\mathbf{x} \sim p(\mathbf{x}) = |\braket{\mathbf{x}|\psi}|^2 \label{bornrule}$. There are several variants, including Bayesian approaches\cite{du_expressive_2018}, adversarial training methods \cite{zeng_learning_2019}, and adaptations to continuous distributions\cite{romero_variational_2019}. 

Quantum circuit Born machines ($\QCBM$) are a subclass of parametrized quantum circuits (PQC) and are widely applicable (see ref. \cite{benedetti_parameterized_2019} for a review). PQCs consist of a quantum circuit which carries parameters that are updated during a training process (typically a classical optimisation routine). The circuit is kept as shallow as possible so as to be suitable for NISQ devices. 

We ask in this work if it is possible to have a machine learning application for a PQC, which comes with a \textit{provable} superior performance over all classical alternatives on near term devices? Such provable guarantees are even more relevant given recent work in QML algorithm `dequantisations' \cite{tang_quantum-inspired_2018, tang_quantum-inspired_2018-1, andoni_solving_2018, chia_quantum-inspired_2018, gilyen_quantum-inspired_2018}. 

We take the first steps in answering this question in several ways. We define a subclass of $\QCBM$ which we call \emph{Ising Born machines} ($\IBM$). We improve the training of the model over previous methods, which use the maximum mean discrepancy\cite{liu_differentiable_2018} ($\MMD$) with a classical kernel, by introducing quantum kernels into the $\MMD$, as well as by using entirely new cost functions: the Stein discrepancy and the Sinkhorn divergence. To do so we derive their corresponding gradients in the quantum setting. 

We show that these novel methods outperform the $\MMD$-with-classical-kernel by achieving a closer fit to the data as measured by the total variation ($\TV$) distance. We derive \red{forms of the Sinkhorn divergence, which can either be efficient to compute, or result in an upper bound on $\TV$. We observe numerically that the Stein discrepancy provides an upper bound to $\TV$}. Next, we show that sampling from this model can not be simulated efficiently by any classical randomised algorithm, up to multiplicative error in the worst case, subject to common assumptions in complexity theory (namely the non-collapse of the polynomial hierarchy). Furthermore, this holds for many circuit families encountered during training. 

We define a framework in which a provable advantage could be demonstrated, which we refer to as quantum learning supremacy (QLS), and based on distribution learning theory\cite{kearns_learnability_1994}. Based on our classical sampling hardness results, \red{we conjecture that} the $\IBM$ \red{may be} a good candidate for a quantum model which could demonstrate this notion of learning supremacy, \red{however we leave the further investigation of QLS and its potential to be achieved by such models to future work}. Finally, we provide a novel utilisation of such generative models in quantum circuit compilation.

\section*{Results}

The main results of this work are new efficient gradient-based training methods, and results on the hardness of simulating the model we introduce using classical computers. Firstly we define the model used, and discuss its connection to previously studied quantum circuit families. We then discuss the efficient training of the model, firstly recalling a previously known gradient-based training method, which uses the $\MMD$ cost function, and then moving onto our new training methods which use the Stein discrepancy and the Sinkhorn divergence. We then discuss the Sinkhorn divergence complexity in detail, and further argue, using its connection to \red{the} total variation distance, why it should be used. We then prove the hardness results mentioned above, namely that many circuits encountered during gradient based training are hard to classically simulate, before finally discussing the potential use of quantum generative models in learning distributions which are intractable to classical models. In addition we provide a framework to study these advantages.

\subsection*{Ising Born Machine}

Here we define the model we use for distribution learning. A generic quantum circuit Born machine consists of a parameterised quantum circuit, which produces samples by measuring the resulting quantum state, and a classical optimisation loop used to learn a data distribution. The circuits we study have the following structure:

\begin{small}
\begin{align}
    \Qcircuit @C=0.6em @R=0.8em {
    \lstick{\ket{0}}    & \gate{H}  & \multigate{3}{U_z(\boldsymbol\alpha)} & \gate{U^1_f(\Gamma_1, \Delta_1, \Sigma_1)}  & \meter &\cw & \rstick{x_1} \\
    \lstick{\ket{0}}    & \gate{H}  & \ghost{U_z(\boldsymbol\alpha)}        & \gate{U^2_f(\Gamma_2, \Delta_2, \Sigma_2)}  & \meter &\cw & \rstick{x_2} \\
    \cdots              &           &                                       & \cdots                                    & \cdots &    &  \\
    \lstick{\ket{0}}    & \gate{H}  & \ghost{U_z(\boldsymbol\alpha)}        & \gate{U^n_f(\Gamma_n, \Delta_n, \Sigma_n)}  & \meter &\cw & \rstick{x_n} 
    }
\end{align}
\end{small}
where: $x_i \in \{0, 1\}$; the unitaries are defined by \eqref{diagonalunitary} and \eqref{finalmeasurementgate}; $S_j$ indicates the subset of qubits on which each operator, $j$, is applied; and a boldface parameter indicates a set of parameters, $\boldsymbol\alpha = \{\alpha_j\}$.
\begin{equation}
  U_z(\boldsymbol\alpha) \coloneqq  \prod_j U_z \left( \alpha_j, S_j \right) = \prod_j\exp \left(  \mathrm{i} \alpha_j \bigotimes_{k \in S_j} Z_k \right)\label{diagonalunitary}
 \end{equation}
  \begin{equation}
    U_f \left( \mathbf{\Gamma}, \mathbf{\Delta}, \mathbf{\Sigma} \right) \coloneqq \exp\left( \mathrm{i}\sum\limits_{k=1}^n \Gamma_k X_k + \Delta_k Y_k +\Sigma_k Z_k\right)    \label{finalmeasurementgate}
\end{equation}
The operators, $X_k, Y_k, Z_k$ are the standard Pauli operators acting on qubit $k$. Restricting to the case $|S_j| \leq 2$ (since only single and two-qubit gates are required for universal quantum computation) the term in the exponential of \eqref{diagonalunitary} becomes exactly an Ising Hamiltonian:
\begin{equation}
    \mathcal{H} \coloneqq  \mathrm{i}\sum\limits_{i<j} J_{ij}Z_iZ_j + \mathrm{i}\sum\limits_{k=1}^n b_k Z_k \label{iqp_ising_hamiltonian}
\end{equation}

where we are dividing the diagonal unitary parameters,  $\boldsymbol{\alpha} = \{J_{ij}, b_k\}$,  into local terms which act only on qubit $k$, $\{b_k\}$, and coupling terms between two qubits $i$ and $j$, $\{J_{ij}\}$. We call the model a Quantum Circuit \textit{Ising} Born machine ($\IBM$).

A measurement on all qubits in the computational basis results in sample vectors, $\mathbf{x} \in \mathcal{X}^n$, where 
$\mathcal{X} = \{0, 1\}$. These samples are drawn from the distribution, $p_{\boldsymbol\theta}(\mathbf{x})$, parameterised by the set of angles, $\boldsymbol\theta = \{\boldsymbol\alpha, \mathbf{\Gamma}, \mathbf{\Delta}, \mathbf{\Sigma}\}$:
\begin{equation}
    p_{\boldsymbol\theta}(\mathbf{x}) \coloneqq \left|\bra{\mathbf{x}}U_f \left( \mathbf{\Gamma}, \mathbf{\Delta}, \mathbf{\Sigma} \right) U_z(\boldsymbol\alpha)\ket{+}^{\otimes n}\right|^2
\end{equation}
We denote the above model and parameters by $\IBM({\boldsymbol \theta}) \coloneqq \IBM(\boldsymbol\alpha, \mathbf{\Gamma}, \mathbf{\Delta}, \mathbf{\Sigma})$. We choose this structure in order to easily recover two well known circuit classes, namely Instantaneous Quantum Polynomial Time\cite{shepherd_temporally_2009} ($\IQP$) circuits, and the shallowest depth ($p=1$) version of the Quantum Approximate Optimisation Algorithm\cite{farhi_quantum_2014}, ($\QAOA$). 

$\IQP$ circuits are named to reflect the commuting nature of elements in the produce defining the unitary $U_z$,  while $\QAOA$\cite{farhi_quantum_2014} was originally developed as an approximate version of the Quantum Adiabatic Algorithm\cite{farhi_quantum_2000}. Both of these classes of circuits are known to be routes to demonstrate quantum supremacy\cite{bremner_classical_2011, bremner_average-case_2016, farhi_quantum_2016, bremner_achieving_2017}, and we extend this property here by using the results of ref.\cite{fujii_commuting_2017}. These classes can be recovered by setting the parameters of a $\IBM$ as follows:
\begin{multline}
\IQP(\{J_{ij}, b_{k}\}) =
\IBM\Big(\left.\{J_{ij}, b_{k}\},\right.\\
\mathbf{\Gamma} = \left\{\frac{\pi}{2\sqrt{2}}\right\},
\left.\mathbf{0}, \mathbf{\Sigma}=  \left\{\frac{\pi}{2\sqrt{2}}\right\}\right) \end{multline}
\begin{multline}
\QAOA_{p=1}(\{J_{ij}, b_{k}\}, \mathbf{\Gamma}) \\
=
\IBM\left(\{J_{ij}, b_{k}\}, \mathbf{\Gamma} = -\mathbf{\Gamma}, \mathbf{0} , \mathbf{0}\right)
\end{multline}
\red{We denote, for example $\left\{\frac{\pi}{2\sqrt{2}}\right\}$, to be all parameters of the $n$ single qubit gates set to the same value, $\pi/2\sqrt{2}$. We choose the final gate before the computational basis measurement to be in the form of \eqref{finalmeasurementgate}, rather than the more common Euler decomposition of a single qubit gate decomposition found in the literature\cite{liu_differentiable_2018, du_expressive_2018}. This is chosen to make the classical simulation hardness results more apparent in our proofs. 

To recover $\IQP$ circuits, we simply need to generate the final layer of Hadamard gates (up to a global phase) and do so by setting $U_f$ in \eqref{finalmeasurementgate} as follows:
\begin{multline}
    U_f^{\IQP} \left( \left\{\frac{\mathrm{\pi}}{2\sqrt{2}}\right\}, \mathbf{0}, \left\{\frac{\pi}{2\sqrt{2}}\right\} \right)\\
    = \bigotimes_{k=1}^n \mathrm{e}^{\frac{ \mathrm{i}\mathrm{\pi}}{2\sqrt{2}} \left(X_k + Z_k\right)} = \mathrm{i}H^{\otimes n}    
\end{multline}
To recreate depth $1$ $\QAOA$ circuits, we need to set the Pauli $Z$ and $Y$ parameters, $\mathbf{\Delta}, \mathbf{\Sigma} = \mathbf{0}$, since the final gates should be a product of Pauli-$X$ rotations with parameters, $-\mathbf{\Gamma}$.
}

\subsection*{Training the Ising Born Machine}

Here we introduce the alternative training methods which we use for our purposes, and which would be applicable to any generative model. The training procedure is a hybrid of classical and quantum computation, with the only quantum component being the model itself. The remainder of the computation is classical, bringing our scheme into the realm of what is possible for NISQ devices. The procedure can be seen in \figref{fig:ibmtraining}.

\begin{figure*}%
\centering
            \includegraphics[width=\textwidth, height = 0.4\textwidth]{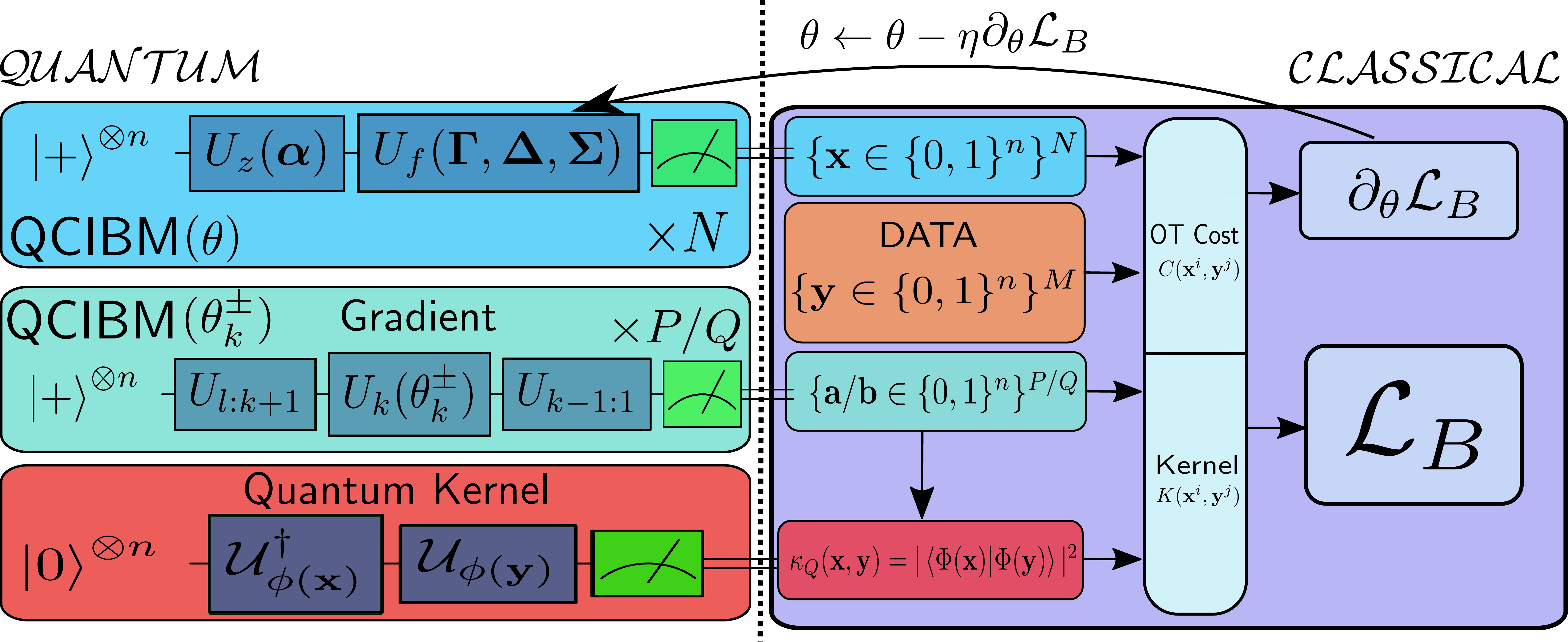}
            \caption{The hybrid training procedures we propose in this work. We have a quantum generator, along with auxiliary circuits used to compute the gradient of the various cost functions with respect to the parameters.
            \red{The training procedure proceeds as follows. First, the $\IBM$ is sampled from $N$ times via measurements. These samples, along with $M$ data samples $\mathbf{y} \sim \pi(\mathbf{y})$, are used to evaluate a cost function, $\mathcal{L}_B$, where $B\in \{\MMD, \SD, \SH\}$ is one of the efficiently computable cost functions. For each updated parameter, $\theta_k$, two parameter shifted circuits are also ran to generate samples, $\mathbf{a}, \mathbf{b} \sim p_{\theta^{\pm}}$ which are used to compute the corresponding gradients, $\partial_{\theta}\mathcal{L}_B$. For all costs functions and gradients, either a kernel (if a quantum kernel is used, the circuit in this figure must be run) is computed for each pair of samples (as is the case for $\MMD$ and $\SD$), or an optimal transport cost function is evaluated (as is the case for $\SH$).
            }}\label{fig:ibmtraining}
\end{figure*}

The optimisation procedures we implement are stochastic gradient descents. Parameters, $\theta_k$, are updated at each epoch of training, $d$, according to the rule $\theta^{d+1}_k \leftarrow \theta_k^{d} - \eta\, \partial_{\theta_k}\mathcal{L}_B$. The parameter $\eta$ is the learning rate, and controls the speed of the descent. The initial proposals to train $\QCBM$s were gradient-free\cite{benedetti_generative_2019, leyton-ortega_robust_2019}, but gradient based methods have also been proposed\cite{liu_differentiable_2018, du_expressive_2018, hamilton_generative_2019}. In this work, we advocate for increasing the classical computational power required in training to achieve better performance, rather than increasing the quantum resources, for example by adding extra ancillae\cite{du_expressive_2018} or adding costly and potentially unstable (quantum) adversaries\cite{zeng_learning_2019, lloyd_quantum_2018, dallaire-demers_quantum_2018}. 

For gradient-based methods, a \textit{cost function} or metric is required, $\mathcal{L}_B\left(p_{\boldsymbol\theta}(\mathbf{x}), \pi(\mathbf{y})\right)$ to compare the Born Machine distribution, $p_{\boldsymbol\theta}(\mathbf{x})$, and the data distribution, $\pi(\mathbf{y})$. Good cost functions will have several characteristics. They should be efficient to compute, measured both by sample and computational complexity. They should also be powerful in the sense that they are sensitive to differences between the two distributions. In this work, we will assess sensitivity by using the total variation ($\TV$) metric as a benchmark:
\begin{equation}
    \TV(p_{\boldsymbol\theta}, \pi) \coloneqq \frac{1}{2}\sum_{\mathbf{x}}|p_{\boldsymbol\theta}(\mathbf{x}) - \pi(\mathbf{x})| \label{total_variation}
\end{equation}
As discussed later, this is a particularly strong metric. The cost functions we use are typically easier to deal with than $\TV$, and we will remark on their relationship to $\TV$. 

One cost function commonly used to train generative models is the Kullback-Leibler ($\KL$) divergence. The $\KL$ divergence is also relatively strong, in the sense that it upper bounds $\TV$ through Pinsker's inequality:
\red{
\begin{equation}
     \TV(p_{\boldsymbol\theta}, \pi) \leq \sqrt{\frac{1}{2} D_{\KL}(p_{\boldsymbol\theta}|| \pi)}\label{eqn:pinskers_inequality}
\end{equation}
where $D_{\KL}(p_{\boldsymbol\theta}|| \pi)$ is the $\KL$ divergence of $\pi$ from $p_{\boldsymbol\theta}$.
}
Unfortunately it is difficult to compute, having a high sample complexity, so neither its gradient, nor the $\KL$ divergence itself, can be evaluated efficiently when training parameterised circuits \cite{liu_differentiable_2018}.

The first efficient gradient method to train Born machines was proposed by ref.\cite{liu_differentiable_2018}. 
There the \textit{maximum mean discrepancy} ($\MMD$) is used to define the cost function. We extend this methodology in two ways. The first is an \red{alteration} to the $\MMD$ itself, and the second is by introducing \red{new} cost functions. From the $\MMD$, the following cost function\cite{borgwardt_integrating_2006, gretton_kernel_2007} can be defined:
\begin{equation}
   \mathcal{L}_{\MMD}  \coloneqq \underset{\substack{\mathbf{x} \sim p_{\boldsymbol\theta}\\ \mathbf{y} \sim p_{\boldsymbol\theta}}}{\mathbb{E}}(\kappa(\mathbf{x},\mathbf{y})) + \underset{\substack{\mathbf{x} \sim \pi \\\mathbf{y} \sim \pi }}{\mathbb{E}}(\kappa(\mathbf{x},\mathbf{y})) -\underset{\substack{\mathbf{x} \sim p_{\boldsymbol\theta}\\ \mathbf{y} \sim \pi}}{2\mathbb{E}}(\kappa(\mathbf{x},\mathbf{y})) \label{mmdexact}
\end{equation}
The $\MMD$ has some very favourable properties; it is a metric on the space of probability distributions, and it is relatively easy to compute \red{(due to low sample complexity)}. The function, $\kappa$ in \eqref{mmdexact} is a \textit{kernel function}, a measure of similarity between points in the sample space $ \mathbf{x}\in\mathcal{X}^n$. A popular choice for this function is the Gaussian mixture kernel\cite{liu_differentiable_2018}:
\begin{equation}
    \kappa_G(\mathbf{x}, \mathbf{y}) \coloneqq  \frac{1}{c}\sum_{i=1}^c\exp\left(-\frac{||\mathbf{x}-\mathbf{y}||^2_2}{2\sigma_i}\right) \label{gaussiankernel}
\end{equation}
The parameters, $\sigma_i$, are \emph{bandwidths} which determine the scale at which the samples are compared, and $||\cdot||_2$ is the $\ell_2$ norm.

Recent works\cite{havlicek_supervised_2019, schuld_supervised_2018} on the near term advantage of using quantum computers in QML have explored \textit{quantum} kernels, which can be evaluated on a quantum computer. To gain such an advantage, these kernels should be difficult to compute on a classical device. In particular, we will adopt the following kernel\cite{havlicek_supervised_2019} in which the samples are encoded in a quantum state, $\ket{\phi(\mathbf{x})}$, via a \textit{feature map}, $\phi:\mathbf{x}\rightarrow \ket{\phi({\mathbf{x}})}$. The kernel is the inner product between vectors:
\begin{equation}
    \kappa_Q(\mathbf{x}, \mathbf{y}) \coloneqq  |\braket{\phi(\mathbf{x})|\phi(\mathbf{y})}|^2 \label{quantumkernel}
\end{equation}
The inner product in \eqref{quantumkernel} is evaluated on a quantum computer, and is conjectured to be hard to compute on a classical one\cite{havlicek_supervised_2019}, given only a classical description of the quantum states.  The state $\ket{\phi(\mathbf{x})}$ is produced by acting an encoding unitary on an initial state, $\ket{\phi(\mathbf{x})} = \mathcal{U}_{\phi(\mathbf{x})}\ket{0}^{\otimes n}$. Explicitly, the kernel is then given by:
\begin{equation}
    \kappa_Q(\mathbf{x}, \mathbf{y}) = 
     |\bra{0^{\otimes n}} \mathcal{U}^{\dagger}_{\phi(\mathbf{x})} \mathcal{U}_{\phi(\mathbf{y})} \ket{0^{\otimes n}}|^2 \label{quantumkernel_expanded} 
\end{equation}
which can be calculated by measuring, in the computational basis, the state which results from running the circuit given by $\mathcal{U}_{\phi(\mathbf{y})}$, followed by that of $\mathcal{U}^{\dagger}_{\phi(\mathbf{x})}$. This is seen in \figref{fig:ibmtraining}. The kernel, \eqref{quantumkernel_expanded} is the observed probability of measuring the all-zero outcome, $0^n$. If this outcome is not observed after polynomially many measurements, the value of the kernel for this particular pair of samples $(\mathbf{x}, \mathbf{y})$ is set to zero. Intuitively, this means the feature map has mapped the original points to points with at most exponentially small overlap in the Hilbert space, and therefore will not contribute to the $\MMD$.

It is also necessary to derive an expression for the gradient of the cost function. For the $\MMD$, the gradient with respect to the  $k^{\text{th}}$ parameter\cite{liu_differentiable_2018}, carried by the $k^{\text{th}}$ unitary gate, $U_k(\theta_k)$, is given by: 
\begin{multline}
    \frac{\partial \mathcal{L}_{\MMD}}{\partial \theta_k} =\underset{\substack{\mathbf{a} \sim p_{\theta_k}^-\\ \mathbf{x} \sim p_{\boldsymbol\theta}}}{2\mathbb{E}}(\kappa(\mathbf{a}, \mathbf{x}))- \underset{\substack{\mathbf{b} \sim p^+_{\theta_k}\\ \mathbf{x} \sim p_{\boldsymbol\theta}}}{2\mathbb{E}}(\kappa(\mathbf{b}, \mathbf{x}))  \\
    - \underset{\substack{\mathbf{a} \sim p^-_{\theta_k}\\ \mathbf{y} \sim \pi}}{2\mathbb{E}}(\kappa(\mathbf{a}, \mathbf{y}))  +\underset{\substack{\mathbf{b} \sim p^+_{\theta_k}\\ \mathbf{y} \sim \pi}}{2\mathbb{E}}(\kappa(\mathbf{b}, \mathbf{y}))\label{mmdgradient}
\end{multline}
where $p^{\pm}_{\theta_k}$ are output distributions generated by running the following auxiliary circuits\cite{mitarai_quantum_2018, schuld_evaluating_2019} \textit{for each} 
unitary gate, $U_k(\theta_k)$:
\begin{equation}
\Qcircuit @C=0.25em @R=1em {
\lstick{\ket{0}^{\otimes n}} & \gate{H^{\otimes n}} & \gate{U_{1:k-1}}&\gate{U_k(\theta_k^{\pm})}&\gate{U_{k+1:l}} & \meter &\cw 
 }
\label{circuit:gradientplusminuscircuit}
\end{equation}
where $\theta_k^{\pm}:=\theta_k \pm\pi/2$ \red{and $U_{l:m}\coloneqq U_lU_{l+1}\dots U_{m-1}U_m$ are the unitary gates in the Born Machine}. \red{This gradient occurs because the form of the unitary gates in our case are exponentiated Pauli operators $U_k(\theta_k) = \exp(i\theta_k \Sigma_k)$, with $\Sigma_k^2= \mathds{I}$. With the unitaries in this form, the gradient of the probabilities outputted from the parameterised state, with respect to a parameter $\theta$, is given by\cite{liu_differentiable_2018, schuld_evaluating_2019}:
\begin{equation}
\frac{\partial p_{\boldsymbol\theta}(\mathbf{z})}{\partial \theta_k} = p_{\theta_k}^{-}(\mathbf{z}) - p_{\theta_k}^{+}(\mathbf{z}) \label{mmdprobabilitygradient}
\end{equation}
There is a slight difference between \eqref{mmdprobabilitygradient} and that of ref\cite{liu_differentiable_2018}, due to a different parameterisation of the unitaries above. 
}

The gradients of the cost functions which we introduce next 
will also \red{require} the parameter-shifted circuits in \eqref{circuit:gradientplusminuscircuit}. For more details on kernel methods and the $\MMD$, see Supplementary Material Section II.

\begin{figure*}
    \centering
    \includegraphics[width=2\columnwidth, height=0.5\columnwidth]
    {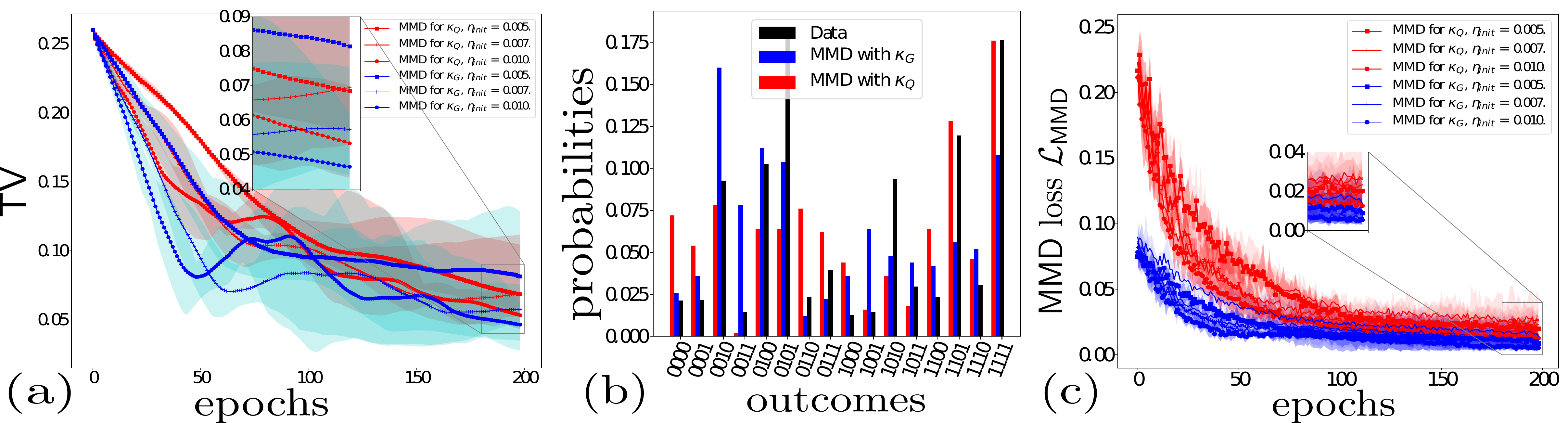}
    \caption{The performance of the quantum kernel $\kappa_Q$ [\crule[red]{0.2cm}{0.2cm}] vs. the Gaussian kernel,  
$\kappa_G$ [\crule[blue]{0.2cm}{0.2cm}] (with $\eta_{\mathsf{init}} = 0.1$) for 4 qubits. During training we sample from the $\IBM$ and the data $500$ times and use a minibatch size of $250$. One epoch is one complete update of all parameters according to gradient descent. Error bars represent maximum, minimum and mean values achieved over 5 independent training runs, with the same initial conditions on the same data samples. (a) $\TV$ difference achieved with both kernel methods during training. No observable or obvious advantage is seen in using the quantum kernel over the Gaussian one, in contrast the Gaussian kernel seems to perform better on average. (b) Final learned probabilities with $\eta_{\mathsf{init}} = 0.01$ using the Adam optimiser. (c) $\MMD$ computed using $400$ samples as training points, $100$ as test points \red{(seen as the thin lines without markers)}, independent of the training data.  }
    \label{fig:QvGkernel4}
\end{figure*}

\subsection*{Stein Discrepancy Training}\label{ssec:steintrainingofibm}
So far, we have only proposed a change of kernel in the $\MMD$ method of training $\IBM$s. We now consider changing the cost function altogether. We endeavour to find costs which are efficient to compute for quantum models, yet stronger than $\MMD$. 

The first cost we propose is called the \textit{Stein discrepancy} ($\SD$). $\SD$ has become popular for goodness-of-fit tests\cite{liu_kernelized_2016}, i.e.\@ testing whether samples come from a particular distribution or not, as opposed to the $\MMD$, which is typically used for kernel two-sample tests\cite{gretton_kernel_2007}. \red{This discrepancy is based on Stein's method\cite{stein_bound_1972}, which is a way to bound distance metrics between probabilities including, for example, the other integral probability metrics (IPM) we utilise in this work. For details on IPMs, see Supplementary Material Section I.} 

We use the discrete version of the Stein discrepancy\cite{yang_goodness--fit_2018} since, in its original form\cite{liu_kernelized_2016}, it only caters for the case where the distributions are supported over a \textit{continuous} space. The discretisation is necessary since the $\IBM$ outputs binary strings and so the standard gradient w.r.t.\@ a sample, $\mathbf{x}$, $\nabla_{\mathbf{x}}$, is undefined. As such, we need to use a discrete `shift' operator, $\Delta_{\mathbf{x}}$, instead, which is an operator defined by $[\Delta_\mathbf{x}f(\mathbf{x})]_i \coloneqq f(\mathbf{x}) - f(\neg_i\mathbf{x})$ for a function $f$,
where $\neg_i$ flips the $i^{th}$ element of the binary vector $\mathbf{x}$. 

Fortunately, the discretisation procedure is relatively straightforward (the necessary definitions and proofs can be found in Supplementary Material Section III).
The discrepancy is derived\cite{liu_kernelized_2016, gorham_measuring_2015} from the (discrete) Stein Identity\cite{yang_goodness--fit_2018}, given by:

\begin{align}
    \underset{\mathbf{x}\sim \pi}{\mathbb{E}}[\mathcal{A}_\pi \phi(\mathbf{x})]&= \underset{\mathbf{x} \sim \pi}{\mathbb{E}}\left[\mathbf{s}_\pi(\mathbf{x})\phi(\mathbf{x}) - \Delta_{\mathbf{x}} \phi(\mathbf{x})\right] = 0 
    \label{complexdiscretesteinidentity}\\
     \mathcal{A}_\pi \phi(\mathbf{x}) &:= \mathbf{s}_\pi(\mathbf{x})\phi(\mathbf{x}) - \Delta_{\mathbf{x}} \phi(\mathbf{x})
\end{align}
where $\underset{\mathbf{x}\sim \pi}{\mathbb{E}}$ denotes the expectation value over the distribution, $\pi$. This holds for any function $\phi: \mathcal{X}^n\rightarrow \mathbb{C}$, and probability mass function $\pi$ on $\mathcal{X}^n$. The function, $s_\pi(\mathbf{x}) = \Delta_\mathbf{x}\log(\pi(\mathbf{x}))$ is the \textit{Stein score} function of the distribution $\pi$, and $\mathcal{A}_\pi$ is a so-called \textit{Stein operator} of $\pi$. Now, the $\SD$ cost function can be written in a kernelised form\cite{liu_kernelized_2016, yang_goodness--fit_2018}, similarly to the $\MMD$:
\begin{equation}
    \mathcal{L}_{\SD}(p_{\boldsymbol\theta}, \pi)\coloneqq \mathbb{E}_{\mathbf{x}, \mathbf{y}\sim p_{\boldsymbol\theta}}\left[\kappa_\pi(\mathbf{x}, \mathbf{y})\right]\label{steindiscrepancybornmachine}
\end{equation}
\begin{multline}
    \kappa_\pi(\mathbf{x}, \mathbf{y}) \coloneqq s_\pi(\mathbf{x})^T\kappa(\mathbf{x}, \mathbf{y})s_\pi(\mathbf{y}) -s_\pi(\mathbf{x})^T\Delta_{\mathbf{y}}^*\kappa(\mathbf{x}, \mathbf{y}) \\
    - \Delta_{\mathbf{x}}^*\kappa(\mathbf{x}, \mathbf{y})^Ts_\pi(\mathbf{y}) + \tr(\Delta_{\mathbf{x}, \mathbf{y}}^*\kappa(\mathbf{x}, \mathbf{y})) \label{weightedkernelbornmachine}
\end{multline}
where $\kappa_\pi$ is the \textit{Stein kernel}, and $\kappa$ is a usual positive semi-definite kernel. $\Delta_{\mathbf{x}}^*$ is a conjugate version of the operator $\Delta_{\mathbf{x}}$, but for our purposes, the behaviour of both $\Delta_{\mathbf{x}}^*$ and $\Delta_{\mathbf{x}}$ are identical. For completeness, we define it in generality in Supplementary Material Section III.

Just as above, the gradient (derived \red{in an identical fashion to the $\MMD$ gradient \eqref{mmdgradient}} as is detailed in Supplementary Material Section III) of $\mathcal{L}_{\SD}$ with respect to the parameter, $\theta_k$, is given by:
\begin{multline}
    \frac{\partial \mathcal{L}_{\SD}}{\partial \theta_k} = \underset{\substack{\mathbf{x} \sim p^-_{\boldsymbol\theta} \\ \mathbf{y}\sim p_{\boldsymbol\theta}}}{\mathbb{E}}[\kappa_\pi(\mathbf{x}, \mathbf{y})] - \underset{\substack{\mathbf{x} \sim p^+_{\boldsymbol\theta} \\ \mathbf{y}\sim p_{\boldsymbol\theta}}}{\mathbb{E}}[\kappa_\pi(\mathbf{x}, \mathbf{y})] \\
    +\underset{\substack{\mathbf{x} \sim p_{\boldsymbol\theta}\\ \mathbf{y}\sim p^-_{\boldsymbol\theta}}}{\mathbb{E}}[\kappa_\pi(\mathbf{x}, \mathbf{y})] - \underset{\substack{\mathbf{x} \sim p_{\boldsymbol\theta} \\ \mathbf{y}\sim p^+_{\boldsymbol\theta}}}{\mathbb{E}}[\kappa_\pi(\mathbf{x}, \mathbf{y})] \label{steingradient}
\end{multline}
We show that almost every term in \eqref{steindiscrepancybornmachine} and \eqref{steingradient} can be computed efficiently, even when the quantum kernel, $\kappa_Q$ from \eqref{quantumkernel}, is used in \eqref{weightedkernelbornmachine}. That is, with the exception of the score function $s_\pi$ with respect to the data distribution. The score contains an explicit dependence on the data distribution, $\pi$. If we are given oracle access to the probabilities, $\pi(\mathbf{y})$, then there is no issue and $\SD$ will be computable. Unfortunately, in any practical application this will not be the case. 

To deal with such a scenario, we give two approaches to approximate the score via samples from $\pi$. The first of these we call the `Identity' method since it inverts Stein's identity\cite{li_gradient_2018} \red{from \eqref{complexdiscretesteinidentity}}. We refer to the second as the `Spectral' method since it uses a spectral decomposition\cite{shi_spectral_2018} of a kernel to approximate the score. \red{The latter approach uses the Nystr\"{o}m method\cite{nystrom_uber_1930}, which is a technique used to approximately solve integral equations}. We will only use the Spectral method in training the $\IBM$ in the numerical results in \figref{fig:MMDvSinkvStein3}, since the Identity method does not give an immediate out-of-sample method to compute the score. Details of these methods can be found in Supplementary Material Section III. 

Notice that even with the difficulty in computing the score, the $\SD$ is still more suitable for training these models than the $\KL$ divergence as the latter requires computing the \textit{circuit} probabilities, $p_{\boldsymbol\theta}(\mathbf{x})$, which is in general intractable, and so could not be computed for \textit{any} dataset. 

\subsection*{Sinkhorn Divergence Training \label{ssec:sinkhorntrainingofibm}}

The second cost function we consider is the so-called \textit{Sinkhorn divergence} ($\SH$). This is a relatively new method to compare probability distributions \cite{ramdas_wasserstein_2015, genevay_learning_2018, feydy_interpolating_2019} , defined by the following:
\begin{multline}
    \mathcal{L}_{\SH}^\epsilon(p_{\boldsymbol\theta}, \pi) \coloneqq
    \OT^c_\epsilon(p_{\boldsymbol\theta}, \pi) \\
    - \frac{1}{2} \OT^c_\epsilon(p_{\boldsymbol\theta}, p_{\boldsymbol\theta}) -\frac{1}{2}\OT^c_\epsilon(\pi, \pi) \label{sinkhorndivergence}
\end{multline}
\begin{multline}
\small
\OT^c_\epsilon(p_{\boldsymbol\theta}, \pi) \coloneqq  \\\small
      \min\limits_{U \in \mathcal{U}(p_{\boldsymbol\theta}, \pi)}\left(\sum\limits_{\substack{(\mathbf{x}, \mathbf{y})\\ \in \mathcal{X}^d\times\mathcal{Y}^d}} c(\mathbf{x}, \mathbf{y})U(\mathbf{x}, \mathbf{y})
    + \epsilon \KL(U|p_{\boldsymbol\theta}\otimes \pi)\right) 
    \label{wassersteinregularised}
\end{multline}
\noindent where $\epsilon \geq 0 $ is a regularisation parameter, $c(x, y)$ is a Lipschitz `cost' function, and $\mathcal{U}(p_{\boldsymbol\theta}, \pi)$ is the set of all \textit{couplings} between $p_{\boldsymbol\theta}$ and $\pi$, i.e.\@ the set of all joint distributions, whose marginals with respect to $\mathbf{x}, \mathbf{y}$ are $p_{\boldsymbol\theta}(\mathbf{x}), \pi(\mathbf{y})$ respectively. 
The above cost function, $\mathcal{L}_{\SH}^\epsilon$, is particularly favourable as a candidate because of its relationship to the theory of \textit{optimal transport}\cite{villani_optimal_2009} ($\OT$),  a method to compare probability distributions. It has become a major tool used to train models in the classical domain, for example with GANs\cite{arjovsky_wasserstein_2017} through a restriction of optimal transport called the Wasserstein metric, which is derived from
$\OT$, when the cost ($c(\mathbf{x}, \mathbf{y})$), is chosen to be a metric on the space of $\mathcal{X}^n$.

We would like to use $\OT$ \textit{itself} to train generative models, due to its metric properties. Unfortunately, $\OT$ has high computational cost and exponential sample complexity\cite{dudley_speed_1969}. For this reason, the Sinkhorn divergence was proposed in refs. \cite{ramdas_wasserstein_2015, genevay_learning_2018, feydy_interpolating_2019} to \textit{interpolate} between $\OT$ and the $\MMD$ as a function of the regularisation parameter, $\epsilon$ in \eqref{wassersteinregularised}. In particular, for the two extreme values of $\epsilon$, we recover\cite{ramdas_wasserstein_2015} both unregularised $\OT$, and the $\MMD$:
\begin{align}
\underline{\epsilon \rightarrow 0}:& \nonumber\\
&\mathcal{L}_{\SH}^0(p_{\boldsymbol\theta}, \pi)\rightarrow \OT_0^c(p_{\boldsymbol\theta}, \pi)\\
\underline{\epsilon \rightarrow \infty}:&\nonumber\\
&\mathcal{L}_{\SH}^\epsilon(p_{\boldsymbol\theta}, \pi) \rightarrow  \MMD(p_{\boldsymbol\theta}, \pi)\\
& \qquad \kappa(\mathbf{x}, \mathbf{y}) = -c(\mathbf{x}, \mathbf{y})\nonumber
\end{align}

As before, we need a gradient of the $\mathcal{L}_{\SH}$ with respect to the parameters, which is given by:
\begin{equation}
    \frac{\partial \mathcal{L}_{\SH}^\epsilon(p_{\boldsymbol\theta}, \pi)}{\partial \theta_k} 
     = \underset{\substack{\mathbf{x} \sim p_{\theta_k^-} }}{\mathbb{E}}[\varphi(\mathbf{x})] 
    -\underset{\substack{\mathbf{x} \sim p_{\theta_k^+}  }}{\mathbb{E}}[\varphi(\mathbf{x})] \label{sinkhorngradient}
\end{equation}

where $\varphi(\mathbf{x})$ is a function which depends on the optimal solutions found to the regularised $\OT$ problem in \eqref{wassersteinregularised}. See Supplementary Material Section IV for more details on the Sinkhorn divergence, and its gradient.

\subsection*{Sinkhorn Complexity}

The sample complexity of the Sinkhorn divergence is of great interest to us as we claim that the $\TV$ and the $\KL$ are not suitable to be directly used as cost functions. This is due to the difficulty of computing the outcome probabilities of quantum circuits efficiently. We now motivate why the $\MMD$ is a weak cost function, and why the Sinkhorn divergence should be used as an alternative. This will depend critically on the regularisation parameter $\epsilon$, which allows a smooth interpolation between the $\OT$ metric and the $\MMD$. 

Firstly, we address the computability of $\mathcal{L}_{\SH}$ and we find, due to the results of ref. \cite{genevay_sample_2018}, a somewhat `optimal' value for $\epsilon$, for which the sample complexity of $\mathcal{L}_{\SH}$ becomes efficient. Specifically, the mean error between $\mathcal{L}_{\SH}$ and its approximation $\hat{\mathcal{L}}_{\SH}$ for $n$ qubits, computed using $M$ samples, scales as:
\begin{multline}
    \mathbb{E}|\mathcal{L}_{\SH}^\epsilon - \hat{\mathcal{L}}_{\SH}^\epsilon| = \mathcal{O}\left[\frac{1}{\sqrt{M}}\left(1+\mathrm{e}^{\left(2\frac{n^2+n}{\epsilon}\right)}\right)\right.\\
    \left.\times\left(1+\frac{1}{\epsilon^{\lfloor n/2\rfloor}}\right)\right]
\end{multline}

We show in Supplementary Material Section IV.1 that by choosing $\epsilon = \mathcal{O}(n^{2})$, we get:
\begin{equation}
     \mathbb{E}|\mathcal{L}_{\SH}^{\mathcal{O}(n^{2})} - \hat{\mathcal{L}}_{\SH}^{ \mathcal{O}(n^{2})}| =~ \mathcal{O}\left(\frac{1}{\sqrt{M}}\right) \label{sinkhorn_expectation_sample_chosen_MAIN}
\end{equation}
\noindent which is the same sample complexity as the $\MMD$\cite{sriperumbudur_integral_2009}, but exponentially better than that of unregularised optimal transport, which scales as $\mathcal{O}\left(1/{M}^{1/n}\right)$\cite{dudley_speed_1969}. 

A similar result can be derived using a concentration bound\cite{genevay_sample_2018}, such that with probability $1-\delta$, 
\begin{equation}
    |\mathcal{L}_{\SH}^{\mathcal{O}(n^{2})}  - \hat{\mathcal{L}}_{\SH}^{\mathcal{O}(n^{2})} |
    = \mathcal{O}\left(\frac{n}{\sqrt{M}}\log(1/\delta)^{1/2}\right) \label{sinkhornborn_samplecomplexity_choosed_MAIN}
\end{equation}

where we have chosen the same scaling for $\epsilon$ as in \eqref{sinkhorn_expectation_sample_chosen_MAIN}. Therefore, we can choose an optimal theoretical value for the regularisation, such that $\mathcal{L}_{\SH}$ is sufficiently far from $\OT$ to be efficiently computable, but perhaps still retains some of its favourable properties. It is likely in practice however, that a much lower value of $\epsilon$ could be chosen without a blow up in sample complexity\cite{genevay_learning_2018,  genevay_sample_2018}. See Supplementary Material Section IV for derivations of the above results.

Secondly, we can relate the $\mathcal{L}_{\SH}$ to unregularised $\OT$ and $\TV$ via a sequence of inequalities. We have mentioned that the $\MMD$ is weak, meaning it provides a \textit{lower} bound on $\TV$ in the following way\cite{sriperumbudur_integral_2009}:
\begin{equation}
    \TV(p_{\boldsymbol\theta}, \pi) \geq \frac{\sqrt{\MMD(p_{\boldsymbol\theta}, \pi)}}{\sqrt{C}}
\end{equation}

\noindent if $C := \sup_{\mathbf{x} \in \mathcal{X}^n} \kappa(\mathbf{x}, \mathbf{x}) < \infty$. 

Note that for the two kernels introduced earlier:

\begin{equation}
\kappa_G(\mathbf{x}, \mathbf{x}) = \frac{1}{c}\sum\limits_{i=1}^c e^{-\frac{1}{2\sigma_i}|\mathbf{x} - \mathbf{x}|^2} =\frac{1}{c}(c) = 1
\end{equation}
\begin{equation}
\kappa_Q(\mathbf{x}, \mathbf{x}) = |\braket{\phi(\mathbf{x})|\phi(\mathbf{x})}|^2 = |\braket{0|0}^{\otimes n}|^2 = 1 
\end{equation}
hence $C = 1$ and the lower bound is immediate. 

In contrast, as is seen from the inequality on a discrete sample space in \eqref{tv_wasserstein_inequality}\cite{gibbs_choosing_2002}, the Wasserstein metric (unregularised $\OT$) provides an \textit{upper} bound on $\TV$, and hence we would expect it to be stronger then the $\MMD$.
\begin{equation}
    d_{min} \TV(p_{\boldsymbol\theta}, \pi) \leq \OT^d_0(p_{\boldsymbol\theta}, \pi) \leq \text{diam}(\mathcal{X})\TV(p_{\boldsymbol\theta}, \pi) \label{tv_wasserstein_inequality}
\end{equation}
where $\text{diam}(\mathcal{X}^n) = \max\{d(\mathbf{x}, \mathbf{y}), \mathbf{x}, \mathbf{y} \in \mathcal{X}^n\}$, $ d_{min} = \min_{x\neq y}d(\mathbf{x}, \mathbf{y})$, and $d(\mathbf{x}, \mathbf{y})$ is the metric on the space, $\mathcal{X}^n$. This arises by choosing $c = d$ and $\epsilon = 0$ in \eqref{wassersteinregularised}. If, for instance, we were to choose $d(\mathbf{x}, \mathbf{y})$ to be the $\ell_1$ metric between the binary vectors of length $n$ (a.k.a.\@ the  Hamming distance), then we get that $ d_{min} = 1, \text{diam}(\mathcal{X}) = n$, and so:
\begin{equation}
     \TV(p_{\boldsymbol\theta}, \pi) \leq \OT^{\ell_1}_0(p_{\boldsymbol\theta}, \pi) \leq n \TV(p_{\boldsymbol\theta}, \pi)
\end{equation}
Finally, we can examine the relationship induced by the regularisation parameter through the following inequality; Theorem 1 in ref.\cite{genevay_sample_2018}:
\begin{multline}
    0 \leq \OT^c_\epsilon(p_{\boldsymbol\theta}, \pi ) -\OT^c_0(p_{\boldsymbol\theta}, \pi ) \leq 2\epsilon\log\left(\frac{\mathrm{e}^2LD}{n\epsilon}\right)\\
    \sim_{\epsilon \rightarrow 0} 2\epsilon\log\left(1/\epsilon\right)
\end{multline}
where the size of the sample space is bounded by $D$, as measured by the metric, and $L$ is the Lipschitz constant of the cost $c$. As detailed in Supplementary Material Section IV.1 we can choose $D = n$, $L = n$:
\begin{equation}
    0 \leq \OT^{\ell_1}_\epsilon(p_{\boldsymbol\theta}, \pi) - \OT^{\ell_1}_0(p_{\boldsymbol\theta}, \pi) \leq 2\epsilon\log\left(\frac{\mathrm{e}^2n}{\epsilon}\right)
\end{equation}
The $\log$ term will be positive as long as $\epsilon \leq n\mathrm{e}^2$, in which case regularised $\OT$ will give an upper bound for the Wasserstein metric, and hence the $\TV$ through \eqref{tv_wasserstein_inequality} so we arrive at:
\begin{equation}
     \TV(p_{\boldsymbol\theta}, \pi) \leq  \OT^{\ell_1}_0(p_{\boldsymbol\theta},\pi) \leq  \OT^{\ell_1}_{\epsilon \leq n \mathrm{e}^2}\label{tv_wasserstein_sinkhorn_inequality}
\end{equation}
Unfortunately, comparing this with \eqref{sinkhorn_expectation_sample_chosen_MAIN} and \eqref{sinkhornborn_samplecomplexity_choosed_MAIN}, we can see that with this scaling of $\epsilon$, the sample complexity would pick up an exponential dependence on the dimension, $n$, so it would not be efficiently computable. \red{We comment further on this point later.}

\subsection*{Numerical Performance}

In Figs. \ref{fig:QvGkernel4}, \ref{fig:MMDvSinkvStein3}, \ref{fig:MMDvSink4_real}, we illustrate the superior performance of our alternative training methods, as measured by the total variation distance. A lower $\TV$ indicates the model is able to learn parameters which fit the true data more closely. $\TV$ was chosen as an objective benchmark for several reasons. Firstly, it is typically the notion of distance which is required by quantum supremacy experiments where one wants to prove hardness of classical simulation. Secondly, we use it in the definitions of quantum learning supremacy. Finally, it is one of the strongest notions of convergence in probability one can ask for, so it follows that a training procedure which can more effectively minimise $\TV$, in an efficient way, should be better for generative modelling.

We train the model on Rigetti's Forest platform\cite{smith_practical_2016} using both a simulator, and real quantum hardware, the {\fontfamily{cmtt}\selectfont Aspen} QPU. \red{\figref{fig:QvGkernel4} illustrates the training of the model using the Gaussian (\eqref{gaussiankernel}) versus the quantum kernel (\eqref{quantumkernel}) for 4 qubits, and we see that the quantum kernel offers no significant advantage versus training with a purely classical one. \figref{fig:QvGkernel4}(a) shows the $\TV$ as trained for $200$ epochs, using both the classical and quantum kernels with various learning rates. \figref{fig:QvGkernel4}(b) shows the learned probabilities outputted after training with each kernel, and \figref{fig:QvGkernel4}(c) shows the difference in the actual $\mathcal{L}_{\MMD}$ itself while training with both methods. Interestingly, the latter behaviour is quite different for both kernels, with the quantum kernel initialising with much higher values of $\mathcal{L}_{\MMD}$, whereas they both minimise $\TV$ in qualitatively the same way. This indicates that hardness of classical simulation (of computing the kernel), does not imply an advantage in learning. 

On the other hand, a noticeable out-performance is observed for the Sinkhorn divergence and the Stein discrepancy relative to training with the $\MMD$ (using a Gaussian kernel), as measured by $\TV$ in \figref{fig:MMDvSinkvStein3}. Furthermore we observed that the gap (highlighted in the inset in \figref{fig:MMDvSinkvStein3}(a)) which separates the Sinkhorn divergence and Stein discrepancy (red and blue lines) from the $\MMD$ (green, yellow and cyan lines) grows as the number of qubits grows. Unfortunately, the Spectral method to approximate the Stein score does not outperform the $\MMD$, despite training successfully. The discrepancy between the true and approximate versions of the Stein score is likely due to the low number of samples used to approximate the score, with the number of samples limited by the computational inefficiency. We leave tuning the hyperparameters of the model in order to get better performance to future work. 

This behaviour is shown to persist on the QPU, \figref{fig:MMDvSink4_real}, where we show training of the model with both the $\MMD$ and $\SH$ relative to $\TV$, (\figref{fig:MMDvSink4_real}(a)), the learned probabilities of both methods on, and off, the QPU (\figref{fig:MMDvSink4_real}(b)) and the behaviour of the cost functions associated to both methods (Figs. \ref{fig:MMDvSink4_real}(c), \ref{fig:MMDvSink4_real}(d)). This reinforces our theoretical argument that the Sinkhorn divergence is able to better minimise $\TV$ to achieve superior results.
}

\red{Given the performance noted above, we would recommend the Sinkhorn divergence as the primary candidate for future training of these models, due to its simplicity and competitive performance. One should also note that we do not attempt to learn these data distributions exactly since we use a shallow fixed circuit structure for training (i.e.\@ a $\QAOA$ circuit) which we do not alter. Better fits to the data could likely be achieved with deeper circuits with more parameters.}

For extra numerical result demonstrating the performance of the learning algorithms, see Supplementary Material Section V, \red{including: a comparison between the quantum and Gaussian kernels for two qubits, similar to \figref{fig:QvGkernel4}, the behaviour of the corresponding cost functions themselves associated to \figref{fig:MMDvSinkvStein3}, the performance of the model for $4$ qubits, similar to \figref{fig:MMDvSinkvStein3},  
and the results using a 3 qubit device, the {\fontfamily{cmtt}\selectfont Aspen-4-3Q-A}. In all cases, the performance was qualitatively similar to that reported in the main text.}

\begin{figure*}
    \centering
    \includegraphics[width=2\columnwidth, height=0.8\columnwidth]{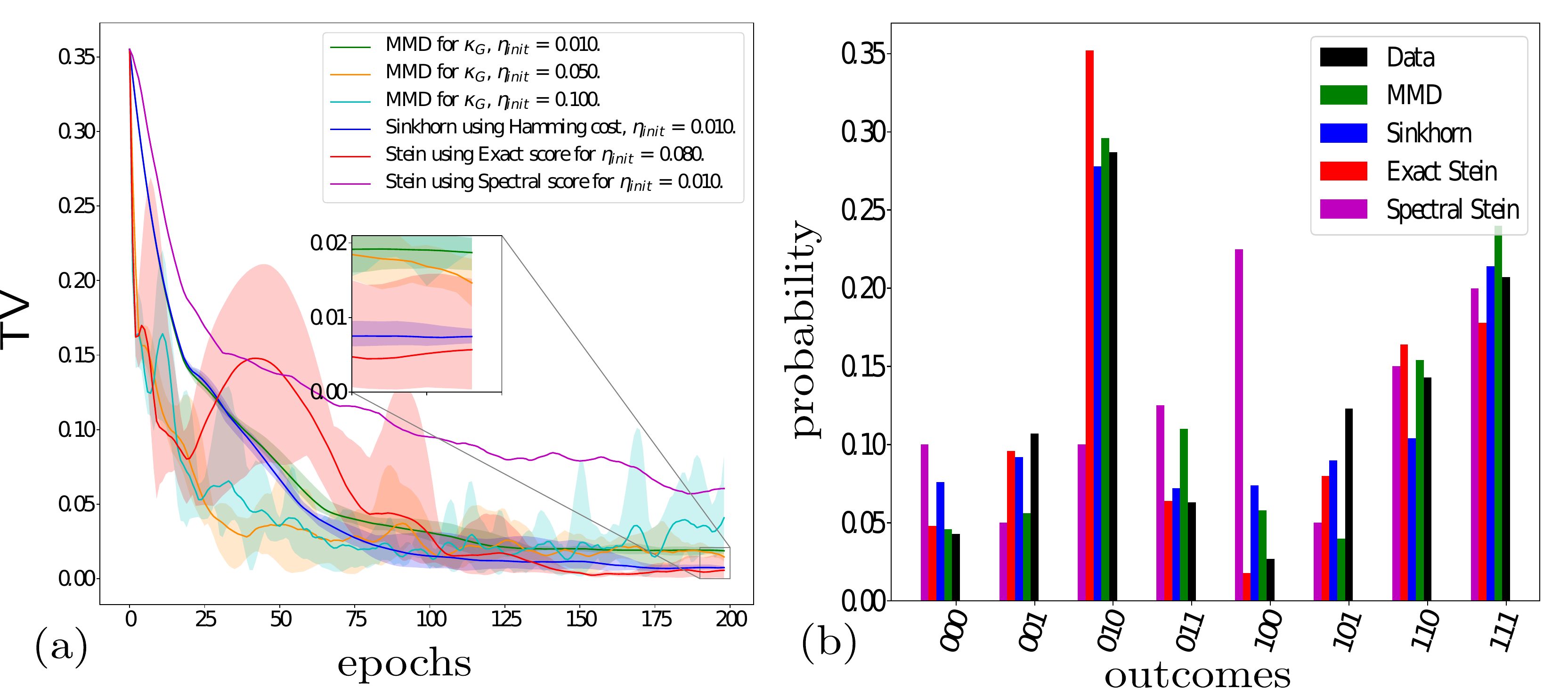}
\caption{$\MMD$ [\crule[cyan]{0.2cm}{0.2cm}, \crule[yellow]{0.2cm}{0.2cm}, \crule[ForestGreen]{0.2cm}{0.2cm}] vs.\@ Sinkhorn [\crule[blue]{0.2cm}{0.2cm}] and Stein training with Exact Score function [\crule[red]{0.2cm}{0.2cm}] and Spectral Score method [\crule[magenta]{0.2cm}{0.2cm}] for 3 qubits with fully connected topology, Rigetti {\fontfamily{cmtt}\selectfont 3q-qvm},
\protect\threeqqvm \ \ \ , trained on the data, \eqref{toydatadistribution}. 500 data points are used for training, with 400 used as a training set, and 100 used as a test set. Plots show mean, maximum and minimum values achieved over 5 independent training runs on the same dataset. (a) $\TV$ difference between training methods, with regularisation parameter $\epsilon = 0.1$ for $\SH$, and 3 eigenvectors for Spectral Stein method. Both Sinkhorn divergence and Stein discrepancy are able to achieve a lower $\TV$ than the $\MMD$. Inset shows region of outperformance on the order of $\sim 0.01$ in $\TV$. We observe that the Spectral score method was not able to minimise $\TV$ as well as the exact Stein discrepancy, potentially indicating the need for better approximation methods. (b) Final learned probabilities of each training method.  See Supplementary Material Section V for behaviour of corresponding cost functions. }
    \label{fig:MMDvSinkvStein3}
\end{figure*}

\begin{figure*}
    \centering
    \includegraphics[width=2\columnwidth, height=0.45\columnwidth]{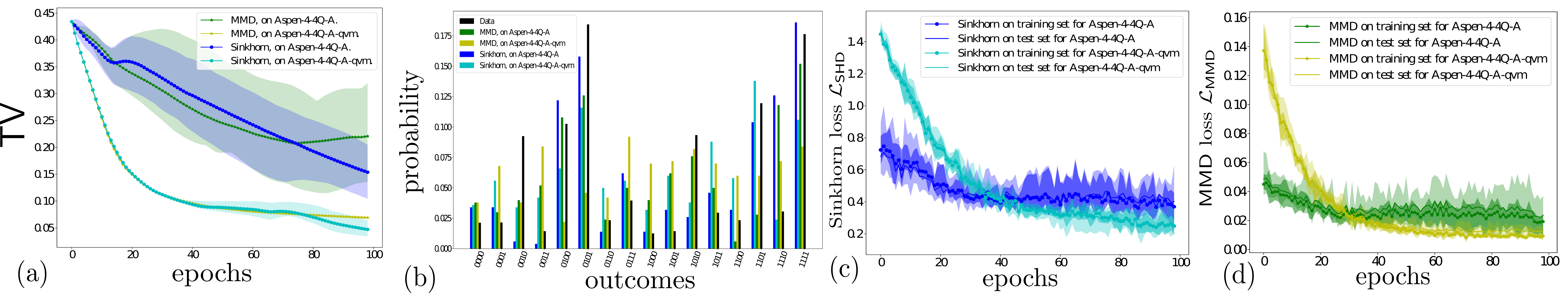}
\caption{$\MMD$ [\crule[ForestGreen]{0.2cm}{0.2cm}, \crule[yellow]{0.2cm}{0.2cm}] vs.\@ Sinkhorn [\crule[blue]{0.2cm}{0.2cm}, \crule[cyan]{0.2cm}{0.2cm}] for 4 qubits comparing performance on the real QPU ({\fontfamily{cmtt}\selectfont Aspen-4-4Q-A}) vs.\@ simulated behaviour on QVM ({\fontfamily{cmtt}\selectfont Aspen-4-4Q-A-qvm}) using 500 samples and a batch size of 250, learning target data [\crule[black]{0.2cm}{0.2cm}] and a initial learning rate for Adam as $\eta_{init} = 0.01$. (a)$\TV$ Difference between training methods with regularisation parameter $\epsilon = 0.08$, (b) Final learned probabilities, \red{[\crule[black]{0.2cm}{0.2cm}] indicates the probabilities of a random instance of the data distribution (see Methods) chosen. The probabilities given by the other bars are those achieved after training the model with either the $\MMD$ or $\SH$ on the simulator or the physical Rigetti chip, on an average run. The probabilities of the model are generated by simulating the entire wavefunction after training.}, (c) $\mathcal{L}^{0.08}_{\SH}$ on QVM  [\crule[cyan]{0.2cm}{0.2cm}] vs.\@ QPU  [\crule[blue]{0.2cm}{0.2cm}]. (d) $\mathcal{L}_{\MMD}$ on QVM  [\crule[yellow]{0.2cm}{0.2cm}] vs.\@ QPU  [\crule[ForestGreen]{0.2cm}{0.2cm}]. \red{In both latter cases, trained model performance on 100 test samples is seen as the thin lines without markers. Again it can be seen that the Sinkhorn divergence outperforms the $\MMD$ both simulated and on chip, with the deviation apparent towards the end of training. Similar behaviour observed after 100 epochs, but not shown due to limited QPU time.}  }
    \label{fig:MMDvSink4_real}
\end{figure*}

\subsection*{Hardness and Quantum Advantage}
It is crucially important, not just for our purposes, but for the design of quantum machine learning algorithms in general, that the algorithm itself is providing some advantage over \textit{any} classical one, for the same task. This is the case for so-called \textit{coherent} algorithms, like the HHL linear equation solver\cite{harrow_quantum_2009}, which is $\BQP$-complete, and therefore unlikely to be fully dequantised, 
However, such a proven advantage for \textit{near term} QML algorithms is yet out of reach. We attempt to address such a question in two steps. 
\begin{enumerate}
    
    \item We show that for a large number of parameter values, $\boldsymbol\theta$, our $\IBM$ circuits are `hard'. That is to say cannot be efficiently simulated classically up to a multiplicative error, in the worst case. We also show that this holds for the auxiliary quantum circuits used for the gradient estimation, and hence the model may \textit{remain} hard during training (although we do not know for sure).
    
    \item We provide formal definitions for quantum learning supremacy, the ability of a quantum model to provably outperform all classical models in a certain task, and a potential pathway to prove such a thing.

\end{enumerate}

The \emph{intuition} behind point 2 is the following. If our $\IBM$ model could learn a target distribution $\pi$, which demonstrates quantum supremacy, by providing a quantum circuit $C$ close enough to $\pi$ (i.e.\@ below a threshold error in total variation), then the model would have demonstrated something which is classically infeasible. Else there would exist an efficient classical algorithm which can get close to $\pi$ which contradicts hardness.

Point 1 does not completely fit that intuition. For one thing, hardness is not known to hold for the required notion of additive error (i.e.\@ total variation distance) but only for multiplicative error. Also, even though the model is more \textit{expressive} than any classical model\cite{du_expressive_2018}, this does not imply that it could actually \textit{learn} a hard distribution. On the other hand, it is easy to see why the converse would be true, if the $\IBM$ could learn a distribution which is hard to sample from classically, the underlying circuit must have, at some point, reached a circuit configuration for which the output distribution is hard to classically sample.

We can address point 1 informally (see Supplementary Material Section VI for the formal statements and proof) in three steps:
\begin{itemize}
    
    \item If the parameters of the model are initialised randomly in $\{\boldsymbol \alpha\} = \{J_{ij}, b_k\}\in\{0,\frac{\pi}{8}, \dots, \frac{7\pi}{8}\}$ and final measurement angles are chosen such that $U_f \left( \mathbf{\Gamma}, \mathbf{\Delta}, \mathbf{\Sigma} \right) = H^{\otimes n}$, then the resulting $\IBM$ circuit class will be hard to simulate up to an additive error of $1/384$ in total variation distance, subject to a conjecture relating to the hardness of computing the Ising partition function \cite{bremner_average-case_2016}.
    
    \item If certain configurations of the parameters are chosen to be either of the form, $(2l+1)\pi/kd$, where $l, d$ are integers, and $k$ is a number which depends on the circuit family, or in the form $2\pi\nu$, where $\nu$ is irrational, then the resulting class of circuits will be hard to sample from classically, up to a multiplicative error, in the worst case.
    
    \item The circuits produced at each epoch as a result of the gradient updates will each result in a hard circuit class as long as the gradient updates are not chosen carelessly. In each epoch, if the update step is constrained in a way that the new value of the parameter $\theta^{d+1}_k= \theta_k^{d} - \eta\partial_{\theta_k}\mathcal{L}_B$ does not become rational, then the updated circuits will also belong to a class which is hard to simulate (a similar result can be shown for the case where the parameters are updated to keep within the form of $(2l+1)\pi/kd$).  This is because the updates can simply be absorbed into the original gates, to give a circuit which has the same form. This holds also for the gradient shifted circuits in \eqref{circuit:gradientplusminuscircuit} since these correspond to circuits whose parameters are updated as follows: $\theta^{d, \pm}_k \leftarrow \theta_k^{d} \pm \pi/2$.
    
\end{itemize}

We now provide definitions to meet the requirements of point 2, adapting definitions from distribution learning theory\cite{kearns_learnability_1994} for this purpose. Specifically, we say that a generative QML algorithm, $\mathcal{A}\in \BQP$ (with a small abuse of notation) has demonstrated \textit{quantum learning supremacy} (QLS) if there exists a class of probability distributions $\mathcal{D}_n$ over $\mathcal X^n$ (bit vectors of length $n$), for which there exists a metric $d$, and a fixed $\epsilon$ such that $\mathcal{D}_n$ is $(d ,\epsilon, \BQP)$-learnable via $\mathcal{A}$, but not $\left( d ,\epsilon, \BPP \right)$-learnable (i.e.\@ learnable by a purely classical algorithm). The task of the learning algorithm $\mathcal{A}$ is, given a target distribution $D\in \mathcal D_n$, to output, with high probability, a \textit{Generator}, $\GEN_{D'}$, for a distribution $D'$, such that $D'$ is close to $D \in \mathcal{D}_n$ with respect to the metric $d$. For the precise definitions of learnability we employ see Supplementary Material Section VII. 

This framework is very similar to that of, and inspired by, probably approximately correct (PAC) learning, which has been well studied in the quantum case\cite{arunachalam_survey_2017}, but it applies more closely to the task of generative modelling. It is known that in certain cases, the use of quantum computers can be beneficial to PAC learning, but not generically\cite{arunachalam_quantum_2019}. Based on this, it is possible that there exist some classes of distributions which cannot be efficiently learned by classical computers ($\BPP$ algorithms), but which could be learned by quantum devices ($\BQP$ algorithms). The motivation for this is exactly rooted in the question of quantum supremacy, and illustrated crudely in \figref{fig:learning_supremacy}(b). 

An initial attempt at QLS is as follows. As mentioned above, if random $\IQP$ circuits could be classically simulated to within a $\TV$ error of $\epsilon = 1/384$\cite{bremner_average-case_2016} in the worst case (with high probability over the choice of circuit), this would imply unlikely consequences for complexity theory. Now, if a generative quantum model was able to achieve a closeness in $\TV$ less than this constant value, perhaps by minimising one of the upper bounds in \eqref{tv_wasserstein_sinkhorn_inequality}, then we could claim this model had achieved something classically intractable. 
For example if we make the following assumptions,
\begin{enumerate}
    \item $\IBM$ could achieve a $\TV < \delta$ to a target $\IQP$ distribution. 
    \item A classical probabilistic algorithm, $\mathsf{C}$, could output a distribution $q$ in polynomial time which was $\gamma$ close in $\TV$ to the $\IBM$, i.e.\@ it could simulate it efficiently.
\end{enumerate}
Then:

\begin{align}
    &\TV(p_{\IQP}, q) = \frac{1}{2}\sum_{\mathbf{x}}|p_{\IQP}(\mathbf{x})-q(\mathbf{x})| \\
    &= \frac{1}{2}\sum_{\mathbf{x}}|p_{\IQP}(\mathbf{x}) - p_{\boldsymbol\theta}(\mathbf{x}) + p_{\boldsymbol\theta}(\mathbf{x})-q(\mathbf{x})|\\
    &\leq \frac{1}{2}\sum_{\mathbf{x}}|p_{\IQP}(\mathbf{x}) -p_{\boldsymbol\theta}(\mathbf{x})| +\frac{1}{2}\sum_{\mathbf{x}}|p_{\boldsymbol\theta}(\mathbf{x})-q(\mathbf{x})|\\
    &\leq \delta+\gamma \equiv \epsilon
\end{align}
where the third line follows from the triangle inequality. Therefore $\mathsf{C}$ could simulate an $\IQP$ distribution also, and we arrive at a contradiction. 

The major open question left by this work is whether QLS is possible at all; can a quantum model outperform all classical ones in generative learning? \red{This idea motivated our search for metrics which upper bound $\TV$, but yet were efficiently computable and therefore could be minimised to efficiently learn distributions to a sufficiently small value of $\TV$. Unfortunately, we can see from the exponential scaling observed in \eqref{tv_wasserstein_sinkhorn_inequality}, which gives the upper bound on $\TV$ by regularised $\OT$, that $\SH$ will not provably achieve this particular task, despite achieving our primary goal of being stronger than the $\MMD$ for generative modelling. We briefly discuss avenues of future research in the Discussion, which could provide alternative routes to QLS.}

\begin{figure*}
    \centering
    \includegraphics[width=1.8\columnwidth, height=0.4\columnwidth]{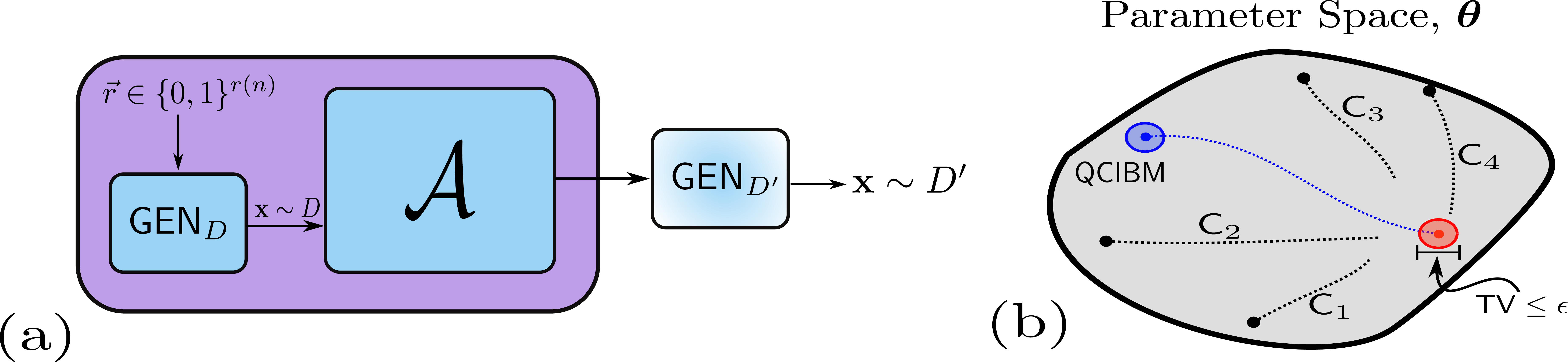}
\caption{(a) Illustration of a learning procedure using a Generator. The algorithm 
$\mathcal A$ is given access to $\GEN_D$, which provides samples, $\mathbf{x}\sim D$, and must output a Generator for a distribution which is close to the original. We allow the target generator to be classical, hence it may take as input a string of random bits, of size polynomial in $n$, $r(n)$, if not able to generate its own randomness. (b) Crude illustration of quantum learning supremacy. No classical algorithm, $\mathsf{C}_i$, should be able to achieve the required closeness in total variation to the target distribution, but the $\IBM$ (or similar) should be able to, for \textit{some} class of target distributions. There should be some path in the parameter space of the $\IBM$, $\boldsymbol\theta$, which achieves this. }
    \label{fig:learning_supremacy}
\end{figure*}

\subsection*{Quantum Compiling}

As a concrete application of such a model outside the scope of classical generative modelling, we can use the $\IBM$ training to perform a type of `weak' quantum circuit compilation. There are potentially other areas which could be studied using these tools, or by applying techniques in generative modelling to other quantum information processing tasks, but this is beyond the scope of this work.

The major objective in this area is to compile a given target unitary, $U$, into one which consists exclusively of operations available to the native hardware of the quantum computer in question. For example, in the case of Rigetti's Aspen QPU, the native gates are $\{R_x(\pm \pi/2), R_z(\theta), CZ\}$ \cite{smith_practical_2016, khatri_quantum-assisted_2019}, and any unitary which a user wishes to implement must be compiled into a unitary $V$ which contains only these ingredients. 

Potential solutions to this problem\cite{jones_quantum_2018, khatri_quantum-assisted_2019} involve approximating the target unitary by assuming that $V$ is a parametric circuit built from the native gates, which can be trained by some optimisation strategy. We adopt a similar view here, but we do not require any extra quantum resources to perform the compilation. With this limitation, we make a tradeoff in that we are not guaranteed to apply \textit{the same} target unitary, only that the output distribution will be close to that produced by the target. Clearly this is a much weaker constraint than the task of \red{direct compilation}, since many unitaries may give rise to the same distribution, but it is much closer to the capabilities of near term devices. To illustrate this application, we train an $\IBM$ to learn the output distribution of a random $\IQP$ circuit when restricted to a $\QAOA$ architecture itself using $\mathcal{L}_{\SH}$ as a cost function. \red{The process is illustrated in \eqref{eqn:ibm_compilation_qaoa_to_IQP}, where we try to determine suitable $\QAOA$ parameters, $\{J^{\QAOA}_{ij}, b^{\QAOA}_k\}$, which reproduce the distribution observed from a set of random $\IQP$ parameters, $\{J^{\IQP}_{ij}, b^{\IQP}_k\}$.}
\begin{multline} \label{eqn:ibm_compilation_qaoa_to_IQP}
    \IBM\left(\left\{J^{\QAOA}_{ij}, b^{\QAOA}_k\right\}, \left\{\Gamma_k = \frac{\pi}{4}\right\}, 0,  0\right) \\
    \overset{Compile}{\rightarrow}   \IBM\left(\left\{J^{\IQP}_{ij}, b^{\IQP}_k\right\}, \right.\\
    \left.\left\{\Gamma_k = \frac{\pi}{2\sqrt{2}}\right\}, 0, \left\{\Sigma_k = \frac{\pi}{2\sqrt{2}}\right\}\right)
\end{multline}
The measurement unitary at the end of the circuit makes this process non-trivial, since this will give rise to significantly different distributions, even given the same parameters in $U_z$. We illustrate this in \figref{fig:autocompilationtwoqubits} using the Rigetti {\fontfamily{cmtt}\selectfont 2q-qvm} and for three qubits in Supplementary Material Section V. We find that even though the learned parameter values are different from the target, the resulting distributions are quite similar, as expected.

\begin{figure*}
    \centering
    \includegraphics[width=2\columnwidth, height=0.45\columnwidth]{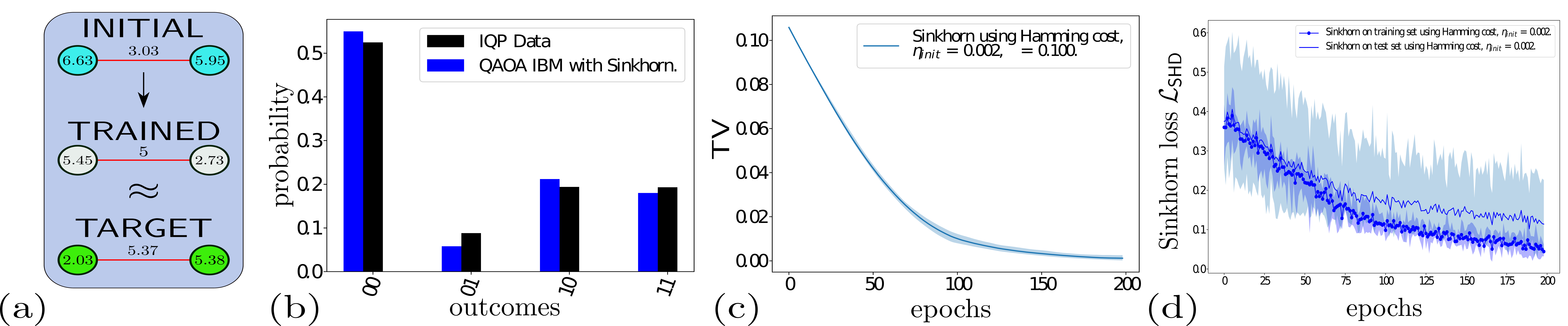}
\caption{Automatic Compilation of $\IQP$ circuit to a $p = 1 \QAOA$ circuit with two qubits using $\mathcal{L}_{\SH}^\epsilon$ with $\epsilon = 0.1$. 500 data samples were used with 400 used for a training set, and 100 used as a test set. $\IBM$ circuit is able to mimic the target distribution well, even though actual parameter values, and circuit families are different. Error bars represent mean, maximum and minimum values achieved over 5 independent training runs on the same data set. (a) Initial [\crule[cyan]{0.2cm}{0.2cm}] and trained [\crule[Lavender]{0.2cm}{0.2cm}] $\QAOA$ circuit parameters for two qubits. Target $\IQP$ circuit parameters [\crule[green]{0.2cm}{0.2cm}]. Parameter values scaled by a factor of 10 for readability. (b) Final learned probabilities of $\IBM$ ($\QAOA$) [\crule[blue]{0.2cm}{0.2cm}] circuit versus `data' probabilities ($\IQP$) [\crule[black]{0.2cm}{0.2cm}]. (c) Total Variation distance and (d) Sinkhorn divergence for 400 training samples, 100 test samples, using a Hamming optimal transport cost.}
    \label{fig:autocompilationtwoqubits}
\end{figure*}

\section*{Discussion}\label{sec:discussion}

Providing provable guarantees of the superior performance of near term quantum computers relative to any classical device for some particular non-trivial application is an important milestone of the field. We have shown one potential route towards this goal by combining complexity-theoretic arguments\cite{bremner_classical_2011, aaronson_computational_2013, boixo_characterizing_2018}, with an application in generative machine learning\cite{gao_efficient_2017, benedetti_generative_2019, liu_differentiable_2018, du_expressive_2018}, and improved training methods of generative models. Specifically we introduced the Ising Born machine, a restricted form of a quantum circuit Born machine. These models utilise the Born rule of quantum mechanics to train a parameterised quantum circuit as a generative machine learning model, in a hybrid manner.

We proved that the model cannot be simulated efficiently by any classical algorithm up to a multiplicative error in the output probabilities, which holds for many circuit families that may be encountered during gradient based training. As such, this type of model is a good candidate for a provable quantum advantage in quantum machine learning using NISQ devices. To formalise this intuition, we defined a notion of quantum learning supremacy to rigorously define what such an advantage would look like, in the context of machine learning.

We adapted novel training methods for generative modelling in two ways. Firstly, by introducing quantum kernels to be evaluated on quantum hardware, and secondly by proposing and adapting new cost functions. In the case of the Sinkhorn divergence, we discussed its sample complexity and used this to define a somewhat optimal cost function through a judicious choice of the regularisation parameter. It is possible to choose this parameter such that the cost is efficiently computable even as the number of qubits grows. We showed numerically that these methods have the ability to outperform previous methods in the random dataset we used as a test case. 

Finally, we demonstrated an application of the model as a heuristic compiler to compile one quantum circuit into another via classical optimisation techniques, which has the advantage of requiring minimal quantum overhead. These techniques could potentially be adapted into methods to benchmark and verify near term quantum devices.

The major question that this work raises is whether or not a provable notion of quantum learning could be achievable for a particular dataset, thereby solidifying a use case for quantum computers in the near term with provable advantage. The best prospect for this is the quantum supremacy distributions we know of (for example $\IQP$), but they are not efficiently testable\cite{hangleiter_sample_2019}. Due to this, they are also likely to not be efficiently learnable either, given the close relationship between distribution testing and learning\cite{goldreich_property_1998}. \red{Furthermore, we can see from the exponential scaling required in \eqref{tv_wasserstein_sinkhorn_inequality} for regularised $\OT$ to upper bound $\TV$, that other techniques are necessary to achieve QLS, since the methods we present here are not suited to this particular task, despite achieving our goal of being stronger than the $\MMD$ for generative modelling.} However, this assumes we have access only to classical samples from the distribution, and the possibility of gaining an advantage using \textit{quantum} samples\cite{schuld_supervised_2018, arunachalam_survey_2017} is unexplored in the context of distribution learning.

\section*{Methods}\label{sec:methods}
In this section, we detail the methods used to train the $\IBM$ to reproduce a given probability distribution. The target distribution is the one given by \eqref{toydatadistribution}, which is used in both refs.\cite{amin_quantum_2018, verdon_quantum_2017} to train versions of the quantum Boltzmann machine:
\begin{equation}
    \pi(\mathbf{y}) \coloneqq \frac{1}{T}\sum\limits_{k=1}^T 
    p^{n - d_H(s_k, \mathbf{y})}
    (1-p)^{d_H(s_k, \mathbf{y})}
\label{toydatadistribution}
\end{equation}

\noindent To generate this data, $T$ 
binary strings of length $n$, written $s_k$ and called `modes',  
are chosen randomly. 
A sample $\mathbf{y}$ is \red{then} produced with a probability which depends on its Hamming distance $d_H(s_k, \mathbf{y})$ to each mode. In all of the above, the Adam \cite{kingma_adam:_2014} optimiser was applied, using the suggested hyperparameters i.e.\@ $\beta_1 = 0.9, \beta_2 = 0.999, \epsilon = 1\times 10^{-8}$, and initial learning rate, $\eta_{\mathsf{init}}$. This was chosen since it was found to be more robust to sampling noise \cite{liu_differentiable_2018}.

In all of the numerical results, we used a $\QAOA$ structure as the underlying circuit in the $\IBM$. Specifically, the parameters in $U_f$ were chosen such that $\forall k, \Gamma_k = \pi/4, \Delta_k = 0, \Sigma_k = 0$. The Ising parameters, $\{J_{ij}, b_k\}$ were initialised randomly. 

For the Stein discrepancy, we used three Nystr\"{o}m eigenvectors to approximate the Spectral score in \figref{fig:MMDvSinkvStein3} for three qubits, and 6 eigenvectors for 4 qubits. In all cases when using the $\MMD$ with a Gaussian kernel, we chose the bandwidth parameters, $\sigma = [0.25, 10, 1000]$\cite{liu_differentiable_2018}.

\subsection*{Data Availability}\label{sec:data_availability}
Data and simulations presented in this work are available from the corresponding author upon request. Code used in this work is available from the corresponding author upon request, or on Github\cite{brian_coyle_briancoyleisingbornmachine_2020}.
\subsection*{Acknowledgements}\label{sec:acknowledgements}

B.C.\@ thanks Andru Gheorghiu for useful discussions and title suggestion. We also thank Jean Feydy, Patric Fulop, Vojtech Havlicek and Jiasen Yang for clarifying pointers. We thank Atul Mantri for comments on the manuscript. This work was supported by the Engineering and Physical Sciences Research Council (grants EP/L01503X/1, EP/N003829/1), EPSRC Centre for Doctoral Training in Pervasive Parallelism at the University of Edinburgh, School of Informatics and Entrapping Machines, (grant FA9550-17-1-0055). We also thank Rigetti Computing for the use of their quantum compute resources, and views expressed in this paper are those of the authors and do not necessarily reflect the views or policies of Rigetti Computing.

\subsection*{Author Contributions}
B.C.\@ devised the theoretical aspects of the work, and wrote the code for the numerical results with the help of D.M.\@; D.M.\@ contributed to the learning supremacy definitions. 
E.K.\@ and V.D.\@ supervised the work. All authors contributed to the manuscript writing. 

\subsection*{Competing Interests}
The authors declare no conflict of interests.


\bibliographystyle{naturemag}

\end{multicols}


\appendix
\renewcommand{\thesection}{\Roman{section}}
\renewcommand{\figurename}{{Supplementary Figure}}

\renewcommand{\thefigure}{\arabic{figure}}

\setcounter{figure}{0}
\newpage
\singlespacing

\appendix
\renewcommand{\thesection}{\Roman{section}}
\renewcommand{\figurename}{{Supplementary Figure}}

\renewcommand{\thefigure}{\arabic{figure}}

\makeatletter
\apptocmd{\thebibliography}{\global\c@NAT@ctr 68\relax}{}{}
\makeatother

\setcounter{figure}{0}
\newpage
\singlespacing
\section*{Supplementary Material}

\section{Integral Probability Metrics as Cost Functions}
\label{supp_matt:cf_and_ipms}

In this section, we provide further details about the cost functions used in this work. As mentioned in the main text, one distribution measure which could be used as a training cost function is the Kullback-Leibler ($\KL$) Divergence. The cost function associated with the $\KL$ divergence (also known as the negative log-likelihood) is given by:
\begin{equation}
    \mathcal{L}_{\KL}(p_{\boldsymbol\theta}, \pi) \coloneqq -\sum\limits_{\mathbf{z}} \pi(\mathbf{z})\log\left(p_{\boldsymbol\theta}(\mathbf{z})\right)  = -\mathbb{E}_{\mathbf{z}\sim\pi}(\log(p_{\boldsymbol\theta}(\mathbf{z}))\label{klcostfunction}
\end{equation}
The gradient of the $\KL$ cost, $\mathcal{L}_{\KL}$, cannot be estimated efficiently for these types of parameterised quantum circuit (PQC) models\citeS{liu_differentiable_2018} since it has the following form:
\begin{equation}
    \frac{\partial\mathcal{L}_{\KL}}{\partial {\boldsymbol\theta}} 
     \sim \sum\limits_{\mathbf{z}}\frac{\pi(\mathbf{z})}{{p_{\boldsymbol\theta}(\mathbf{z})}}\left(p^-_{\boldsymbol\theta}(\mathbf{z})- p^+_{\boldsymbol\theta}(\mathbf{z})\right) \label{klgradient}
\end{equation}
As noted in refs. \citeS{bremner_classical_2011, liu_differentiable_2018}, computing the required probabilities, $p_{\boldsymbol\theta}$, is $\#\Pee$-hard, and so \eqref{klgradient} cannot be computed efficiently.
Furthermore, the $\KL$ Divergence is an example of a so-called $f$-divergence \citeS{sriperumbudur_integral_2009}, and such measures are notoriously hard to estimate in practice. A potential solution to this is to use an alternative family of discrepancies, called \textit{Integral Probability Metrics}\citeS{muller_integral_1997} (IPMs) which are defined by the following equation:

\begin{align}
    \gamma_{\mathcal{F}}(p_{\boldsymbol\theta}, \pi) &  \coloneqq \sup_{\phi \in \mathcal{F}}\left|\int_{\mathcal{M}}\phi dp_{\boldsymbol\theta}-\int_{\mathcal{M}}\phi d\pi\right| \label{ipmequation}\\
    &= \sup_{\phi \in \mathcal{F}}\left(\mathbb{E}_{p_{\boldsymbol\theta}}[\phi]- \mathbb{E}_{\pi}[\phi]\right) 
\end{align}
Where $p_{\boldsymbol\theta}, \pi$ are defined over a measurable space, $\mathcal{M}$. The class of functions which are chosen, $\mathcal{F}$, defines the specific metric.

The $\MMD$ is one of the simplest such example of an IPM, since it is relatively easy to compute and has a low sample complexity, as discussed in the main text. The $\MMD$ was first used in ref.\citeS{gretton_kernel_2012} as a hypothesis test, to determine if two distributions are identical. For a more thorough treatment of the properties of this metric, see refs. \citeS{sriperumbudur_universality_2011, sriperumbudur_hilbert_2010}.  

The $\MMD$ can be extracted from \eqref{ipmequation} by restricting the class of functions, $\mathcal{F}$, to be a unit ball in a Hilbert space, $\mathcal{H}$:
\begin{equation}
    \mathcal{F}_{\MMD}   \coloneqq \{\phi \in \mathcal{H}: ||\phi||_{\mathcal{H}}\leq 1\} \rightarrow \gamma_{\MMD} \label{mmd_metric_supp}
\end{equation}
where $||\cdot||_{\mathcal{H}}$ denotes the norm in $\mathcal{H}$.

The second IPM which is relevant here is the total variation ($\TV$) distance. In fact, the $\TV$ was shown in ref. \citeS{sriperumbudur_integral_2009}, to be the only cost function which is both an IPM, and an $f$-divergence. The $\TV$ can be obtained by taking the function space as follows:
\begin{equation}
    \mathcal{F}_{\TV}   \coloneqq \{\phi: ||\phi||_{\infty} \leq 1\} \label{totalvariationfunctions}
\end{equation}
Where $||\phi||_{\infty} = \sup\{|\phi(\mathbf{x})|, \mathbf{x} \in \mathcal{M}\}$. Also, when working with a discrete (more generally countable) space, $\mathcal{M} = \mathcal{Y}$, the total variation is given by:
\begin{equation}
    \TV(p_{\boldsymbol\theta}, \pi)  \coloneqq  \frac{1}{2}\sum\limits_{\mathbf{x} \in \mathcal{Y}}|p_{\boldsymbol\theta}(\mathbf{x})- \pi(\mathbf{x})| \label{discretetotalvariation}
\end{equation}
This measure is particularly important, since it occurs in the definition of additive error simulation hardness in \appref{supp_matt:ibm_variation_hardness}, and since it is the measure we use as the relative benchmark for the numerical results in the main text, and in \appref{supp_matt:numericalresults}.

Finally, the third useful IPM is the Kantorovich Metric, defined by:
\begin{equation}
    \mathcal{F}_{\W} \coloneqq \{\phi: ||\phi||_{L} \leq 1\} \label{kantorovichmetric}
\end{equation}
where $||\cdot||_{L}$ is the Lipschitz semi-norm, defined by: $||\phi||_{L} := \sup\{|\phi(x) - \phi(y)|/d(x,y) : x\neq{y} \in \mathcal{Y}\}$, where $d$ is a metric on $\mathcal{Y}$. It turns out that, due to the Kantorovich-Rubinstein theorem\citeS{dudley_real_2002}, this is also equivalent to the \textit{Wasserstein Metric} and related to the notion of Optimal Transport, discussed in \appref{supp_matt:sinkhorn}.


\section{Kernel Methods and the \texorpdfstring{$\MMD$}{MMD}}\label{supp_matt:kernelmethods_mmd}
In this section, we elaborate on the kernel methods we use and provide the relevant definitions. Kernel methods are a popular technique, for example in dimensionality reduction, and are beginning to be utilised in quantum machine learning \citeS{havlicek_supervised_2019, schuld_quantum_2019}. A kernel is a symmetric function, $\kappa:\mathcal{Y}\times\mathcal{Y} \rightarrow \mathbb{R}$, which is positive definite, meaning that the matrix it induces, called a \textit{Gram} matrix, is positive semi-definite.

In models like support vector machines (SVM), the idea is to to \emph{embed} the underlying sample space, $\mathcal{Y}$, into a Hilbert space using a non-linear map, $\phi:\mathcal{Y} \rightarrow \mathcal{H}$, called a \emph{feature map}. A kernel is simply defined by the inner product of this feature map at two different points in a Reproducing Kernel Hilbert Space (RKHS). Intuitively, the map should be designed to make the points in the RKHS more easily comparable.  The choice of the kernel function (i.e.\@ the choice of the RKHS) may allow different properties to be compared.  For a comprehensive review of techniques in kernel embeddings, see ref. \citeS{muandet_kernel_2017}. In the following, we assume that the sample space is discrete ($\mathcal{Y} = \mathcal{X}^n$) to be consistent with the main text, but the definitions we present are more generally applicable.

Every feature map has a corresponding kernel, given by the inner product on the Hilbert Space:
\begin{theorem}[Feature Map Kernel]
    Let $\phi: \mathcal{X}^n \rightarrow \mathcal{H}$ be a feature map. The inner product of two inputs mapped to a feature space defines a kernel via:
    \begin{equation}
        \kappa(\mathbf{x},\mathbf{y})  \coloneqq \langle\phi(\mathbf{x}),\phi(\mathbf{y}) \rangle_{\mathcal{H}} \label{featurekernel}
    \end{equation}
\end{theorem}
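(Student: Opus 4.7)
The plan is to verify the two defining axioms of a kernel, namely symmetry and positive semi-definiteness, both of which follow essentially directly from the inner-product axioms of the Hilbert space $\mathcal{H}$; no nontrivial machinery is needed.

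First, I would establish symmetry. Since the paper takes $\kappa$ to be real-valued (so $\mathcal{H}$ may be viewed as a real Hilbert space for this purpose), symmetry of the inner product gives immediately
\[
\kappa(\mathbf{x},\mathbf{y}) = \langle \phi(\mathbf{x}),\phi(\mathbf{y})\rangle_\mathcal{H} = \langle \phi(\mathbf{y}),\phi(\mathbf{x})\rangle_\mathcal{H} = \kappa(\mathbf{y},\mathbf{x}).
\]
Next, for positive semi-definiteness, I would fix an arbitrary finite collection of points $\{\mathbf{x}_i\}_{i=1}^N \subset \mathcal{X}^n$ and scalars $\{c_i\}_{i=1}^N \subset \mathbb{R}$ and show that the induced Gram matrix $K_{ij} := \kappa(\mathbf{x}_i,\mathbf{x}_j)$ satisfies $\mathbf{c}^T K\mathbf{c}\geq 0$. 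The key manipulation is a one-liner that pulls the scalars inside the inner product via bilinearity:
\[
\sum_{i,j=1}^N c_i c_j\, \langle \phi(\mathbf{x}_i),\phi(\mathbf{x}_j)\rangle_\mathcal{H} = \Bigl\langle \sum_{i=1}^N c_i\phi(\mathbf{x}_i),\,\sum_{j=1}^N c_j\phi(\mathbf{x}_j)\Bigr\rangle_\mathcal{H} = \Bigl\|\sum_{i=1}^N c_i\phi(\mathbf{x}_i)\Bigr\|_\mathcal{H}^2 \geq 0,
\]
where the final inequality is the defining non-negativity of the Hilbert-space norm. Since the choice of $N$, points, and scalars was arbitrary, $K$ is positive semi-definite and hence $\kappa$ is a valid kernel.

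There is no genuine obstacle: this is a textbook fact whose proof is a direct unpacking of definitions. The only subtlety worth flagging is the choice of scalar field; if one wanted to generalise to a complex Hilbert space (as would be natural for the quantum kernel \eqref{quantumkernel} once one drops the modulus squared), one should replace $c_j$ by $\overline{c_j}$ in the second slot and use sesquilinearity, which still yields $\bigl\|\sum_i c_i\phi(\mathbf{x}_i)\bigr\|_\mathcal{H}^2 \geq 0$ and hence Hermitian positive semi-definiteness of the Gram matrix.
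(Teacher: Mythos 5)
Your proof is correct and complete: symmetry follows from the (real) inner-product axioms, and positive semi-definiteness follows from pulling the coefficients inside by bilinearity to obtain $\bigl\|\sum_i c_i\phi(\mathbf{x}_i)\bigr\|_{\mathcal{H}}^2 \geq 0$. The paper does not actually spell out a proof of this statement --- it defers to the cited reference on quantum feature maps --- and your argument is precisely the standard one that reference contains, with your closing remark on the sesquilinear complex case being the relevant adaptation for the quantum kernels used later in the paper.
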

The full proof that this feature map gives rise to a kernel, a positive semi-definite and symmetric function, can be found in ref.\citeS{schuld_quantum_2019} for quantum feature maps, $\phi :\mathbf{x}\rightarrow \ket{\phi(\mathbf{x})}$.

The Hilbert Space is `\emph{reproducing}' because the inner product of the kernel at a point in the sample space, with a function on the Hilbert space, in some sense \emph{evaluates} or reproduces the function at that point:  $\langle f, \kappa(\mathbf{x}, \cdot)\rangle  = f(\mathbf{x})$ for $f \in \mathcal{H}$, 
$\mathbf{x} \in \mathcal{X}^n$. 

We can also define the \emph{mean embedding}, which is a relevant quantity in the definition of the $\MMD$, and which has also been defined in the quantum setting\citeS{kubler_quantum_2019}:
\begin{equation}
    \mu_{p_{\boldsymbol\theta}}  \coloneqq \mathbb{E}_{p_{\boldsymbol\theta}}(\kappa(\mathbf{x}, \cdot)) =\mathbb{E}_{p_{\boldsymbol\theta}}[\phi(\mathbf{x})] \in \mathcal{H} \label{meanembedding}
\end{equation}
It can be shown that the \textsf{MMD} (\eqref{mmd_metric_supp}) is exactly the difference in mean embeddings (\eqref{meanembedding}) between the two distributions\citeS{borgwardt_integrating_2006, gretton_kernel_2007}:
\begin{equation}
    \gamma_{\MMD}(p_{\boldsymbol\theta}, \pi)   \coloneqq ||\mu_{p_{\boldsymbol\theta}} - \mu_{\pi}||_{\mathcal{H}}\label{mmd_mean_embedding}
\end{equation}
As mentioned above, we wish to use \emph{quantum} kernels, i.e.\@ those arising as a result of a state overlap in a quantum RKHS. This notion was first presented by ref. \citeS{schuld_quantum_2019} where the form of the kernel assumed is the one induced by the inner product on a quantum (i.e.\@ complex) Hilbert space:
\begin{equation}
\kappa(\mathbf{x}, \mathbf{y})  \coloneqq \braket{\phi(\mathbf{x})|\phi(\mathbf{y})} \label{amplitudekernel}
\end{equation}
Given that we are trying to exploit some advantage by using quantum computers, this definition is natural. However, it should be noted that the above definition of a kernel defines it to be a mapping from the sample space to $\mathbb{C}$, i.e.\@ the kernel is the inner product of two wave functions which encode two samples and is therefore a transition amplitude. As mentioned in ref.\citeS{schuld_quantum_2019}, it would be desirable to find kernels which are hard to compute classically, in order to gain some quantum advantage. The first potential candidate for such a situation was derived by ref.\citeS{havlicek_supervised_2019}. As such, to remain consistent with the authors notation\citeS{havlicek_supervised_2019}, the kernel is defined as the real transition \emph{probability} instead. As mentioned in ref.\citeS{schuld_quantum_2019}, it is possible to define a kernel this way by taking the modulus squared of \eqref{amplitudekernel}, and defining the RKHS as follows:

\begin{theorem}[Quantum RKHS]
let $\Phi:\mathcal{X}\rightarrow \mathcal{H}$ be a feature map over an input set $\mathcal{X}$, giving rise to a \emph{real} kernel: $\kappa(\mathbf{x}, \mathbf{y}) = |\braket{\Phi(\mathbf{x})|\Phi(\mathbf{y})}|^2$. The corresponding RKHS is therefore:
\begin{equation}
    \mathcal{H}_\kappa  \coloneqq \{f:\mathcal{X} \rightarrow \mathbb{R}|\ f(\mathbf{x}) = |\braket{w|\Phi(\mathbf{x})}|^2, \forall \mathbf{x} \in \mathcal{X}, w\in \mathcal{H}\} 
\end{equation}
\end{theorem}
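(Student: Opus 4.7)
The plan is to proceed in three steps. First, I would verify that $\kappa$ is a positive semi-definite kernel so that the Moore--Aronszajn theorem applies; second, construct the RKHS via the standard completion of the span of $\{\kappa(\mathbf{x},\cdot)\}$; third, identify the resulting functions with those of the form appearing in the statement, and verify the reproducing property.

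For positive semi-definiteness, I would re-express the kernel via density operators: writing $\rho_{\mathbf{x}} = \ket{\Phi(\mathbf{x})}\bra{\Phi(\mathbf{x})}$, one has
\begin{equation}
\kappa(\mathbf{x},\mathbf{y}) = |\braket{\Phi(\mathbf{x})|\Phi(\mathbf{y})}|^2 = \mathrm{tr}(\rho_{\mathbf{x}}\rho_{\mathbf{y}}) = \langle \rho_{\mathbf{x}}, \rho_{\mathbf{y}}\rangle_{\mathrm{HS}},
\end{equation}
where $\langle \cdot,\cdot\rangle_{\mathrm{HS}}$ is the Hilbert--Schmidt inner product on bounded operators on $\mathcal{H}$. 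This realises $\kappa$ as a feature-map kernel into the Hilbert space of Hilbert--Schmidt operators (equivalently, into $\mathcal{H}\otimes\overline{\mathcal{H}}$ via $\mathbf{x}\mapsto \Phi(\mathbf{x})\otimes\overline{\Phi(\mathbf{x})}$), so the preceding Feature Map Kernel theorem immediately yields PSD-ness. Moore--Aronszajn then furnishes a unique RKHS with reproducing kernel $\kappa$, as the closure of the pre-Hilbert space spanned by $\{\kappa(\mathbf{x}_i,\cdot)\}$. A generic element of this span takes the form $f(\mathbf{x}) = \sum_i c_i\,\mathrm{tr}(\rho_{\mathbf{x}_i}\rho_{\mathbf{x}}) = \mathrm{tr}(W \rho_{\mathbf{x}})$ with $W = \sum_i c_i\, \rho_{\mathbf{x}_i}$; specialising $W=\ket{w}\bra{w}$ recovers the form $f(\mathbf{x}) = |\braket{w|\Phi(\mathbf{x})}|^2$ from the statement. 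The reproducing property $\langle f, \kappa(\mathbf{x},\cdot)\rangle_{\mathcal{H}_\kappa} = f(\mathbf{x})$ reduces to a Hilbert--Schmidt inner-product identity on the embedded image $\{\rho_{\mathbf{x}}\}$, and extends to the completion by continuity.

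The main obstacle, which I expect to be the delicate part of the argument, is reconciling the set-theoretic description in the statement with the vector-space structure required of an RKHS: the family $\{f_w: w\in\mathcal{H}\}$ is not closed under pointwise addition, since in general $|\braket{w_1|\Phi(\mathbf{x})}|^2 + |\braket{w_2|\Phi(\mathbf{x})}|^2$ is not of the form $|\braket{w|\Phi(\mathbf{x})}|^2$ for any single $w\in\mathcal{H}$. The accurate reading is that the set in the statement constitutes a generating (spanning) family for the RKHS, whose closed linear span yields all functions $\mathbf{x}\mapsto \mathrm{tr}(W \rho_{\mathbf{x}})$ with $W$ ranging over the HS-closure of $\mathrm{span}\{\rho_{\mathbf{x}}\}_{\mathbf{x}\in\mathcal{X}}$. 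The technical step is then checking that the inner product $\langle f_{W_1}, f_{W_2}\rangle_{\mathcal{H}_\kappa} := \langle W_1, W_2\rangle_{\mathrm{HS}}$ is well-defined independently of the representation of $W_i$ as combinations of the $\rho_{\mathbf{x}_j}$; once this is done, completeness, the reproducing property, and uniqueness all follow routinely from the standard Moore--Aronszajn machinery.
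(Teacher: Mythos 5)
The paper does not actually prove this statement: it presents the ``theorem'' essentially as a definition (note the $\coloneqq$), deferring justification to the cited reference on quantum feature maps, so there is no in-paper argument to compare yours against. Your proposal is correct and is the standard construction used in that literature: the identity $|\braket{\Phi(\mathbf{x})|\Phi(\mathbf{y})}|^2 = \tr(\rho_{\mathbf{x}}\rho_{\mathbf{y}}) = \langle\rho_{\mathbf{x}},\rho_{\mathbf{y}}\rangle_{\mathrm{HS}}$ linearises the squared overlap into a bona fide feature-map kernel on Hilbert--Schmidt operators, after which positive semi-definiteness and Moore--Aronszajn do the rest. Your observation that the displayed set $\{f_w : w\in\mathcal{H}\}$ is not closed under pointwise addition, and hence is only a generating family for the true RKHS rather than the RKHS itself, is accurate and identifies a genuine imprecision in the statement as written; the honest reading is that $\mathcal{H}_\kappa$ is the closed linear span of these functions, equivalently the set of maps $\mathbf{x}\mapsto\tr(W\rho_{\mathbf{x}})$ with $W$ ranging over the Hilbert--Schmidt closure of $\mathrm{span}\{\rho_{\mathbf{x}}\}_{\mathbf{x}\in\mathcal{X}}$. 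One small point worth adding to your third step: for arbitrary $w\in\mathcal{H}$ the operator $\ket{w}\bra{w}$ need not lie in that closed span, but the induced function $f_w$ still belongs to the RKHS, because the component of $\ket{w}\bra{w}$ orthogonal to $\mathrm{span}\{\rho_{\mathbf{x}}\}$ contributes nothing to $\tr\left(\ket{w}\bra{w}\rho_{\mathbf{x}}\right)$; so membership of every $f_w$ in $\mathcal{H}_\kappa$ does hold, and only the converse inclusion fails.
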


In ref.\citeS{schuld_quantum_2019}, the authors investigate kernels which are classically efficiently computable in order to facilitate testing of their classification algorithm, for example they only study so-called Gaussian states, a key ingredient in continuous variable quantum computing, \citeS{lloyd_quantum_1999}, which are known to be classically simulable. However, to gain a potential quantum advantage, we use the kernel proposed in \citeS{havlicek_supervised_2019}, which is defined by a feature map constructed by the following quantum circuit:

\begin{align}
    \Phi: \mathbf{x} &\in \mathcal{X}^n \rightarrow \ket{\Phi(\mathbf{x})} \label{quantumfeaturemap}\\
    \ket{\Phi(\mathbf{x})}  \coloneqq \mathcal{U}_{\Phi(\mathbf{x})}\ket{0}^{\otimes n} &= U_{\Phi(\mathbf{x})}H^{\otimes n}U_{\Phi(\mathbf{x})}H^{\otimes n}\ket{0}^{\otimes n} \label{kernelcircuit}
\end{align}
so the resulting kernel is:
\begin{equation}
    \kappa_Q(\mathbf{x}, \mathbf{y})  \coloneqq |\braket{\Phi(\mathbf{x})|\Phi(\mathbf{y})}|^2 \label{quantumkernel_app}
\end{equation}
A further motivation for why this particular type of circuit is chosen, is its relationship to the Ising model. This can be seen through the choice of the unitary encoding operators, $U_{\Phi(\mathbf{x})}$:
\begin{equation}
    U_{\Phi(\mathbf{x})}  \coloneqq \exp\left(\mathrm{i}\sum\limits_{S\subseteq [n]}\phi_S(\mathbf{x})\prod\limits_{i \in S}Z_i\right) \label{kernelcircuitunitary}
\end{equation}
This is the same form as the diagonal unitary in the $\IBM$, \eqref{diagonalunitary}, with the parameters, ${\boldsymbol\theta}$ replaced by the feature map with sample $\mathbf{x}$, $\phi_S(\mathbf{x})$. However, this is \emph{not} an \textsf{IQP} circuit because of the extra final layer of diagonal gates in \eqref{kernelcircuit}. It is noted in ref.\citeS{havlicek_supervised_2019} that if we only care about computing the overlap between the states to within a multiplicative error (or exactly) it is sufficient to ignore the second layer of diagonal gates, $U_{\Phi(\mathbf{x})}$, in the feature map, \eqref{quantumfeaturemap}, as it will be $\#\Pee$-hard for some sample values. However, to rule out an \emph{additive} error approximation, it is necessary to add the second encoding layer.

It should be noted that this is experimentally favourable to work alongside the original $\IBM$ circuit since the same setup (i.e.\@ layout of entanglement gates, and single qubits rotations) is required for both circuit types, albeit with different parameters. Just like in the Ising scenario, only single and two-qubit operations are required:
\begin{equation}
	U_{\phi_{\{l,m\}}(\mathbf{x})} = \exp\left(\mathrm{i}\phi_{\{l,m\}}	(\mathbf{x})Z_l\otimes Z_m\right) \qquad U_{\phi_{\{k\}}(\mathbf{x})} = \exp\left(\mathrm{i}\phi_{\{k\}}(\mathbf{x})Z_k\right) \label{kernelgates}
\end{equation}
The arguments of the gates in \eqref{kernelgates} used to encode the samples is given by:
\begin{equation}
    \phi_{\{l, m\}}(\mathbf{x})   \coloneqq \left(\frac{\pi}{4} - x_l\right)\left(\frac{\pi}{4} - x_m\right) \qquad
    \phi_{\{k\}}(\mathbf{x})  \coloneqq \frac{\pi}{4}x_k \label{quantumfeaturemapvariableencoding}
\end{equation}
The choice of this kernel is motivated by ref.\citeS{havlicek_supervised_2019}, which conjectures that this overlap, \eqref{quantumkernel_app}, will be classically hard to estimate up to polynomially small error, whereas the kernel can be computed using a quantum device, up to an additive sampling error. Furthermore, the particular encoding function in \eqref{quantumfeaturemapvariableencoding} was found to be a good all-purpose encoding for classification\citeS{suzuki_analyzing_2019}. A rough bound for the number of measurements required to compute the full matrix is also given by ref.\citeS{havlicek_supervised_2019}. The only difference is that in that case, the Gram matrix was computed using only samples from the same source. In our example, this need not be the case. We see that it in general, it requires $|B|$ samples from one source (the $\IBM$ for example), and $|D|$ samples from another (the data). The resulting Gram matrix will then be of dimension $|B| \times |D|$. However, to simplify the expressions, we can assume that we have the same number of samples from each: $|B| = |D| = T$. Therefore, directly from ref.\citeS{havlicek_supervised_2019}, to compute a single kernel entry to precision $\tilde{\epsilon} = \mathcal{O}(R^{-1/2})$ requires $R  = \mathcal{O}(\tilde{\epsilon}^{-2})$ measurement shots. Then an approximation, $\Hat{K}$, of the Gram matrix for the kernel, $K_{\mathbf{x}, \mathbf{y}} = \kappa(\mathbf{x}, \mathbf{y})$, which has $T(T - 1)$  non-trivial entries, that is $\epsilon$-close in operator norm $||K - \Hat{K}|| \leq \epsilon$, can be determined using $R = \mathcal{O}(\epsilon^{-2}T^4)$ measurement shots.

Now that the kernel has been introduced, we can examine how this appears in the cost functions we use, starting with the $\MMD$. To exploit the kernel trick, we do not directly use $\gamma_{\MMD}$ in \eqref{mmd_metric_supp} as a cost, instead its squared version, which is exactly the one used in refs.\citeS{liu_differentiable_2018, du_expressive_2018} for their versions of the Born Machine:
\begin{equation}
    \mathcal{L}_{\MMD}  \coloneqq \gamma_{\MMD}(p_{\boldsymbol\theta}, \pi)^2 = ||\mathbb{E}_{p_{\boldsymbol\theta}}[\phi(\mathbf{x})] - \mathbb{E}_{\pi}[\phi(\mathbf{x})]||_{\mathcal{H}}^2 \label{mmdloss}
\end{equation}
By expanding the inner product in \eqref{mmdloss}, one exactly retrieves the form seen in \eqref{mmdexact}, where $\kappa$ is the $\MMD$ kernel. A requirement for the $\MMD$ to be useful (as a hypothesis test) is that the kernel used to define it must be \textit{characteristic} or universal\citeS{fukumizu_kernel_2007, sriperumbudur_injective_2008}. This is one property which allows it to be a valid metric on the space of probability distributions, with $\gamma_{\MMD}(p_{\boldsymbol\theta}, \pi) = 0 \iff p_{\boldsymbol\theta} \equiv \pi$. This enables the $\MMD$ to be used in hypothesis testing to determine if the null hypotheses ($p_{\boldsymbol\theta} \equiv \pi$) holds or not. The Gaussian kernel (\eqref{gaussiankernel}) is indeed one which is characteristic \citeS{fukumizu_kernel_2007}, and some effort has been made to find conditions under which a kernel is characteristic \citeS{sriperumbudur_universality_2011}.

In the case where the support of the distributions is discrete, as is the case here, the condition on the kernel being characteristic is \textit{strict positive definiteness} \citeS{sriperumbudur_integral_2009}. Strict positive definiteness of the kernel is defined as the resulting Gram matrix being positive definite. A matrix is positive definite $\iff$ $\forall \mathbf{x} \in \mathbb{R}^d/\mathbf{0}, \mathbf{x}^TK\mathbf{x} > 0$. 

The $\MMD$ can be estimated\citeS{sriperumbudur_integral_2009} (in an unbiased way), given i.i.d.\@ samples from two distributions, $\hat{\mathbf{x}} = (\mathbf{x}^1,\dots, \mathbf{x}^N) \sim p_{\boldsymbol\theta}$, $\hat{\mathbf{y}} = (\mathbf{y}^1, \dots, \mathbf{y}^M) \sim \pi$.
This is done by simply replacing the expectation values in \eqref{mmdexact} in the main text with their empirical values.
\begin{equation}
      \tilde{\mathcal{L}}_{\MMD}   \coloneqq \tilde{\gamma}_{\MMD}({p_{\boldsymbol\theta}}_N, {\pi}_M)^2 =\frac{1}{N(N-1)}\sum\limits_{i \neq j}^N \kappa(\mathbf{x}^i, \mathbf{x}^j) + \frac{1}{M(M-1)}\sum\limits_{i \neq j}^M \kappa(\mathbf{y}^i, \mathbf{y}^j) -\frac{2}{MN}\sum\limits_{i ,j}^{M, N} \kappa(\mathbf{x}^i, \mathbf{y}^j)   \label{mmdlossestimator_supp}
\end{equation}
As shown in ref.\citeS{sriperumbudur_integral_2009}, the above \eqref{mmdlossestimator_supp} demonstrates both consistency and lack of bias, and furthermore it converges fast in probability to the true $\MMD$:
\begin{equation}
    |\tilde{\gamma}_{\MMD}({p_{\boldsymbol\theta}}_N, \pi_M)  -\gamma_{\MMD}(p_{\boldsymbol\theta}, \pi)| \leq \mathcal{O}_{p_{\boldsymbol\theta}, \pi}(N^{-1/2} + M^{-1/2})\label{mmdsamplecomplexity}
\end{equation} 
This quadratic convergence rate is highly desirable, since it does not depend on the dimension of the space from which the samples are drawn. 

The derivation of the $\MMD$ loss estimator requires that the kernel be a bounded and measurable function. This is the case for Gaussian kernels, but we must check it holds for the quantum kernel, \eqref{quantumkernel_app}. Happily, this is the case: the overlap between two states is bounded above by $1$, and the sample space consists of binary strings.
\begin{equation}
    \kappa_Q(\mathbf{x}, \mathbf{y}) = |\braket{\Phi(\mathbf{x})|\Phi(\mathbf{y})}|^2 \leq 1 \qquad \forall \mathbf{x}, \mathbf{y} \in \mathcal{X}
\end{equation}
Now, to compute the derivative of this cost function, with respect to the parameters, $\theta_k$, we follow the method of ref.\citeS{liu_differentiable_2018}. This approach applies to all cost functions we employ in this work. In ref.\citeS{liu_differentiable_2018}, the quantum gates with trainable parameters, $\eta$, are of the form $U(\eta) = \exp(-i\frac{\eta}{2}\Sigma)$, where $\Sigma$ satisfies $\Sigma^2 = \mathds{I}$. Any tensor product of Pauli operators satisfies this requirement.

It is known when the gates in the circuit have the above form\citeS{mitarai_quantum_2018, liu_differentiable_2018}, that the gradient of an observable of the circuit, $B$, with respect to a parameter, $\eta$, is given by:
\begin{equation}
    \frac{\partial \langle B \rangle_\eta}{\partial \eta} = \frac{1}{2}\left(\langle B \rangle_{\eta^+} - \langle B \rangle_{\eta^-}\right) \label{liucircuitgradient}
\end{equation} 
where $\eta^{\pm}$ are parameter shifted versions of the original circuits. As noted in ref.\citeS{liu_differentiable_2018}, taking the observable to be a measurement in the computational basis, $B= M_{\mathbf{z}} =  \ket{\mathbf{z}}\bra{\mathbf{z}}$, we arrive at the following form of the gradient of the probabilities:
\begin{equation}
    \frac{\partial p_{\boldsymbol\theta}(\mathbf{z})}{\partial \theta_k} = p_{\theta_k}^{-}(\mathbf{z}) - p_{\theta_k}^{+}(\mathbf{z}) \label{mmdprobabilitygradient_supp}
\end{equation}
The factor of $-1/2$ difference with this gradient from that of ref.\citeS{liu_differentiable_2018}  is due to the slightly different parameterisation we choose, our gates are instead of the form: $\exp(i\eta\Sigma): \{D_1(b_k) =  \exp{\mathrm{i}b_k Z_k}, D_2(J_{ij}) = \exp{\mathrm{i}J_{ij}Z_iZ_j}\}$.
As mentioned above, this gradient requires computing the parameter shifted versions of the original circuit, $p_{\boldsymbol\theta_k}^{\pm}$. This notation indicates that the $\text{k}^{th}$ parameter in the original circuit has been shifted by a factor of $\pm \frac{\pi}{2}$; $p_{{\boldsymbol\theta}_k}^{\pm} = p_{{\boldsymbol\theta}_k \pm \pi/2}$. This formula is actually also valid for the parameter in any unitary which has \textit{at most} two distinct eigenvalues \citeS{schuld_evaluating_2019}.

Using this, we arrive at the $\MMD$ gradient from the main text, \eqref{mmdgradient}, which can be approximated using the following estimator:
\begin{equation}
    \frac{\partial \mathcal{L}_{\MMD}}{\partial \theta_k} \approx \frac{2}{PN}\sum\limits^{P,N}_{p, i}\kappa(\mathbf{a}^p, \mathbf{x}^i) - \frac{2}{QN}\sum\limits_{q,n}^{Q, N}\kappa(\mathbf{b}^q, \mathbf{x}^i)- \frac{2}{PM}\sum\limits_{p,m}^{P, M}\kappa(\mathbf{a}^p, \mathbf{y}^m) + \frac{2}{QM}\sum\limits_{q, m}^{Q, M}\kappa(\mathbf{b}^q, \mathbf{y}^m) \label{mmdgradientestimator}
\end{equation}
where we have $P, Q$ samples, $\hat{\mathbf{a}} = \{\mathbf{a}^1,\dots, \mathbf{a}^P\}, \hat{\mathbf{b}} = \{\mathbf{b}^1,\dots, \mathbf{b}^Q\}$ drawn from the parameter shifted circuits, $ p^-_{\theta_k}(\mathbf{a}), p^+_{\theta_k}(\mathbf{b})$ respectively and $N, M$ samples, $\hat{\mathbf{x}} = \{\mathbf{x}^1,\dots, \mathbf{x}^N\}, \hat{\mathbf{y}} = \{\mathbf{y}^1,\dots, \mathbf{y}^M\}$ drawn from the the original Born machine circuit and the data distribution respectively, $p_{\boldsymbol\theta}(\mathbf{x}), \pi(\mathbf{y})$. The parameter shifted circuits are given in the main text (\eqref{circuit:gradientplusminuscircuit}) and repeated here for completeness:
\begin{equation}
\Qcircuit @C=0.6em @R=1em {
\lstick{\ket{0}^{\otimes n}} & \gate{H^{\otimes n}} & \gate{U_{l:k+1}}&\gate{U_k(\theta_k^{\pm})}&\gate{U_{k-1:1}} & \meter &\cw & \rstick{\mathbf{a}^p/\mathbf{b}^q \in \{0,1\}^n} 
 }
\label{circuit:gradientplusminuscircuit_supp}
\end{equation}
where the notation indicates (for $m\geq l$) $U_{l:m} = U_lU_{l-1}\dots U_{m+1}U_m$, and each gate $k$ carries one of the Ising parameters, $\boldsymbol\alpha_k$ or one of the set of measurement unitaries, $\{\mathbf{\Gamma}_k, \mathbf{\Delta}_k, \mathbf{\Sigma}_k\}$, since these also could be trained.

\section{Stein Discrepancy \& Score Approximations \label{supp_matt:stein_discrepenancy}}
In this section we elaborate on the Stein discrepancy ($\SD$) and the methods to approximate the score function mentioned in the main text. The $\SD$ was originally proposed in ref.\citeS{gorham_measuring_2015}, which also provided as a method to compute it using linear programs. This is necessary as in general the computation involves high dimensional integrals, as is also the case for the $\MMD$. A simpler form was derived by ref.\citeS{liu_kernelized_2016}, which introduced a kernelised version, allowing it to be computed in closed form. 

We begin by discussing the original formulation of the discrepancy in the continuous case, before dealing with its discretisation. Firstly, we have Stein's identity (in the case where the sample space is one-dimensional, $x\in \mathcal{X} \subseteq \mathbb{R}$) given by:
\begin{equation}
    \mathbb{E}_\pi\left[\mathcal{A}_\pi\phi(x)\right] = \mathbb{E}_\pi \left[s_\pi(x)\phi(x)+ \nabla_x\phi(x)\right]  = 0 \label{steinidentity}
\end{equation}
where $s_\pi(x) = \nabla_x\log(\pi(x))$ is the \textit{Stein Score} function of the distribution $\pi$, and $\mathcal{A}_\pi$ is a so-called \textit{Stein operator} of $\pi$. The functions, $\phi$, which obey \eqref{steinidentity}, are said to be in the \textit{Stein class} of the distribution $\pi$. From Stein's identity, one can  define a discrepancy  between the two distributions, $p_{\boldsymbol\theta}, \pi$, by the following optimisation problem, \citeS{yang_goodness--fit_2018}:
\begin{equation}
    D_S(p_{\boldsymbol\theta}|| \pi)  \coloneqq \sup_{\phi \in \mathcal{F}}\left(\mathbb{E}_{p_{\boldsymbol\theta}}[\mathcal{A}_\pi\phi] - \mathbb{E}_{p_{\boldsymbol\theta}}[\mathcal{A}_{p_{\boldsymbol\theta}}\phi]\right)^2 \label{steindiscrepancy_supp}
\end{equation}
If $p\equiv q$, then $D_S(p_{\boldsymbol\theta}|| \pi) = 0$ by \eqref{steinidentity}. Exactly as with the $\MMD$, the power of the discrepancy in \eqref{steindiscrepancy_supp}, will depend on the choice of the function space, $\mathcal{F}$. By choosing it to be a RKHS, a kernelised form which is computable in closed form can be obtained. Also, this form of \eqref{steindiscrepancy_supp} is very reminiscent of that of the integral probability metrics, \eqref{ipmequation}.

However, to make the $\SD$ applicable to this case, we must make a key alteration. Equation (\ref{steindiscrepancy_supp}) is only defined for smooth probability densities, $p_{\boldsymbol\theta}, \pi$, which are supported on \textit{continuous} domains (e.g.\@ $\mathbb{R}$) due to the gradient term, $\nabla_x$, in \eqref{steinidentity}. We must perform a discretisation procedure to deal with this. Fortunately, this has been addressed by ref.\citeS{yang_goodness--fit_2018}, which adapted the kernelised $\SD$ to the discrete domain, by introducing a discrete gradient `shift' operator. Just as above, we assume we are dealing with $n$-dimensional sample vectors, $\mathbf{x}\in \mathcal{X}^n \subseteq \mathbb{R}^n$. First of all, we shall need some definitions\citeS{yang_goodness--fit_2018}:

\begin{definition}[Cyclic Permutation]
For a set $\mathcal{X}$ of finite cardinality, a cyclic permutation $\neg:\mathcal{X} \rightarrow \mathcal{X}$ is a bijective function such that for some ordering $x^{[1]},x^{[2]}, \dots, x^{[|\mathcal{X}|]}$ of the elements in $\mathcal{X}$, $\neg x^{[i]} \mapsto x^{[(i+1)\mod|\mathcal{X}|]}, \forall i = 1,2,\dots, |\mathcal{X}|$
\end{definition}
\begin{definition}[Partial Difference Operator and Difference Score Function]
Given a cyclic permutation  $\neg$ on $\mathcal{X}$, for any vector, $\mathbf{x} = (x_1, \dots, x_n)^T \in \mathcal{X}^n$. For any function $f:\mathcal{X}^n \rightarrow \mathbb{R}$, denote the (partial) difference operator as:
\begin{equation}
    \Delta_{x_i}f(\mathbf{x}) := f(\mathbf{x}) - f(\neg_i\mathbf{x}) \qquad \forall i=1, \dots, d \label{discreteshiftoperator}
\end{equation}
with $\Delta f(\mathbf{x}) = (\Delta_{x_1}f(\mathbf{x}), \dots \Delta_{x_n}f(\mathbf{x}))^T$. Define the (difference) score function for a positive probability mass function, $p(\mathbf{x}) > 0 ~\forall \mathbf{x}$ as:

\begin{align}
    \mathbf{s}_p(\mathbf{x}) &  \coloneqq \frac{\Delta p(\mathbf{x})}{p(\mathbf{x})} \label{discretescorefunction}\\
    (\mathbf{s}_p(\mathbf{x}))_i &= \frac{\Delta_{x_i} p(\mathbf{x})}{p(\mathbf{x})} = 1 -  \frac{p(\neg_i\mathbf{x})}{p(\mathbf{x})}
\end{align}
\end{definition}
Furthermore, ref.\citeS{yang_goodness--fit_2018} also defines the inverse permutation by $\revneg: x^{[i]} \mapsto x^{[(i-1)\mod|\mathcal{X}|]}$, and the \textit{inverse} shift operator by:
\begin{equation}
    \Delta^*_{x_i}f(\mathbf{x})  \coloneqq f(\mathbf{x}) - f(\revneg_i\mathbf{x}) \forall i=1, \dots, n \label{inversediscreteshiftoperator}
\end{equation}
However, this is slightly too general for our purposes, as the sample space for a single qubit is binary; $\mathcal{X} = \{0, 1\}$. In this case, the forward and reverse permutations are identical, so $\Delta = \Delta^*$, as mentioned in the main text.

The discrete versions of the Stein identity, and the kernelised Stein discrepancy are also derived in ref.\citeS{yang_goodness--fit_2018}, and are in an identical form to the continuous case. There is however, one interesting difference. In the continuous case, the class of functions, $\phi \in \mathcal{F}$, which obey Stein's identity (i.e.\@ those which are in the Stein class of the operator $\mathcal{A}_p$) are only those for which $\phi(\mathbf{x})p(\mathbf{x})$ vanishes at the boundary of the set, $\partial \mathcal{Y}$\citeS{liu_kernelized_2016}. Interestingly, this restriction is not necessary in the discrete case. Due to the definition of the discrete shift operator, it turns out that \textit{all} functions $\phi: \mathcal{X}^n \rightarrow \mathbb{R}$ are in the Stein class of the discrete Stein operator. This gives us the freedom to use the quantum kernel, $\kappa_Q$, \eqref{quantumkernel}, in the $\SD$ by making a small adaptation to the proof of Theorem 2 in \citeS{yang_goodness--fit_2018} so that the discrete Stein identity also holds for all \textit{complex} valued functions also. This is necessary due to the complex nature of the feature map in \eqref{quantumfeaturemap}.
\begin{theorem}[Difference Stein's Identity for complex valued functions, adapted from Theorem 2 in ref.\citeS{yang_goodness--fit_2018}]
\label{thm:complexdiscretesteinidentity}
For any function $\phi: \mathcal{X}^n\rightarrow \mathbb{C}$, and a probability mass function $p$ on $\mathcal{X}^n$:
\begin{equation}
    \underset{\mathbf{x}\sim p}{\mathbb{E}}[\mathcal{A}_p \phi(\mathbf{x})]= \mathbb{E}_{\mathbf{x} \sim p}\left[\mathbf{s}_p(\mathbf{x})\phi(\mathbf{x}) - \Delta \phi(\mathbf{x})\right] = 0 
    \label{complexdiscretesteinidentity_supp}
\end{equation}
\end{theorem}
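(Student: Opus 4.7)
The plan is to promote the real-valued discrete Stein identity of Theorem 2 in ref.\citeS{yang_goodness--fit_2018} to the complex setting purely by linearity. Writing $\phi = \phi_R + \mathrm{i}\phi_I$ with $\phi_R, \phi_I : \mathcal{X}^n \to \mathbb{R}$, the three ingredients of the identity (the score multiplication $\mathbf{s}_p(\mathbf{x})\phi(\mathbf{x})$, the partial difference $\Delta \phi(\mathbf{x})$, and the expectation $\mathbb{E}_{\mathbf{x}\sim p}[\,\cdot\,]$) are all $\mathbb{R}$-linear in $\phi$, hence $\mathbb{C}$-linear. Applying the original theorem to $\phi_R$ and $\phi_I$ separately and recombining with an imaginary unit gives \eqref{complexdiscretesteinidentity_supp} immediately.

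For completeness I would also present the direct component-wise calculation that the original theorem amounts to, verified to carry over to $\mathbb{C}$-valued $\phi$ verbatim. Fixing a coordinate $i$ and expanding the definitions of $\mathbf{s}_p$ and $\Delta_{x_i}$,
\begin{align*}
\sum_{\mathbf{x}} p(\mathbf{x}) \bigl[(\mathbf{s}_p(\mathbf{x}))_i \phi(\mathbf{x}) - \Delta_{x_i}\phi(\mathbf{x})\bigr]
&= \sum_{\mathbf{x}} \bigl[(p(\mathbf{x}) - p(\neg_i \mathbf{x}))\phi(\mathbf{x}) - p(\mathbf{x})(\phi(\mathbf{x}) - \phi(\neg_i \mathbf{x}))\bigr] \\
&= \sum_{\mathbf{x}} \bigl[p(\mathbf{x})\phi(\neg_i \mathbf{x}) - p(\neg_i \mathbf{x})\phi(\mathbf{x})\bigr].
\end{align*}
Reindexing the first sum via the bijection $\mathbf{x} \mapsto \revneg_i \mathbf{x}$ on $\mathcal{X}^n$ turns it into $\sum_{\mathbf{y}} p(\revneg_i \mathbf{y})\phi(\mathbf{y})$, which (after a further relabeling, and using that in the binary case $\neg_i = \revneg_i$ is an involution, as noted in the main text) cancels against the second sum. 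Summing over $i$ then yields the vector identity $\mathbb{E}_{\mathbf{x}\sim p}[\mathcal{A}_p\phi(\mathbf{x})] = 0$.

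The ``hard part'' is really a sanity check rather than a genuine obstacle: one must verify that the original proof of Yang et al.\ nowhere invokes positivity, ordering, or any conjugation or inner-product structure peculiar to $\mathbb{R}$. Inspection confirms that only linearity of finite sums and the fact that cyclic permutations on $\mathcal{X}^n$ are bijections are used, both of which are insensitive to whether $\phi$ takes values in $\mathbb{R}$ or $\mathbb{C}$. Establishing this extension is exactly what is needed to legitimately plug the complex quantum kernel $\kappa_Q$ from \eqref{quantumkernel_app} into the Stein kernel \eqref{weightedkernelbornmachine} and hence into the $\SD$ training pipeline, since the associated quantum feature map $\Phi:\mathbf{x}\mapsto \ket{\Phi(\mathbf{x})}$ takes values in a complex Hilbert space.
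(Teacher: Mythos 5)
Your proof is correct and follows essentially the same route as the paper's: both reduce the identity to a cancellation of finite sums over $\mathcal{X}^n$ obtained by reindexing through the bijection $\neg_i$ (using that $\neg_i = \revneg_i$ in the binary case), and both observe that the extension to $\mathbb{C}$-valued $\phi$ is immediate by $\mathbb{C}$-linearity. The only cosmetic difference is that the paper carries out the explicit decomposition $\phi = a + \mathrm{i}b$ and cancels the real and imaginary sums separately, whereas your direct component-wise calculation performs the same cancellation on the complex-valued $\phi$ without splitting it, which is marginally more streamlined.
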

\begin{proof}

Firstly, break the function $\phi$ into real and imaginary parts:

\begin{align}
    \phi(\mathbf{x}) &= a(\mathbf{x}) +\mathrm{i}b(\mathbf{x})\\
     \mathbb{E}[\mathcal{A}_p \phi(\mathbf{x})] &= \sum\limits_{\mathbf{x} \in \mathcal{X}^n}\left[\phi(\mathbf{x})\Delta p(\mathbf{x}) - \Delta^*\phi(\mathbf{x})\right]\\
     &= \sum\limits_{\mathbf{x} \in \mathcal{X}^k}\left[a(\mathbf{x})\Delta p(\mathbf{x}) - p(\mathbf{x})\Delta^*a(\mathbf{x})\right]+ \mathrm{i} \sum\limits_{\mathbf{x} \in \mathcal{X}^n}\left[b(\mathbf{x})\Delta p(\mathbf{x}) - p(\mathbf{x})\Delta^*b(\mathbf{x})\right] \label{finalproofterm}
\end{align}
For each term, $j$, the real parts are given by:

\begin{align}
    \sum\limits_{\mathbf{x} \in \mathcal{X}^n}a(\mathbf{x})\Delta_{x_j} p(\mathbf{x}) = \sum\limits_{\mathbf{x} \in \mathcal{X}^n}a(\mathbf{x})p(\mathbf{x}) - \sum\limits_{\mathbf{x} \in \mathcal{X}^n}a(\mathbf{x}) p(\neg_j\mathbf{x}) \label{proofterm1}\\
    \sum\limits_{\mathbf{x} \in \mathcal{X}^n} p(\mathbf{x})\Delta_{x_j}^*a(\mathbf{x}) = \sum\limits_{\mathbf{x} \in \mathcal{X}^n}p(\mathbf{x})a(\mathbf{x}) - \sum\limits_{\mathbf{x} \in \mathcal{X}^n}p(\mathbf{x}) a(\revneg_j\mathbf{x}) \label{proofterm2}
\end{align}
Since the sum is taken over all the sample space, $\mathcal{X}^n$, and $\neg_i(\revneg_i \mathbf{x}) = \mathbf{x}$, i.e.\@ $\neg$ and $\revneg$ are inverse operations (in fact they are equal in our case), \eqref{proofterm1} is equal to \eqref{proofterm2}, and so the real parts of \eqref{finalproofterm} cancel out. The same result hold for the imaginary parts, completing the proof.

\end{proof}

Furthermore, as in ref.\citeS{yang_goodness--fit_2018}, the functions can also be extended into complex valued \textit{vector} functions, with an analogue of the above result. In this case, for $\Phi: \mathcal{X}^n \rightarrow \mathbb{C}^m$, the discrete Stein Identity is:
\begin{equation}
    \mathbb{E}_{\mathbf{x}}\left[\mathcal{A}_p\Phi(\mathbf{x})\right] = \mathbb{E}_{\mathbf{x}}\left[\mathbf{s}_p(\mathbf{x})\Phi(\mathbf{x})^T - \Delta\Phi(\mathbf{x})\right] = \mathbf{0} \label{vectordiscretesteinidentity}
\end{equation}
where $\Delta\Phi(\mathbf{x})$ is an $n\times m$ matrix: $(\Delta\Phi)_{ij} = \Delta_{x_i}\phi_j(\mathbf{x})$, i.e.\@ shifting the $i^{th}$ element of the $j^{th}$ function value. Now we can reproduce the following Theorem:

\begin{theorem}[Theorem 7 in ref.\citeS{yang_goodness--fit_2018}]\label{thm:yangkerneldiscretestein}
The Discrete Kernelised $\SD$ is given by:
\begin{equation}
    \mathcal{L}_{\SD}(p, q)  \coloneqq D_S(p||q) = \mathbb{E}_{\mathbf{x}, \mathbf{y}\sim p}\left[\kappa_q(\mathbf{x}, \mathbf{y})\right] \label{steindiscrepancycost}
\end{equation}
where $\kappa_q$ is the Stein kernel:
\begin{equation}
    \kappa_q(\mathbf{x}, \mathbf{y})   \coloneqq s_q(\mathbf{x})^T\kappa(\mathbf{x}, \mathbf{y})s_q(\mathbf{y}) -s_q(\mathbf{x})^T\Delta_{\mathbf{y}}^*\kappa(\mathbf{x}, \mathbf{y})  - \Delta_{\mathbf{x}}^*\kappa(\mathbf{x}, \mathbf{y})^Ts_q(\mathbf{y}) + \tr(\Delta_{\mathbf{x}, \mathbf{y}}^*\kappa(\mathbf{x}, \mathbf{y}))
\end{equation}
\end{theorem}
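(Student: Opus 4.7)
The plan is to adapt the continuous kernelised Stein discrepancy derivation of ref.\citeS{liu_kernelized_2016} to the discrete setting, using the vector-valued discrete Stein identity from \eqref{vectordiscretesteinidentity} as the workhorse. Starting from the definition of the Stein discrepancy in \eqref{steindiscrepancy_supp}, I first apply Stein's identity (\theref{thm:complexdiscretesteinidentity}) to the distribution $p$ itself: since the RKHS test functions lie in the Stein class of $p$ (trivially, in the discrete setting, as noted in the excerpt), the term $\mathbb{E}_{p}[\mathcal{A}_p\phi]$ vanishes, collapsing the discrepancy to $\sup_{\phi\in\mathcal{F}}\bigl(\mathbb{E}_{p}[\mathcal{A}_q\phi]\bigr)^2$.

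Next I exploit the RKHS structure. Taking $\mathcal{F}$ to be the unit ball of a (product) RKHS with reproducing kernel $\kappa$, I use the reproducing property $\phi_i(\mathbf{x})=\langle \phi_i,\kappa(\mathbf{x},\cdot)\rangle_{\mathcal{H}}$ component-wise to push the Stein operator onto the kernel slot. Concretely, the key identity to establish is
\begin{equation}
\mathbb{E}_{\mathbf{x}\sim p}\bigl[\mathcal{A}_q\phi(\mathbf{x})\bigr] \;=\; \bigl\langle \phi,\; \mathbb{E}_{\mathbf{x}\sim p}[\mathcal{A}_q\kappa(\mathbf{x},\cdot)]\bigr\rangle_{\mathcal{H}},
\end{equation}
which follows by linearity of expectation, linearity of the discrete shift operator $\Delta$, and the reproducing property. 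Cauchy--Schwarz then gives the supremum in closed form,
\begin{equation}
D_S(p\|q)\;=\;\bigl\|\mathbb{E}_{\mathbf{x}\sim p}[\mathcal{A}_q\kappa(\mathbf{x},\cdot)]\bigr\|_{\mathcal{H}}^{2},
\end{equation}
attained at the normalised witness function $\phi^\star \propto \mathbb{E}_{\mathbf{x}\sim p}[\mathcal{A}_q\kappa(\mathbf{x},\cdot)]$.

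The final step is to expand this squared norm as a double expectation. Writing $\|v\|^2_{\mathcal{H}}=\langle v,v\rangle_{\mathcal{H}}$ and pulling the inner product inside both expectations, I obtain $\mathbb{E}_{\mathbf{x},\mathbf{y}\sim p}\langle \mathcal{A}_q\kappa(\mathbf{x},\cdot),\mathcal{A}_q\kappa(\mathbf{y},\cdot)\rangle_{\mathcal{H}}$. Substituting $\mathcal{A}_q\phi=\mathbf{s}_q\phi-\Delta^*\phi$ (with the conjugate/inverse shift arising when the operator acts on the second kernel argument) and applying the reproducing property once more distributes across four cross terms. These are precisely the four summands $\mathbf{s}_q(\mathbf{x})^T\kappa(\mathbf{x},\mathbf{y})\mathbf{s}_q(\mathbf{y})$, $-\mathbf{s}_q(\mathbf{x})^T\Delta_{\mathbf{y}}^*\kappa(\mathbf{x},\mathbf{y})$, $-\Delta_{\mathbf{x}}^*\kappa(\mathbf{x},\mathbf{y})^T\mathbf{s}_q(\mathbf{y})$ and $\tr(\Delta_{\mathbf{x},\mathbf{y}}^*\kappa(\mathbf{x},\mathbf{y}))$ appearing in the Stein kernel $\kappa_q(\mathbf{x},\mathbf{y})$.

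The main obstacle I expect is bookkeeping of the discrete difference operators: carefully tracking which argument of $\kappa$ each $\Delta$ acts on, and verifying that the forward versus inverse shift ($\Delta$ vs.\ $\Delta^*$) matches on each side when the operator is transferred between the test function and the kernel. In the continuous case this is handled transparently by integration by parts; here I must justify each step by the discrete summation-by-parts identity implicit in the proof of \theref{thm:complexdiscretesteinidentity}, namely that the sum over $\mathcal{X}^n$ is invariant under the cyclic permutation $\neg_i$, so $\sum_{\mathbf{x}} f(\mathbf{x})\Delta_{x_i} g(\mathbf{x}) = \sum_{\mathbf{x}} g(\mathbf{x})\Delta_{x_i}^* f(\mathbf{x})$. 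Once this discrete adjoint relation is established componentwise for the vector-valued extension \eqref{vectordiscretesteinidentity}, the remainder of the derivation is purely algebraic and mirrors the continuous proof.
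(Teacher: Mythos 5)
Your derivation is correct and is essentially the argument this paper relies on: the paper states the result without proof, importing it verbatim as Theorem~7 of the cited reference, whose derivation follows exactly your route --- the term $\mathbb{E}_{p}[\mathcal{A}_p\phi]$ vanishes by the discrete Stein identity, the reproducing property turns the surviving linear functional $\phi\mapsto\mathbb{E}_{\mathbf{x}\sim p}[\mathcal{A}_q\phi(\mathbf{x})]$ into an inner product with a Riesz representer, Cauchy--Schwarz evaluates the supremum over the unit ball, and expanding the squared RKHS norm as a double expectation yields the four terms of $\kappa_q$. The one place requiring care, which you already flag, is the vector-valued bookkeeping (tracing over the coordinates of the shift operator and checking the adjoint relation $\sum_{\mathbf{x}}f(\mathbf{x})\Delta_{x_i}g(\mathbf{x})=\sum_{\mathbf{x}}g(\mathbf{x})\Delta^{*}_{x_i}f(\mathbf{x})$ when transferring $\Delta$ onto the kernel); in the binary sample space used here the forward and inverse cyclic shifts coincide, so the $\Delta$ versus $\Delta^{*}$ distinction is vacuous and your sketch closes without further issue.
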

As long as the Gram matrix for the original kernel, $K_{ij} = \kappa(\mathbf{x}^i, \mathbf{y}^j)$ is positive definite, the kernel, $\kappa$, will be \textit{strictly} positive definite. Hence, if a given kernel on a discrete space gives rise to a positive definite Gram matrix, this kernel induces a valid discrepancy measure \citeS{yang_goodness--fit_2018}. Note that this criterion is exactly the same as that which makes the $\MMD$ a valid discrepancy measure in the discrete case\citeS{sriperumbudur_integral_2009}.  

While the $\SD$ is in a similar form to the $\MMD$ (written in terms of expectation values of kernels), there is one key difference. Namely, the score function, $s_q(\mathbf{y})$ in the worst case, requires computing the probabilities of the distribution we are trying to learn, $q(\mathbf{y})$. If we let $p_{\boldsymbol\theta}(\mathbf{x})$ be the output from the $\IBM$, and again, $\pi(\mathbf{x})$ is the data distribution over binary strings, we have the Stein cost function for the $\IBM$ given by:

\begin{align}
    \mathcal{L}_{\SD}(p_{\boldsymbol\theta}, \pi) &  \coloneqq D_S(p_{\boldsymbol\theta}||\pi)  = \mathbb{E}_{\mathbf{x}, \mathbf{y}\sim p_{\boldsymbol\theta}}\left[\kappa_\pi(\mathbf{x}, \mathbf{y})\right]\label{steindiscrepancybornmachine_supp}\\
    \kappa_\pi(\mathbf{x}, \mathbf{y}) &= s_\pi(\mathbf{x})^T\kappa(\mathbf{x}, \mathbf{y})s_\pi(\mathbf{y}) -s_\pi(\mathbf{x})^T\Delta_{\mathbf{y}}^*\kappa(\mathbf{x}, \mathbf{y})  - \Delta_{\mathbf{x}}^*\kappa(\mathbf{x}, \mathbf{y})^Ts_\pi(\mathbf{y}) + \tr(\Delta_{\mathbf{x}, \mathbf{y}}^*\kappa(\mathbf{x}, \mathbf{y})) \label{weightedkernelbornmachine_supp}
\end{align}
To perform gradient descent, we need the gradient of \eqref{weightedkernelbornmachine_supp} with respect to a given parameter, $\theta_k$. The derivation is very similar to the derivation of the $\MMD$ cost function gradient with respect to a parameter $\theta_k$.

\begin{align}
   \frac{\partial \mathcal{L}^{\theta}_{\SD}}{\partial \theta_k} 
    ={}& \sum\limits_{\mathbf{x}, \mathbf{y}} \frac{\partial p_{\boldsymbol\theta}(\mathbf{x})}{\partial \theta_k} \kappa_\pi(\mathbf{x}, \mathbf{y}) p_{\boldsymbol\theta}(\mathbf{y}) +  \sum\limits_{\mathbf{x}, \mathbf{y}} p_{\boldsymbol\theta}(\mathbf{x})\kappa_\pi(\mathbf{x}, \mathbf{y})   \frac{\partial p_{\boldsymbol\theta}(\mathbf{y})}{\partial \theta_k}\\
\begin{split}
    ={}& \sum\limits_{\mathbf{x}, \mathbf{y}}p^-_{\theta_k}(\mathbf{x})  \kappa_\pi(\mathbf{x}, \mathbf{y}) p_{\boldsymbol\theta}(\mathbf{y}) -  \sum\limits_{\mathbf{x}, \mathbf{y}}p^+_{\theta_k}(\mathbf{x})  \kappa_\pi(\mathbf{x}, \mathbf{y}) p_{\boldsymbol\theta}(\mathbf{y}) +\sum\limits_{\mathbf{x}, \mathbf{y}} p_{\boldsymbol\theta}(\mathbf{x})\kappa_\pi(\mathbf{x}, \mathbf{y})   p^-_{\theta_k}(\mathbf{y}) \\
    &  - \sum\limits_{\mathbf{x}, \mathbf{y}} p_{\boldsymbol\theta}(\mathbf{x})\kappa_\pi(\mathbf{x}, \mathbf{y})p^+_{\theta_k}(\mathbf{y})
\end{split}\\
  \frac{\partial \mathcal{L}^{\theta}_{\SD}}{\partial \theta_k} 
    ={}& \underset{\substack{\mathbf{x} \sim p^-_{\theta_k} \\ \mathbf{y}\sim p_{\boldsymbol\theta}}}{\mathbb{E}}[\kappa_\pi(\mathbf{x}, \mathbf{y})] - \underset{\substack{\mathbf{x} \sim p^+_{\theta_k} \\ \mathbf{y}\sim p_{\boldsymbol\theta}}}{\mathbb{E}}[\kappa_\pi(\mathbf{x}, \mathbf{y})] +\underset{\substack{\mathbf{x} \sim p_{\boldsymbol\theta}\\ \mathbf{y}\sim p^-_{\theta_k}}}{\mathbb{E}}[\kappa_\pi(\mathbf{x}, \mathbf{y})] - \underset{\substack{\mathbf{x} \sim p_{\boldsymbol\theta} \\ \mathbf{y}\sim p^+_{\theta_k}}}{\mathbb{E}}[\kappa_\pi(\mathbf{x}, \mathbf{y})] \label{steingradient_supp}
\end{align}
We have used \eqref{mmdprobabilitygradient} for the Ising Born machine gradient, \citeS{mitarai_quantum_2018, liu_differentiable_2018, schuld_evaluating_2019} and that the Stein kernel, $\kappa_{\pi}$, of \eqref{weightedkernelbornmachine} does not depend on the parameter, ${\theta_k}$.

The major difference between \eqref{steingradient_supp}, and the gradient of the $\MMD$, seen in \eqref{mmdgradient}, is the different kernel used, $\kappa_\pi$ vs.\@ simply $\kappa$. This clearly makes the Stein discrepancy more challenging to compute, but it also potentially gives it extra power as a distribution comparison mechanism. This fact is numerically reinforced in \figref{fig:MMDvSinkvStein3}. Also, in contrast to the $\MMD$, the $\SD$ in \eqref{steindiscrepancybornmachine} is asymmetric since the $\pi$ weighted kernel only depends on the distribution $\pi$.

The $\SD$ is computed between the instantaneous model distribution ($p_{\boldsymbol\theta}$) and the distribution to be learned, ($\pi$) by drawing $N = |B|$ samples from the $\IBM$, $\mathbf{x}\sim p_{\boldsymbol\theta}(\mathbf{x})$, and for each pair of samples, $(\mathbf{x}, \mathbf{y})$ the $\pi$-weighted kernel, $\kappa_\pi(\mathbf{x}, \mathbf{y})$ is computed. In doing so for a single pair, we must compute the original kernel, $\kappa$ as a subroutine. If the quantum kernel of \eqref{quantumkernel} is chosen, this requires\citeS{havlicek_supervised_2019} $O\left(\epsilon^{-2}N^4\right)$ measurements of the quantum circuit, as discussed in  \secref{supp_matt:kernelmethods_mmd}. Notice here that we do not need to compute the kernel with respect to the \textit{data samples}, $\mathcal{X}_{\text{Data}}$, since the dependency on $\pi$ in \eqref{weightedkernelbornmachine} is altered with respect to the $\MMD$.

However, adding to the computational burden, we also must compute the following `shifted' versions of the kernel for each pair of samples, $(\mathbf{x}, \mathbf{y})$, in both parameters:
\begin{equation}
    \Delta_\mathbf{x}\kappa(\mathbf{x}, \mathbf{y}) \qquad \Delta_\mathbf{y}\kappa(\mathbf{x}, \mathbf{y}) \qquad  \tr\Delta_{\mathbf{x}\mathbf{y}}\kappa(\mathbf{x}, \mathbf{y})
\end{equation}
which are required in \eqref{weightedkernelbornmachine_supp}. Let $K$ be a polynomial, where it takes $\mathcal{O}(K)$ time to compute the entire original kernel matrix:
\begin{equation}
    K = \mathcal{O}(\epsilon^{-2}N^4) \times T(n)
\end{equation}
where $T(n)$ is a polynomial in the size of the input $n$ (the number of qubits) describing the running time of the quantum circuit required to compute the kernel for \textit{one} pair of samples, $(\mathbf{x}, \mathbf{y})$. We now examine the efficiency of computing the shifted terms in \eqref{steindiscrepancybornmachine_supp} using the quantum kernel of \eqref{quantumkernel_app}.

For example:
\begin{equation}
    \Delta_{\mathbf{x}}\kappa(\mathbf{x}, \mathbf{y}) = (\kappa(\mathbf{x}, \mathbf{y}), \dots, \kappa(\mathbf{x}, \mathbf{y}))^T- (\kappa(\neg_1\mathbf{x}, \mathbf{y}), \dots, \kappa(\neg_n\mathbf{x}, \mathbf{y}))^T \label{xshiftedterm}
\end{equation}
We assume $\kappa(\mathbf{x}, \mathbf{y})$ has been computed for a single pair of samples in time $\mathcal{O}(\epsilon^{-2}N^2 \times T(n))$, and we need to compute $\kappa(\neg_i\mathbf{x}, \mathbf{y})$ for $i = \{1,\dots, n\}$. Therefore, computing the shifted kernel operator in a single parameter takes $\mathcal{O}(\epsilon^{-2}N^2 \times T(n)\times (n+1))$. The same holds for the kernel gradient with respect to the second argument, $\Delta_\mathbf{y}\kappa(\mathbf{x}, \mathbf{y})$.
For $\Delta_{\mathbf{x}, \mathbf{y}}\kappa(\mathbf{x}, \mathbf{y})$, the process is slightly more involved because:

\begin{align}
    \tr\Delta_{\mathbf{x}, \mathbf{y}}\kappa(\mathbf{x}, \mathbf{y}) &= \tr\Delta_{\mathbf{x}}[ \Delta_{\mathbf{y}}\kappa(\mathbf{x}, \mathbf{y})] = \tr\Delta_{\mathbf{x}}[\kappa(\mathbf{x}, \mathbf{y}) - \kappa(\mathbf{x}, \neg\mathbf{y})]  \\
    &= n\kappa(\mathbf{x}, \mathbf{y}) - \sum\limits_{i=1}^n\kappa(\mathbf{x}, \neg_i\mathbf{y}) - \sum\limits_{i=1}^n\kappa(\neg_i\mathbf{x}, \mathbf{y}) + \sum\limits_{i=1}^n\kappa(\neg_i\mathbf{x}, \neg_i\mathbf{y})
\end{align}
Each individual term in the respective sums requires the same complexity, i.e.\@ $\mathcal{O}(\epsilon^{-2}N^2 \times T(n))$ so the term $\tr\Delta_{\mathbf{x}, \mathbf{y}}\kappa(\mathbf{x}, \mathbf{y})$ overall requires $\mathcal{O}(\epsilon^{-2}N^2 \times T(n)\times (3n+1))$. 

Therefore, each term in $\kappa_\pi$ can be computed efficiently using the quantum kernel, \eqref{quantumkernel_app}. That is, with the exception of the score function $s_\pi$ for the data distribution. If we are given oracle access to the probabilities, $\pi(\mathbf{y})$, then there is no issue and $\SD$ will be computable. Unfortunately, in any practical application this will not be the case. To deal with such a case, in \appref{supp_matt:steinscoremethod}, we give two approaches to approximate the Score function via samples from $\pi$. We call these methods the `Identity', and `Spectral' methods for convenience. We only use the Spectral method in training the $\IBM$ in this work, since the former method does not give an immediate out-of-sample method to compute the score, as discussed further in \appref{supp_matt:spectralsteinscore}. Notice that even with the hurdle (difficulty in compute the score), the $\SD$ is still more suitable than the $\KL$ divergence to train these models, since the latter requires computing the \textit{circuit} probabilities, $p_{\boldsymbol\theta}(\mathbf{x})$, which is in general intractable, and so could not be done for \textit{any} dataset. In contrast, the Stein Discrepancy does not require the circuit probabilities, only the data probabilities, which may make it amenable for generative modelling using some datasets.

\subsection{Computing the Stein Score Function} 
\label{supp_matt:steinscoremethod}

Here we address the computability of the score function of \eqref{discretescorefunction}. For every sample, $\mathbf{x}\sim p_{\boldsymbol\theta}$, that we receive from the Born machine we require the score function of that outcome being outputted from the data distribution, $\mathbf{x} \sim \pi$. This involves computing $\pi(\mathbf{x})$, and also $\Delta_{\mathbf{x}}\pi(\mathbf{x})$, i.e.\@ $\pi(\neg_i\mathbf{x}), \forall i \in\{1, \dots, n\}$.

Of course, the Stein discrepancy can be immediately used for any classical problem (classical datasets) for which the probability density is already known, or can be computed efficiently. However, if one is interested in \textit{implicit} models: models which admit a means of sample generation, but not explicit access to the probability density, then this not immediate. In particular, for those distributions which admit some complexity theory result indicating that they cannot be simulated on a classical device efficiently, it will not be possible to efficiently compute the probabilities required in order to compute the score. Implicit models are also present in the classical domain, for example in generative adversarial networks, \citeS{mohamed_learning_2016, diggle_monte_1984}, and it is of great interest to find methods of dealing with them efficiently.

Here we present two approaches\citeS{li_gradient_2018, shi_spectral_2018}, to compute approximations to the score function, and their application in this specific circumstance. In all the following, we assume it is the score of the \text{data}, $\pi$, which we want to compute. Of course, the most obvious approach to computing the score, using $M$ samples alone, would be to simply accumulate the empirical distribution which is observed by the samples, $\hat{\pi}(\mathbf{x}^m) = \frac{1}{M}\sum_{m = 1}^M\mathbb{I}(\mathbf{x} = \mathbf{x}^m)$ and compute the score from this distribution. However, this immediately has a severe drawback. Since the score for a given outcome, $s_{\pi}(\mathbf{x}^m)$, requires \textit{also} computing the probabilities of all shifted samples, $\pi(\neg_i\mathbf{x}^m) ~ \forall i$, if we have not seen any of the outcomes $\neg_i\mathbf{x}^m$ in the observed data, we will not have values for these outcomes in the empirical distribution, and hence we cannot compute the score. This would be a major issue as the number of qubits in our system grows, since we will have exponentially many outcomes, many of which we will not see with $poly(n)$ samples.

\subsubsection{Identity Approximation of Stein Score} \label{app:identitysteinscore}
As a first attempt, we shall try the method of ref.\citeS{li_gradient_2018}. This  involves noticing that the score function appears in Stein identity, and inverting Stein's identity gives a procedure to approximate the score.

Of course, we shall need to use the discrete version of Stein's identity in our case, and rederive the result of ref.\citeS{li_gradient_2018} but there are no major differences. Hence, we refer to this method as the `Identity' method.
If we have generated $M$ samples from the data distribution, we denote the score matrix, $G$, at each of those sample points as follows: 
\begin{equation}
G^{\pi}  \coloneqq \left(\begin{array}{cccc}
    \mathbf{s}^1_\pi(\mathbf{x}^1)  & \mathbf{s}^1_\pi(\mathbf{x}^2)  &\dots &\mathbf{s}^1_\pi(\mathbf{x}^M)   \\
     \mathbf{s}^2_\pi(\mathbf{x}^1)  & \mathbf{s}^2_\pi(\mathbf{x}^2)  &\dots &\mathbf{s}^2_\pi(\mathbf{x}^M)     \\
         \vdots & \vdots &\ddots &\vdots   \\
    \mathbf{s}^n_\pi(\mathbf{x}^1)  & \mathbf{s}^n_\pi(\mathbf{x}^2)  &\dots &\mathbf{s}^n_\pi(\mathbf{x}^M)    \\
\end{array}\right)\label{steinscorematrix}
\qquad G^{\pi}_{i,j} = \mathbf{s}^i_\pi(\mathbf{x}^j) = \frac{\Delta_{x_i^j}\pi(\mathbf{x}^j)}{\pi(\mathbf{x}^j)}
\end{equation}
Each column is the term which corresponds to the score function for the distribution, $\pi$, and that given sample.

Now, to compute, $\hat{G}^{\pi} \approx G^{\pi}$ we can invert the discrete version of Stein's Identity, \eqref{vectordiscretesteinidentityforscoreinAPP}, which is similar to the approach of ref.\citeS{li_gradient_2018} which covers the continuous case:
\begin{equation}
    \underset{\mathbf{x}\sim \pi}{\mathbb{E}}[\mathbf{s}_{\pi}(\mathbf{x})\Phi(\mathbf{x})^T - \Delta \mathbf{f}(\mathbf{x})] = \mathbf{0} \label{vectordiscretesteinidentityforscoreinAPP}
\end{equation}
where $\mathbf{f}$ is a complex vector valued function (for now, we will specify it later).

Rearranging (\ref{vectordiscretesteinidentityforscoreinAPP}) in terms of the score function, and following ref.\citeS{li_gradient_2018}:
\begin{equation}
    \underset{\mathbf{x}\sim \pi}{\mathbb{E}}[\mathbf{s}_\pi\phi(\mathbf{x})^T] = \underset{\mathbf{x}\sim \pi}{\mathbb{E}} [\Delta \mathbf{f}(\mathbf{x})] \\
    \implies \sum\limits_{\mathbf{x}}\pi(\mathbf{x})\mathbf{s}_\pi(\mathbf{x})\mathbf{f}(\mathbf{x})^T = \sum\limits_{\mathbf{x}}\pi(\mathbf{x})\Delta \mathbf{f}(\mathbf{x})
\end{equation}
Approximating with $M$ samples, from the data distribution, $\pi(\mathbf{x})$:
\begin{equation}
    \implies \sum\limits_{i=1}^{M}\mathbf{s}_\pi(\mathbf{x}^i)\mathbf{f}(\mathbf{x}^i)^T \approx \frac{1}{M}\sum\limits_{i=1}^{M}\Delta_{\mathbf{x}^i} \mathbf{f}(\mathbf{x}^i)
\end{equation}
Defining:

\begin{align}
   F &\coloneqq [\mathbf{f}(\mathbf{x}^1), \mathbf{f}(\mathbf{x}^2), \dots, \mathbf{f}(\mathbf{x}^M)]^T,\qquad
   \hat{G}^\pi \coloneqq  [\mathbf{s}^\pi(\mathbf{x}^1), \mathbf{s}^\pi(\mathbf{x}^2), \dots, \mathbf{s}^\pi(\mathbf{x}^M))]^T,\\
    \overline{\Delta_{\mathbf{x}}\mathbf{f}} &= \frac{1}{M}\sum_{i=1}^M \Delta_{\mathbf{x}^i}\mathbf{f}(\mathbf{x}^i), \qquad \Delta_{\mathbf{x}^i}\mathbf{f}(\mathbf{x}^i) \coloneqq   [\Delta_{\mathbf{x}^i}f_1(\mathbf{x}^i),\dots,\Delta_{\mathbf{x}^i}f_l(\mathbf{x}^i)]^T
\end{align}
Now the optimal value for the approximate Stein Matrix, $\hat{G}^\pi$ will be the solution to the following ridge regression problem, and adding a regularisation term, with parameter, $\eta$, to avoid the matrix being non-singular:
\begin{equation}
    \hat{G}^{\pi} = \argmin_{\hat{G}^{\pi} \in \mathbb{R}^{M\times n}}||\overline{\Delta_{\mathbf{x}}\mathbf{f}} - \frac{1}{M}F\hat{G}^{\pi}||_F^2 + \frac{\eta}{M^2}||\hat{G}^{\pi}||_F^2 \label{redregprob_supp}
\end{equation}
Where $||\cdot||_F$ is the Frobenius norm: $||A||_F = \sqrt{\tr\left(A^TA\right)}$.
The analytic solution of this ridge regression problem is well known and can be found by differentiating the above \eqref{redregprob_supp} with respect to $\hat{G}^{\pi}$ and setting to zero:

\begin{align}
    \hat{G}^{\pi} &=  M(K+\eta\mathds{1})^{-1}F^T\overline{\Delta_{\mathbf{x}}\mathbf{f}}\\
    \hat{G}^{\pi} &=  M(K+\eta\mathds{1})^{-1}\langle\Delta, K\rangle \label{approxscore_liderivation}
\end{align}
The method of ref.\citeS{li_gradient_2018} involves implicitly setting the test function to be a feature map in a RKHS, $\mathbf{f} = \Phi$. If this is the case we get $K = F^TF$, and also $\langle \Delta, K\rangle_{ab} = \frac{1}{M}\sum_{i=1}^M \Delta_{x^i_b}\kappa(\mathbf{x}^a, \mathbf{x}^i)$. Unfortunately, there is no motivation given in ref.\citeS{li_gradient_2018} for which choice of feature map should be used to compute \eqref{approxscore_liderivation}. For example, we could use the quantum feature map  of \eqref{quantumkernel_app}, the mixture of Gaussians kernel, \eqref{gaussiankernel}, or the exponentiated Hamming kernel, which is suggested as a sensible kernel to use in (binary) discrete spaces by ref. \citeS{yang_goodness--fit_2018}:

\begin{align}
    \kappa_H(\mathbf{x}, \mathbf{y}) & \coloneqq \exp\left(-H(\mathbf{x}, \mathbf{y})\right)\\
    H(\mathbf{x}, \mathbf{y}) & \coloneqq \frac{1}{n} \sum\limits_{i=1}^n|x_i - y_i| \label{hammingkernel}
\end{align}
Any of these kernels could be used, since the only requirement on the above method is that the feature map obeys the discrete Stein identity, which we have seen is the case for \textit{any} complex vector valued function.

\subsubsection{Spectral Approximation of Stein Score\label{supp_matt:spectralsteinscore}}

While the method used to approximate the score function method which was shown in \appref{app:identitysteinscore} is straightforward, it does not give a method of computing the score accurately at sample points which have \textit{not} been seen in the data distribution, $\pi$. This is a problem if, for instance, we come across a sample from the \IBM, which has not been seen in the data set. Again, this becomes exponentially more likely as the number of qubits grows. If this were to occur during training, a possible solution\citeS{li_gradient_2018} is simply to add that sample to the sample set, and recompute the score function by the Identity method. However, this is expensive, so more streamlined approaches would be desirable. Worse still, this tactic would potentially introduce bias to the data, since there is no guarantee that the given sample from the Born machine, does not have zero probability in the true data, and hence would \textit{never} occur. 

The approach we take is that of ref.\citeS{shi_spectral_2018}, which uses the Nystr{\"o}m method as a subroutine to approximate the score, which is a technique to approximately solve integral equations \citeS{nystrom_uber_1930}. It works by finding eigenfunctions of a given kernel with respect to the target probability mass function, $\pi$. As in the case of ref.\citeS{li_gradient_2018}, the method was defined when $\pi$ is a continuous probability measure, and as such we must make suitable alterations to adapt it to the discrete setting. We refer to this method as the `Spectral' method to compute the score. 

We summarise the parts of ref.\citeS{shi_spectral_2018} which are necessary in the discretisation. For the most part the derivation follows cleanly from ref.\citeS{shi_spectral_2018}, and from \appref{app:identitysteinscore}. Firstly, the eigenfunctions in question are given by the following summation equation:
\begin{equation}
    \sum_{\mathbf{y}}\kappa(\mathbf{x}, \mathbf{y})\psi_j(\mathbf{y})\pi(\mathbf{y}) = \mu \psi_j(\mathbf{x}) \label{nystromsummationequation}
\end{equation}
where $\{\psi_j\}_{j= 1}^N \in \ell^2(\mathcal{X}, \pi)$, and $\ell^2(\mathcal{X}, \pi)$ is the space of all square-summable sequences with respect to $\pi$, over the discrete sample space, $\mathcal{X}$. If the kernel is a quantum one, as in \eqref{quantumkernel_app}, the feature space has a basis, $\{\psi_j = \braket{s_j|\psi}\}_{j= 1}^N \in \ell^2(\mathcal{X}, \pi)$, where $\ket{s_j}$ are for example computational basis states. We also have the constraint that these functions are orthonormal under the discrete $\pi$:
\begin{equation}
    \sum_{\mathbf{x}}\psi_i(\mathbf{x})\psi_j(\mathbf{x}) \pi(\mathbf{x}) = \delta_{ij} \label{nystromorthomal}
\end{equation}
Approximating \eqref{nystromsummationequation} by a Monte-Carlo estimate drawn with $M$ samples, and finding the eigenvalues and eigenvectors of the covariance kernel matrix, $K_{ij} = \kappa(\mathbf{x}^i, \mathbf{y}^j)$, in terms of the approximate ones given by the Monte-Carlo estimate, exactly as in ref.\citeS{shi_spectral_2018}, we get:
\begin{equation}
    \psi_j(\mathbf{x}) \approx \hat{\psi}_j(\mathbf{x}) = \frac{\sqrt{M}}{\lambda_i}\sum\limits_{m = 1}^M u_j(\mathbf{x}^m)\kappa(\mathbf{x}, \mathbf{x}^m) \label{nystromeigenfunctions}
\end{equation}
$\{u_j\}_{j = 1, \dots J}$ are the $J^{th}$ largest eigenvalues of the kernel matrix, $K$, with eigenvalues, $\lambda_j$. The true eigenfunctions are related to these `sampled' versions by: $\psi_j(\mathbf{x}^m) \approx \sqrt{M} u_{jm} \forall m \in\{1, \dots, M\}, \mu_j \approx \lambda_j/M$.

Assuming that the discrete score functions are square summable with respect to $\pi$, i.e.\@ $s^i(\mathbf{x}) \in \ell^2(\mathcal{X}, \pi)$, we can expand the score in terms of the eigenfunctions of the $\ell^2(\mathcal{X}, \pi)$:
\begin{equation}
    s^i(\mathbf{x}) = \sum\limits^{N}_{j=1}\beta_{ij}\psi_j(\mathbf{x}) \label{scoreexpansionineigenbasis}
\end{equation}
Since the eigenfunctions, $\psi_j$ are complex valued, they automatically obey the discrete Stein's identity \eqref{complexdiscretesteinidentity_supp} and we get the same result as ref.\citeS{shi_spectral_2018}:
\begin{equation}
    \beta_{ij} = -\mathbb{E}_{\pi}\Delta_{x_i}\psi_j(\mathbf{x})
\end{equation}
Proceeding\citeS{shi_spectral_2018}, we apply the discrete shift operator, $\Delta_{x_i}$, to both sides of \eqref{nystromsummationequation} to give an approximation for the term, $\hat{\Delta}_{x_i}\psi(\mathbf{x})  \approx \Delta_{x_i}\psi(\mathbf{x})$:
\begin{equation}
    \hat{\Delta}_{x_i}\psi(\mathbf{x}) = \frac{1}{\mu_j M}\sum\limits_{m=1}^M\Delta_{x_i}\kappa(\mathbf{x}, \mathbf{x}^m) \label{approxeigenfunctionshifts}
\end{equation}
It can also be shown in this case that $\hat{\Delta}_{x_i}\psi(\mathbf{x}) \approx \Delta_{x_i}\hat{\psi}(\mathbf{x})$, by comparing \eqref{approxeigenfunctionshifts} with \eqref{nystromeigenfunctions}, and hence we arrive at the estimator for the score function:
\begin{equation}
    \hat{s}^i(\mathbf{x}) = \sum\limits_{j=1}^J \hat{\beta}_{ij}\hat{\psi}_j(\mathbf{x}) \hat{\beta}_{ij} = -\frac{1}{M}\Delta_{x_i}\hat{\Psi}_j(\mathbf{x}^m) \label{spectralestimationfunctions}
\end{equation}
If the sample space is the space of binary strings of length $n$, the number of eigenfunctions, $N$ will be exponentially large, $N = 2^n$, and so the sum in \eqref{scoreexpansionineigenbasis} is truncated to only include the $J^{th}$ largest eigenvalues and corresponding eigenvectors.

\section{Sinkhorn Divergence \& Sample Complexity \label{supp_matt:sinkhorn}}
In this section, we provide further detail about the Sinkhorn divergence used in the main text, and provide proofs of its sample complexity; specifically the mean error and the concentration bounds in \eqref{sinkhorn_expectation_sample_chosen_MAIN} and \eqref{sinkhornborn_samplecomplexity_choosed_MAIN} respectively.

It seems natural to expect that a stronger metric would provide better results than the $\MMD$ for generative modelling. Firstly, we may consider the Kantorovich Metric, \eqref{kantorovichmetric}. The Kantorovich metric happens to be a special case of so-called \textit{optimal transport} ($\OT$) due to the famous  Kantorovich-Rubinstein duality\citeS{dudley_real_2002} which reveals a connection to the Wasserstein metric, as mentioned in the main text. The solution of the `optimal transport' problem gives the optimal way to move, or transport, probability mass from one distribution to another, and hence gives a means of determining the similarity of distributions. 

The optimal transport distance is given by:
\begin{equation}
\OT^c(p, q)  \coloneqq \min\limits_{U \in \mathcal{U}(p, q)}\sum\limits_{(\mathbf{x}, \mathbf{y}) \in \mathcal{X}\times\mathcal{Y}} c(\mathbf{x}, \mathbf{y}) U(\mathbf{x}, \mathbf{y}) \label{otdistance_supp}
\end{equation}
where $p, q$ are the marginal distributions of $U$, i.e.\@\@ $\mathcal{U}(p, q)$ is the space of joint distributions over $\mathcal{X}\times\mathcal{Y}$ such that $\sum_{\mathbf{x}}U(\mathbf{x}, \mathbf{y}) = q(\mathbf{y}), \sum_{\mathbf{y}}U(\mathbf{x}, \mathbf{y}) = p(\mathbf{x})$, in the discrete case. $c(\mathbf{x}, \mathbf{y})$ is the `\textit{cost}' of transporting an individual `point', $\mathbf{x}$, to another point $\mathbf{y}$. It plays a somewhat similar role to the kernel function in the $\MMD$. If we take the optimal transport `cost', to be a metric on the sample space, $\mathcal{X}\times \mathcal{Y}$, i.e.\@ $c(\mathbf{x}, \mathbf{y}) = d(\mathbf{x}, \mathbf{y})$ we get the Wasserstein metric, which turns out to be equivalent to  the Kantorovich metric, revealing the connection to the IPMs discussed in \appref{supp_matt:cf_and_ipms}:
\begin{equation}
W^d(p, q)  \coloneqq \min\limits_{U \in \mathcal{U}(p, q)}\sum\limits_{(\mathbf{x}, \mathbf{y}) \in \mathcal{X}\times \mathcal{Y}} d(\mathbf{x}, \mathbf{y}) U(\mathbf{x}, \mathbf{y}) \label{1wasserstein_supp}
\end{equation}
The Wasserstein metric has become very popular in classical ML literature, for example in the definition of the Wasserstein Generative Adversarial Network\citeS{arjovsky_wasserstein_2017} (WGAN) which utilises it. GANs are alternative generative models, which train \textit{adversarially} via a competition between two neural networks, one \textit{discriminator}, and one \textit{generator}. Quantum models of GANs have also been defined, for example in ref.\citeS{lloyd_quantum_2018, dallaire-demers_quantum_2018}, and have been implemented experimentally \citeS{hu_quantum_2019}. 

However, the Wasserstein metric does suffer from a severe drawback. In $n$ dimensions, its sample complexity scales as $\mathcal{O}(1/M^{1/n})$, while its computational complexity os also high. As a remedy to this, \textit{regularisation} was introduced in order to smooth the problem, and ease computation, which is a standard technique in ML to prevent overfitting.

As noted in ref.\citeS{cuturi_sinkhorn_2013}, one regularised version of optimal transport introduces an entropy term, in the form of the $\KL$ divergence as follows, repeated from \eqref{wassersteinregularised}, with a parameter $\epsilon~>~0$:

\begin{align}
\OT^c_\epsilon(p, q) & \coloneqq  \min\limits_{U \in \mathcal{U}(p, q)}\left(\sum\limits_{(\mathbf{x}, \mathbf{y}) \in \mathcal{X}\times\mathcal{Y}} c(\mathbf{x}, \mathbf{y})U(\mathbf{x}, \mathbf{y}) + \epsilon \KL(U|p \otimes q)\right) \label{wassersteinregularised_supp}\\
\KL(U|p \otimes q) &\equiv \sum\limits_{(\mathbf{x}, \mathbf{y}) \in \mathcal{X}\times\mathcal{Y}} U(\mathbf{x}, \mathbf{y}) \log\left(\frac{U(\mathbf{x}, \mathbf{y})}{(p\otimes q )(\mathbf{x}, \mathbf{y})}\right)
\end{align}
It turns out that the regularised version allows a cost function to be defined, which interpolates between Wasserstein and the $\MMD$. As such, it allows one to take advantage of the small sample complexity, and therefore ease of computability of the $\MMD$, but the desirable properties of the Wasserstein distance.

The relative entropy term serves to determine how distant the coupling distribution, $\pi$, is from a product distribution $p\otimes q$,  and effectively smooths the problem, such that it becomes more efficiently solvable. Based on this, refs.\citeS{genevay_learning_2018, feydy_interpolating_2019} define the \textit{Sinkhorn divergence} ($\SH$) which we can appropriate for the $\IBM$:
\begin{equation}
    \mathcal{L}_{\SH}^\epsilon(p_{\boldsymbol\theta}, \pi)  \coloneqq \OT^c_\epsilon(p_{\boldsymbol\theta}, \pi) - \frac{1}{2} \OT^c_\epsilon(p_{\boldsymbol\theta}, p_{\boldsymbol\theta}) -\frac{1}{2}\OT^c_\epsilon(\pi, \pi) \label{sinkhorndivergence_supp}
\end{equation}
As mentioned in the main text, for extreme values of $\epsilon$, we recover both the $\MMD$ and unregularised optimal transport. The extra terms in \eqref{sinkhorndivergence_supp} relative to \eqref{otdistance_supp} ensure the Sinkhorn divergence is unbiased, since $\OT^c_\epsilon(p, p) \neq 0$ in general.

Now, similarly to the previous two cost functions, we can derive gradients of the Sinkhorn Divergence, with respect to the given parameter, ${\boldsymbol\theta}_k$. According to ref.\citeS{feydy_interpolating_2019}, each term in \eqref{sinkhorndivergence_supp} can be written as follows:
\begin{equation}
    \OT^c_\epsilon(p_{\boldsymbol\theta}, \pi) = \langle p_{\boldsymbol\theta}, f \rangle +  \langle \pi, g \rangle  \\
     =\sum\limits_\mathbf{x} p_{\boldsymbol\theta}(\mathbf{x})f(\mathbf{x}) + \pi(\mathbf{x})g(\mathbf{x})
\end{equation}
$f$ and $g$ are the so-called optimal Sinkhorn potentials, arising from a primal-dual formulation of optimal transport. These are computed using the Sinkhorn algorithm, which gives the divergence its name\citeS{sinkhorn_relationship_1964}. These vectors will be initialised at $f^0(\mathbf{x}) = 0 = g^0(\mathbf{x})$, and iterated in tandem according to \eqref{sinkhorndualvectors_1_supp} and \eqref{sinkhorndualvectors_2_supp} for a fixed number of `Sinkhorn iterations' until convergence. The number of iterations required will depend on the value of $\epsilon$, and the specifics of the problem. Typically, smaller values of epsilon will require more iterations, since this is bringing the problem closer to unregularised optimal transport, which is more challenging to compute. For further discussions on regularised optimal transport and its dual formulation, see refs.\citeS{feydy_interpolating_2019, peyre_computational_2018}. Now, following ref. \citeS{feydy_interpolating_2019}, the $\SH$ can be written on a discrete space as:
\begin{equation}
    \mathcal{L}_{\SH}^\epsilon(p_{\boldsymbol\theta}, \pi) = \sum\limits_\mathbf{x} \left[p_{\boldsymbol\theta}(\mathbf{x})\left(f(\mathbf{x}) - s(\mathbf{x})\right)
    + \pi(\mathbf{x})\left(g(\mathbf{x}) - t(\mathbf{x})\right)\right] \label{sinkhorndualformulation_supp}
\end{equation}
$s, t$ are the `\textit{autocorrelation}' dual potentials, arising from the terms $\OT^c_\epsilon(p_{\boldsymbol\theta}, p_{\boldsymbol\theta})$, $\OT^c_\epsilon(\pi, \pi)$ in \eqref{sinkhorndivergence_supp}.

Following ref.\citeS{feydy_interpolating_2019}, we can see how by discretising the situation based on $N, M$ samples from $p_{\boldsymbol\theta}, \pi$ respectively; $ \hat{\mathbf{x}} = \{\mathbf{x}^1, \dots, \mathbf{x}^N\}\sim p_{\boldsymbol\theta}(\mathbf{x}),  \hat{\mathbf{y}} =\{\mathbf{y}^1, \dots, \mathbf{y}^M\} \sim \pi(\mathbf{y})$. With this, the optimal dual vectors, $f, g$ are given by:
\begin{align}
    f^{l+1}(\mathbf{x}^i) &= -\epsilon \text{LSE}_{k=1}^M\left(\log\left(\pi(\mathbf{\mathbf{y}}^k) + \frac{1}{\epsilon}g^{l}(\mathbf{y}^k) - \frac{1}{\epsilon} C_{ik}(\mathbf{x}^i, \mathbf{y}^k)\right)\right)\label{sinkhorndualvectors_1_supp}\\
    g^{l+1}(\mathbf{y}^j) &= -\epsilon \text{LSE}_{k=1}^N\left(\log\left(p_{\boldsymbol\theta}(\mathbf{x}^k) + \frac{1}{\epsilon}f^{l}(\mathbf{x}^k) - \frac{1}{\epsilon} C_{kj}(\mathbf{x}^k, \mathbf{y}^j)\right)\right) \label{sinkhorndualvectors_2_supp}
\end{align}
$C(\mathbf{x}, \mathbf{y})$ is the so-called optimal transport \textit{cost matrix} derived from the cost function applied to all samples, $C_{ij}(\mathbf{x}^i, \mathbf{y}^j) = c(\mathbf{x}^i, \mathbf{y}^j)$ and $\text{LSE}_{k=1}^N(\mathbf{V}_k) = \log\sum\limits_{k=1}^N\exp(\mathbf{V}_k)$ is a log-sum-exp reduction for a vector $\mathbf{V}$, used to give a smooth approximation to the true dual potentials.

The autocorrelation potential, $s$, is given by:
\begin{equation}
    s(\mathbf{x}^i) = -\epsilon \text{LSE}_{k=1}^N\left(\log\left(p_{\boldsymbol\theta}(\mathbf{\mathbf{x}}^k) + \frac{1}{\epsilon}s(\mathbf{x}^k) - \frac{1}{\epsilon} C(\mathbf{x}^i, \mathbf{x}^k)\right)\right) \label{autocorrelationsinkhornterms_supp}
\end{equation}
$t(\mathbf{y}^i)$ can be derived similarly by replacing $p_{\boldsymbol\theta} \rightarrow \pi$ in \eqref{autocorrelationsinkhornterms_supp} above. However, the autocorrelation dual can be found using a well-conditioned fixed point update \citeS{feydy_interpolating_2019}, and convergence to the optimal potentials can be observed with much fewer Sinkhorn iterations:
\begin{equation}
    s(\mathbf{x}^i) \leftarrow \frac{1}{2}\left[s(\mathbf{x}^i)-\epsilon \text{LSE}_{k=1}^N\left(\log\left(p_{\boldsymbol\theta}(\mathbf{x}^k) + \frac{1}{\epsilon}s(\mathbf{x}^k) - \frac{1}{\epsilon} C(\mathbf{x}^i, \mathbf{x}^k)\right)\right)\right] \label{autocorrelationsinkhorntermsupdate_supp}
\end{equation}

Next, we derive the gradient of $\mathcal{L}^\epsilon_{\SH}$, given in \eqref{sinkhorngradient} in the main text. The derivative with respect to a single probability of the \textit{observed} samples, $p_{\boldsymbol\theta}(\mathbf{x}^i)$, is given by \citeS{feydy_interpolating_2019}:
\begin{equation}
    \frac{\partial \mathcal{L}_{\SH}^\epsilon(p_{\boldsymbol\theta}, \pi)}{\partial p_{\boldsymbol\theta}(\mathbf{x}^i)} = f(\mathbf{x}^i) - s(\mathbf{x}^i)
\end{equation}
However, this only applies to the samples which have been \textit{used} to compute $f, s$ in the first place. If one encounters a sample from $p_{{\boldsymbol\theta}_k^{\pm}}$ (which we shall in the gradient), $\mathbf{x}^s$, which one has not seen in the original samples from $p_{\boldsymbol\theta}$, one has no value for the corresponding vectors at this point: $f(\mathbf{x}^s), s(\mathbf{x}^s)$. Fortunately, as shown in ref.\citeS{feydy_interpolating_2019}, the gradient does extend smoothly to this point (and all points in the sample space) and in general is given by:
\begin{equation}
    \frac{\partial \mathcal{L}_{\SH}^\epsilon(p_{\boldsymbol\theta}, \pi)}{\partial p_{\boldsymbol\theta}(\mathbf{x})} = \varphi(\mathbf{x}) 
\end{equation}
\begin{multline}
    \varphi(\mathbf{x}) = -\epsilon \text{LSE}_{k=1}^M\left(\log\left(\pi(\mathbf{\mathbf{y}}^k) + \frac{1}{\epsilon}g^{0}(\mathbf{y}^k) - \frac{1}{\epsilon} C(\mathbf{x}, \mathbf{y}^k)\right)\right)\\
    + \epsilon \text{LSE}_{k=1}^N\left(\log\left(p_{\boldsymbol\theta}(\mathbf{x}^k) + \frac{1}{\epsilon}s^{0}(\mathbf{x}^k) - \frac{1}{\epsilon} C(\mathbf{x}, \mathbf{x}^k)\right)\right)
\end{multline}
where $g^{(0)}, s^{(0)}$, are the optimal vectors which solve the original optimal transport problem, \eqref{sinkhorndualvectors_2_supp} and \eqref{autocorrelationsinkhornterms_supp} at convergence, given the samples, $\hat{\mathbf{x}}, \hat{\mathbf{y}}$ from $p_{\boldsymbol\theta}, \pi$ respectively.
Given this, the gradient of the Sinkhorn divergence with respect to the parameters, ${\boldsymbol\theta}_k$ is given by:

\begin{align}
    \frac{\partial \mathcal{L}_{\SH}^\epsilon(p_{\boldsymbol\theta}, \pi)}{\partial {\boldsymbol\theta}_k} &= \sum\limits_{\mathbf{x}}\frac{\partial \mathcal{L}_{\SH}^\epsilon(p_{\boldsymbol\theta}, \pi)}{\partial p_{\boldsymbol\theta}(\mathbf{x})}\frac{\partial p_{\boldsymbol\theta}(\mathbf{x})}{\partial {\boldsymbol\theta}_k} \\
    & = \sum\limits_\mathbf{x}\varphi(\mathbf{x})\left(p_{{\boldsymbol\theta}^-_k}(\mathbf{x}) - p_{{\boldsymbol\theta}^+_k}(\mathbf{x})\right)  \\
    & = \underset{\substack{\mathbf{x} \sim p_{{\boldsymbol\theta}^-} }}{\mathbb{E}}[\varphi(\mathbf{x})] 
    -\underset{\substack{\mathbf{x} \sim p_{{\boldsymbol\theta}_k^+}  }}{\mathbb{E}}[\varphi(\mathbf{x})] \label{sinkhorngradient_supp}
\end{align}
 Therefore, one can compute the gradient by drawing samples from the distributions, $\hat{\mathbf{x}} \sim p_{{\boldsymbol\theta}^\pm}$, and computing the vector $\varphi(\mathbf{x})$ , for each sample, $\mathbf{x} \in \hat{\mathbf{x}}$, using the vectors, $g^{(0)}, s^{(0)}$ already computed during the evaluation of $\SH$ at each epoch.

\subsection{Sample Complexity} \label{supp_matt:sinkhorn_sample_complexity}

As mentioned above, the sample complexity of the $\MMD$ scales as $\mathcal{O}(1/\sqrt{M})$, and it is known that the Wasserstein distance scales as $\mathcal{O}(1/M^n)$ \citeS{weed_sharp_2017} for a distribution supported on a subset of $\mathbb{R}^n$. The former indicates that the $\MMD$ can be computed to an accuracy $\epsilon$ with $\mathcal{O}(\epsilon^{-2})$ samples regardless of the underlying space, and the latter means the Wasserstein distance requires $\mathcal{O}(\epsilon^{-n})$ which is exponential in the size of the space. Since we are dealing with binary vectors of length $n$, the support of our distributions (the $\IBM$ and the data distribution) will be $\mathcal{X} = \{0, 1\}^n \subseteq \mathbb{R}^n$, but a general quantum distribution can be supported on this entire space. These two cases correspond to the extreme regularisation values of $\epsilon \rightarrow \infty$ and $\epsilon \rightarrow 0$ respectively. However, the motivation for using the Sinkhorn Divergence is to leverage the favourable sample complexity of the $\MMD$, with the stronger Wasserstein Distance. This $\epsilon$-regularisation hyperparameter allows us to choose a cost function which optimally suits our needs. By the results of ref.\citeS{feydy_interpolating_2019}, we can be assured that no matter which value is chosen, the $\SH$ will be suitable as a cost function to train the $\IBM$, and in fact any generative model. This is because it is zero for any distributions, say $p$ and $q$, which are identical, and strictly positive otherwise:

\begin{align}
    \mathcal{L}_{\SH}^\epsilon(p, q)  = 0 \iff p = q \label{sinkhornzeroiff}\\
    \mathcal{L}_{\SH}^\epsilon(p, q) \geq \mathcal{L}_{\SH}^\epsilon(p, p) = 0 \label{sinkhorngreaterzero_supp}
\end{align}

It also metrizes convergence in law, effectively meaning it can be estimated using $M$ samples, and will converge to its true value in the limit of large samples:
\begin{equation}
    \mathcal{L}_{\SH}^\epsilon(\hat{p}_M, p) \rightarrow 0 \iff \hat{p}_M \rightharpoonup p \label{sinkhornconvergenceinlaw_supp}
 \end{equation}
Now, from ref.\citeS{genevay_sample_2018}, we have the following two results. Note the original theorems technically apply to the regularised $\OT$ cost, rather than $\mathcal{L}_{\SH}$, but the addition of the symmetric terms in \eqref{sinkhorndivergence} will not affect the asymptotic sample complexities since they add only constant overheads. The first is the mean difference between the true Sinkhorn divergence, $\mathcal{L}_{\SH}^\epsilon(P, Q)$ and its estimator derived from the empirical distributions, $\mathcal{L}_{\SH}^\epsilon(\hat{P}_M, \hat{Q}_M)$ is given by:
\begin{theorem}[Theorem 3 from ref.\citeS{genevay_sample_2018}]\label{thm:sinkhornexpectationsamplecomplexity_supp}
Consider the Sinkhorn divergence between two distributions, $p$, and $q$ on two bounded subsets, $\mathcal{X}, \mathcal{Y}$ of $\mathbb{R}^n$, with a $C^{\infty}$, $L-$Lipshitz cost $c$. One has:
\begin{equation}
    \mathbb{E}|\mathcal{L}_{\SH}^\epsilon(p,q) - \mathcal{L}_{\SH}^\epsilon(\hat{p}_M, \hat{q}_M)| = \mathcal{O}\left(\frac{\mathrm{e}^{\frac{\kappa}{\epsilon}}}{\sqrt{M}}\left(1+\frac{1}{\epsilon^{\lfloor n/2\rfloor}}\right)\right)\label{sinkhornexpectationsamplecomplexity_supp}
\end{equation}
where $\kappa = 2L|\mathcal{X}|+||c||_{\infty}$ and constants only depend on $|\mathcal{X}|, |\mathcal{Y}|, c$ and $||c^{l}||_{\infty}$ for $l = 0, \dots, \lfloor n/2\rfloor$.
\end{theorem}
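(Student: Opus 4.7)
The plan is to reduce the estimation error to a uniform deviation of empirical expectations over a class of Sinkhorn dual potentials, and then control that deviation via a covering-number bound on the class. Since the symmetric terms $\OT^c_\epsilon(p,p)$ and $\OT^c_\epsilon(q,q)$ are of the same form as $\OT^c_\epsilon(p,q)$, it suffices to prove the bound for the asymmetric term and then apply the triangle inequality to $\mathcal{L}^\epsilon_{\SH}$, absorbing the three resulting contributions into the $\mathcal{O}(\cdot)$. So first I would pass to the dual formulation of entropic OT,
\begin{equation}
\OT^c_\epsilon(p,q) = \sup_{(f,g)\in C(\mathcal{X})\times C(\mathcal{Y})} \Bigl\{ \langle p,f\rangle + \langle q,g\rangle - \epsilon \langle p\otimes q, \, \mathrm{e}^{(f\oplus g - c)/\epsilon}-1\rangle\Bigr\},
\end{equation}
and note that $|\OT^c_\epsilon(p,q)-\OT^c_\epsilon(\hat p_M,\hat q_M)|$ is bounded by $\sup_{(f,g)\in\mathcal{F}_\epsilon}\bigl(|\langle p-\hat p_M,f\rangle|+|\langle q-\hat q_M,g\rangle|\bigr)$, where $\mathcal{F}_\epsilon$ is the set of candidate optimal dual pairs.

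Next I would establish a priori regularity of the Sinkhorn potentials $(f^\star,g^\star)$, which is the engine of the whole argument. Using the Sinkhorn fixed-point equations $f^\star(x) = -\epsilon \log \int \mathrm{e}^{(g^\star(y)-c(x,y))/\epsilon}dq(y)$ and the symmetric one for $g^\star$, an $L$-Lipschitz cost on a bounded set gives a uniform sup-norm bound $\|f^\star\|_\infty,\|g^\star\|_\infty \leq \tfrac12(2L|\mathcal{X}| + \|c\|_\infty) \lesssim \kappa$, from which the prefactor $\mathrm{e}^{\kappa/\epsilon}$ will emerge when bounding integrands of the form $\mathrm{e}^{(f\oplus g - c)/\epsilon}$. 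Differentiating the fixed point equations under the integral sign, one inductively bounds all partial derivatives of $f^\star$ of order $k$ by a constant times $\epsilon^{-(k-1)}$ for $k\geq 1$, with $k=0$ giving the sup-norm bound above. Consequently $(f^\star,g^\star)$ belongs to a Sobolev ball $W^{s,\infty}$ whose radius scales like $\mathrm{e}^{\kappa/\epsilon}/\epsilon^{s-1}$ for each derivative order $s$.

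Then I would apply a uniform law of large numbers to the class $\mathcal{F}_\epsilon$. Using Dudley's entropy integral or equivalently the standard metric-entropy bound for Sobolev balls $W^{s,\infty}(\mathcal{X})$ on a domain of dimension $n$, which requires $s>n/2$ to obtain the dimension-free $M^{-1/2}$ rate, one obtains
\begin{equation}
\mathbb{E}\sup_{f\in\mathcal{F}_\epsilon}|\langle p-\hat p_M,f\rangle| \;=\; \mathcal{O}\!\left(\frac{B(\epsilon)}{\sqrt{M}}\right), \qquad B(\epsilon)=\mathrm{e}^{\kappa/\epsilon}\Bigl(1+\epsilon^{-\lfloor n/2\rfloor}\Bigr),
\end{equation}
where the factor $\epsilon^{-\lfloor n/2\rfloor}$ appears because we must take $s=\lfloor n/2\rfloor+1$ derivatives to close Dudley's integral in dimension $n$, and each derivative contributes a factor $1/\epsilon$ from the Sobolev radius estimate. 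The extra ``$1+$'' collects the zero-order sup-norm contribution $\mathrm{e}^{\kappa/\epsilon}$ that dominates when $\epsilon$ is not too small. Combining with the analogous bound on the $q$-marginal term and the two symmetric terms of $\mathcal{L}^\epsilon_{\SH}$ yields the stated rate.

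The main obstacle is the regularity step: proving the inductive bound $\|D^k f^\star\|_\infty = \mathcal{O}(\mathrm{e}^{\kappa/\epsilon}\epsilon^{-(k-1)})$ uniformly in $\epsilon$ requires carefully differentiating the log-sum-exp fixed-point equation and controlling how soft-min operations amplify derivatives, since each differentiation of $-\epsilon \log \int \mathrm{e}^{(g^\star-c)/\epsilon}dq$ introduces a factor $1/\epsilon$ together with derivatives of $c$ and of $g^\star$ itself, forcing a coupled induction on both potentials. Once that control is in hand, the concentration step is a routine application of empirical process theory and does not introduce any further $\epsilon$ dependence beyond the Sobolev radius.
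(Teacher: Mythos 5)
You should be aware that the paper does not prove this statement at all: it is Theorem 3 of Genevay et al.\ (ref.~\citeS{genevay_sample_2018}), quoted verbatim with attribution, and the paper's only original work in this section is instantiating the constants ($|\mathcal{X}| = n$, $L = n$, $\|c\|_{\infty} = n$ for the $\ell_1$ cost on the hypercube) to obtain the $\epsilon = \mathcal{O}(n^2)$ scaling used in the main text. Judged against the actual proof in that reference, your reconstruction is faithful in structure and essentially correct: the reduction of the Sinkhorn divergence to the asymmetric term $\OT^c_\epsilon(p,q)$ via the triangle inequality, the passage to the entropic dual, the a priori regularity of the Sinkhorn potentials from the log-sum-exp fixed-point equations (sup-norm bound of order $\kappa$, and $k$-th derivatives of order $\epsilon^{-(k-1)}$), and the resulting membership in a Sobolev ball at smoothness $s = \lfloor n/2\rfloor + 1$ are exactly the ingredients Genevay et al.\ use. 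The one place you genuinely deviate is the final empirical-process step: they exploit that $H^s(\mathbb{R}^n)$ is an RKHS for $s > n/2$ and bound the Rademacher complexity of a ball of radius $\lambda$ by $\lambda\sqrt{K}/\sqrt{M}$ with $K$ the diagonal of the Matern--Sobolev kernel (which is where the explicit $12B\lambda\sqrt{K}/\sqrt{M}$ form restated in this supplementary section comes from), whereas you invoke Dudley's entropy integral over the same Sobolev ball; the two are interchangeable here and yield the same rate. Your bookkeeping slightly conflates the factor $B \leq 1 + \mathrm{e}^{2(L|\mathcal{X}|+\|c\|_{\infty})/\epsilon}$ (which enters multiplicatively alongside the Sobolev radius $\lambda = \mathcal{O}(1+\epsilon^{-\lfloor n/2\rfloor})$ rather than inside it), but since the final bound is the product of the two, this does not change the stated asymptotics.
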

The second is the following concentration result:

\begin{corollary}
[Corollary 1 from ref.\citeS{genevay_sample_2018}]\label{thm:sinkhornconcentrationsamplecomplexity_supp}
With probability at least $1-\delta$,
\begin{equation}
    |\mathcal{L}_{\SH}^\epsilon(p,q) - \mathcal{L}_{\SH}^\epsilon(\hat{p}_M, \hat{q}_M)| \leq 12B\frac{\lambda K}{\sqrt{M}} + 2C\sqrt{\frac{2\log\frac{1}{\delta}}{M}}\label{sinkhornconcentrationsamplecomplexity_supp}
\end{equation}
where $\kappa  \coloneqq 2L|\mathcal{X}|+||c||_{\infty}$, $C  \coloneqq \kappa + \epsilon \mathrm{e}^{\frac{\kappa}{\epsilon}}$, $B\leq 1+\mathrm{e}^{\left(2\frac{L|\mathcal{X}|+||c||_{\infty}}{\epsilon}\right)}, \lambda = \mathcal{O}(1+ \frac{1}{\epsilon^{\lfloor n/2\rfloor}})) \text{ and } K  \coloneqq \max_{\mathbf{x} \in \mathcal{X}}\kappa_S(\mathbf{x}, \mathbf{x})$.
\end{corollary}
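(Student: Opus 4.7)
The plan is to prove this concentration result by combining the expectation bound of Theorem \ref{thm:sinkhornexpectationsamplecomplexity_supp} with a bounded-differences concentration inequality (McDiarmid). First I would view the empirical Sinkhorn divergence $\hat{\mathcal{L}}^\epsilon_{\SH} := \mathcal{L}_{\SH}^\epsilon(\hat{p}_M, \hat{q}_M)$ as a deterministic function $F(\mathbf{X}_1, \ldots, \mathbf{X}_M, \mathbf{Y}_1, \ldots, \mathbf{Y}_M)$ of the $2M$ i.i.d.\ samples, then decompose via the triangle inequality
\begin{equation}
|\mathcal{L}_{\SH}^\epsilon(p,q) - \hat{\mathcal{L}}^\epsilon_{\SH}|
\leq \underbrace{|\mathcal{L}_{\SH}^\epsilon(p,q) - \mathbb{E}[F]|}_{\text{bias}} + \underbrace{|F - \mathbb{E}[F]|}_{\text{fluctuation}} .
\end{equation}
The bias term is controlled using Theorem \ref{thm:sinkhornexpectationsamplecomplexity_supp}, which via Jensen's inequality gives $|\mathcal{L}_{\SH}^\epsilon(p,q) - \mathbb{E}[F]| \leq \mathbb{E}|\mathcal{L}_{\SH}^\epsilon(p,q) - F| = \mathcal{O}(B\lambda K / \sqrt{M})$, producing the first term $12 B \lambda K/\sqrt{M}$ in the stated bound (the constant $12$ is tracked from the Rademacher-complexity argument in the proof of Theorem \ref{thm:sinkhornexpectationsamplecomplexity_supp}).

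For the fluctuation term, I would apply McDiarmid's inequality, which requires bounding the change in $F$ when a single coordinate $\mathbf{X}_i$ is replaced by $\mathbf{X}_i'$. The key lemma is that the optimal Sinkhorn dual potentials $(f, g)$ solving the entropy-regularised optimal transport problem \eqref{wassersteinregularised_supp} are uniformly bounded: $\|f\|_\infty, \|g\|_\infty \leq C := \kappa + \epsilon\,\mathrm{e}^{\kappa/\epsilon}$, which follows from running the Sinkhorn iteration \eqref{sinkhorndualvectors_1_supp}--\eqref{sinkhorndualvectors_2_supp} from a zero initialisation and using the Lipschitz structure of the log-sum-exp operation together with the $L$-Lipschitzness of the cost $c$ on the bounded domain. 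Using the dual representation $\OT^c_\epsilon(p_M, q_M) = \langle p_M, f\rangle + \langle q_M, g\rangle$ from \eqref{sinkhorndualformulation_supp}, replacing one sample changes each empirical measure by at most $1/M$ in total variation, so the envelope $\|f\|_\infty + \|g\|_\infty \leq 2C$ immediately yields the bounded-differences constant $c_i \leq 2C/M$ (the symmetric terms $\OT^c_\epsilon(p_M,p_M)$ and $\OT^c_\epsilon(q_M,q_M)$ contribute the same order).

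Applying McDiarmid's inequality with this constant then gives, with probability at least $1-\delta$,
\begin{equation}
|F - \mathbb{E}[F]| \leq \sqrt{\tfrac{1}{2}\,\sum_i c_i^2\,\log(1/\delta)} = \sqrt{\tfrac{(2C)^2}{2M}\,\log(1/\delta)} = 2C\sqrt{\tfrac{2\log(1/\delta)}{M}}/2,
\end{equation}
which after carefully tracking the factor of two arising from having both $\mathbf{X}$ and $\mathbf{Y}$ samples (i.e.\ $2M$ independent coordinates) produces exactly the second term $2C\sqrt{2\log(1/\delta)/M}$. Combining the two bounds through the triangle inequality above completes the proof.

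The hard part will be the uniform boundedness of the Sinkhorn potentials, since this is what ultimately controls the McDiarmid constant and produces the $C = \kappa + \epsilon\,\mathrm{e}^{\kappa/\epsilon}$ factor. This requires a non-trivial analysis of the fixed-point equations \eqref{sinkhorndualvectors_1_supp}--\eqref{sinkhorndualvectors_2_supp}: one shows that starting from $f^0 = g^0 = 0$, each Sinkhorn iterate remains bounded by $\kappa$ up to an additive $\epsilon$-factor amplified by $\mathrm{e}^{\kappa/\epsilon}$ through the log-sum-exp nonlinearity, and that convergence preserves this bound. Everything else (McDiarmid, triangle inequality, Jensen) is standard once the dual potentials are controlled.
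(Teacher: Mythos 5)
The paper does not actually prove this statement: it is imported verbatim as Corollary 1 of the cited reference (Genevay et al.), so there is no internal proof to compare against. Your reconstruction is, however, essentially the argument used in that reference: a triangle-inequality split into a bias term controlled by the expectation bound of Theorem \ref{thm:sinkhornexpectationsamplecomplexity_supp} (which supplies the Rademacher-type term $12B\lambda K/\sqrt{M}$) and a fluctuation term controlled by McDiarmid's inequality with a bounded-differences constant coming from a uniform bound on the dual objective. Two small points of caution. First, your attribution of the constant $C = \kappa + \epsilon\,\mathrm{e}^{\kappa/\epsilon}$ entirely to the sup-norm of the Sinkhorn potentials is not quite right: in the source, the potentials themselves are bounded by $\kappa = 2L|\mathcal{X}| + \|c\|_{\infty}$, and the additional $\epsilon\,\mathrm{e}^{\kappa/\epsilon}$ arises from the entropic correction term $\epsilon\,\mathbb{E}\bigl[\mathrm{e}^{(f\oplus g - c)/\epsilon}\bigr]$ in the dual objective; it is the full dual integrand, not the potential alone, that is bounded by $C$ and yields the per-coordinate variation $\mathcal{O}(C/M)$. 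The final McDiarmid constant is the same, but the lemma you flag as ``the hard part'' should be stated about the dual objective rather than about $\|f\|_\infty$. Second, your constant-tracking in the McDiarmid step is loose (your own displayed computation yields $\sqrt{2}\,C\sqrt{\log(1/\delta)/M}$ rather than $2C\sqrt{2\log(1/\delta)/M}$); since the corollary only asserts an upper bound this is harmless, but the phrase ``produces exactly the second term'' overstates what the sketch establishes. Note also that the paper itself is internally inconsistent between $K$ and $\sqrt{K}$ in the first term (compare the corollary with the ``more exact expression'' given afterwards), so matching the stated constant exactly is not a meaningful target here.
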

$\kappa_S$ is the Matern or the Sobolev kernel, associated to the Sobolev space, $\mathcal{H}^s(\mathbb{R}^n)$, which is a RKHS for $s > n/2$, but we will not go into further detail here.

The more exact expression for \eqref{sinkhornexpectationsamplecomplexity_supp} is given by:
\begin{equation}
     \mathbb{E}|\mathcal{L}_{\SH}^\epsilon - \hat{\mathcal{L}}_{\SH}^\epsilon| \leq 12\frac{B\lambda\sqrt{K}}{\sqrt{M}}
     = \mathcal{O}\left(\frac{1}{\sqrt{M}}\left(1+\mathrm{e}^{\left(2\frac{L|\mathcal{X}|+||c||_{\infty}}{\epsilon}\right)}\right)\left(1+\frac{1}{\epsilon^{\lfloor n/2\rfloor}}\right)\right)
\end{equation}
where we use $\mathcal{L}_{\SH}^\epsilon$, rather than $\OT_\epsilon^c$ as in ref.\cite{genevay_learning_2018}, through the use of the triangle inequality.

Now, for our particular case, we wish to choose the Hamming distance as a metric on the Hamming hypercube (for a fixed dimension, $n$). However, due to the smoothness requirement of the above theorems, $c \in C^\infty$, this would not hold in the discrete case we are dealing with. However, we can take a broader view to simply use the $\ell_1$ distance, and embed the Hamming hypercube in a larger space. This is possible because the Hamming distance, $d_{H}$ is exactly the $\ell_1$ metric but restricted to binary vectors:
\begin{equation}
    ||\mathbf{x} - \mathbf{y}||_1  = \sum\limits_{i=1}^n|x_i-y_i| = d_H(\mathbf{x}, \mathbf{y}) ~ \forall x_i, y_i \in \{0, 1\} \label{samplecomplexityexpanded_supp}
\end{equation}
In this scenario, formally, we are dealing with the general hypercube in $\mathbb{R}^n$, but where the probability masses are strictly concentrated on the vertices of the hypercube. Now, we can compute directly some of the constants in the above, Theorem \ref{thm:sinkhornexpectationsamplecomplexity_supp} and Corollary \ref{thm:sinkhornconcentrationsamplecomplexity_supp}. Taking $\mathcal{X}$ to be the unit hypercube in $\mathbb{R}^n$, and taking the Sinkhorn cost to be the $\ell_1$ cost, which is Lipschitz continuous, we can compute the following:
\begin{equation}
    |\mathcal{X}| = \sup_{\mathbf{x}, \mathbf{y} \in \mathcal{X}}||\mathbf{x} - \mathbf{y}||_1= n,\qquad
    ||c(\mathbf{x}, \mathbf{y})||_\infty = \sup\{|c(\mathbf{x}, \mathbf{y})| : (\mathbf{x}, \mathbf{y}) \in \mathcal{X}\times \mathcal{Y}\} = n\label{sinkhorn_sample_complexity_calcs}
\end{equation}
A rough upper bound for the Lipschitz constant, $L$, can be obtained as follows. For a function, $f:\mathcal{A}\rightarrow\mathcal{B}$, the Lipschitz constant is the smallest value $L$ such that:
\begin{equation}
    d_{\mathcal{B}}(f(\mathbf{x}),f(\mathbf{y})) \leq L d_{\mathcal{A}}(\mathbf{x}, \mathbf{y}) \label{lipschitzcontinuous_supp}
\end{equation}
If we take $d_{\mathcal{A}}$ to be the sum metric on the product space, $\mathcal{X}\times\mathcal{Y}$, and $f$ to be the $\ell_1$ distance, we get:
\begin{equation}
    |(c(\mathbf{x}^1, \mathbf{y}^1) - c(\mathbf{x}^2, \mathbf{y}^2))| \leq L \left[d_{\mathcal{X}}(\mathbf{x}^1, \mathbf{x}^2)+ d_{\mathcal{Y}}(\mathbf{y}^1, \mathbf{y}^2)\right] \label{costfunctionlipschitzcont_supp}
\end{equation}
For two points, $(\mathbf{x}^1, \mathbf{y}^1), (\mathbf{x}^2, \mathbf{y}^2) \in \mathcal{X}\times \mathcal{Y}$, and the cost $c:\mathcal{X}\times\mathcal{Y}\rightarrow \mathbb{R}, c = ||\cdot||_1$.
Now, 
\begin{equation}
    \frac{\left|\sum_i |\mathbf{x}_i^1 - \mathbf{y}_i^1| - \sum_i |\mathbf{x}_i^2 - \mathbf{y}_i^2| \right|}{\left[\sum_i |\mathbf{x}_i^1 - \mathbf{x}_i^2| + \sum_i |\mathbf{y}_i^1 - \mathbf{y}_i^2| \right]} \leq L \label{lipshitzderivation2_supp}
\end{equation}
We want to find an upper bound for the left hand side of \eqref{lipshitzderivation2_supp}, assuming that $\mathbf{x}^1 \neq \mathbf{x}^2$ and $\mathbf{y}^1 \neq \mathbf{y}^2$. In this case, both numerator and denominator are zero, so any non-zero $L$ will satisfy \eqref{costfunctionlipschitzcont_supp}. Now, applying the trick of embedding the Hamming hypercube in the general hypercube, we can assume $\mathbf{x}^{1,2}_i, \mathbf{y}^{1,2}_i \in \{0, 1\}\forall i$. To derive such a bound, we can bound both the numerator and the denominator of the LHS of \eqref{lipshitzderivation2_supp} independently. We find the denominator is as small as possible, when only one element of $\mathbf{x}^{1,2}$ or $\mathbf{y}^{1,2}$ is equal to one, and all the are equal to zero.  The numerator is as large as possible when one of $\mathbf{x}^{1,2}$ or $\mathbf{y}^{1,2}$ is the all-one vector. In this case, the LHS is upper bounded by $n$ (for a fixed $n$), so we can choose $L = n$, which is a constant for a fixed $n$.

So, \eqref{samplecomplexityexpanded_supp} becomes:
\begin{equation}
     \mathbb{E}|\mathcal{L}_{\SH}^\epsilon - \hat{\mathcal{L}}_{\SH}^\epsilon| = \mathcal{O}\left(\frac{1}{\sqrt{M}}\left(1+\mathrm{e}^{\left(2\frac{n^2+n}{\epsilon}\right)}\right)\left(1+\frac{1}{\epsilon^{\lfloor n/2\rfloor}}\right)\right)
\end{equation}
The constants in $\mathcal{O}\left(1+\frac{1}{\epsilon^{\lfloor n/2\rfloor}}\right)$, depend on $|\mathcal{X}|, |\mathcal{Y}|, n, \text{ and } ||c^{(k)}||_\infty$ which are at most linear in $n$. Ignoring the constant terms, we arrive at the scaling for the average error, \eqref{sinkhorn_expectation_sample_chosen_MAIN}, in the main text. Similarly the concentration bound is: 
\begin{align}
    &|\mathcal{L}_{\SH}^\epsilon - \hat{\mathcal{L}}_{\SH}^\epsilon| \leq 12B\frac{\lambda K}{\sqrt{M}} + 2C\sqrt{\frac{2\log\frac{1}{\delta}}{M}}\\
    &\leq \frac{12K}{\sqrt{M}}\left(1+\mathrm{e}^{\left(2\frac{L|\mathcal{X}|+||c||_{\infty}}{\epsilon}\right)}\right)\mathcal{O}\left(1+\frac{1}{\epsilon^{\lfloor n/2\rfloor}}\right)+ 2\kappa\sqrt{\frac{2\log\frac{1}{\delta}}{M}} + 2\epsilon \mathrm{e}^{\frac{\kappa}{\epsilon}}\sqrt{\frac{2\log\frac{1}{\delta}}{M}}\\
    &= \frac{1}{\sqrt{M}}\left[12\left(1+\mathrm{e}^{\left(\frac{\mathcal{O}(n^{2})}{\epsilon}\right)}\right)\mathcal{O}\left(1+\frac{1}{\epsilon^{\lfloor n/2\rfloor}}\right)+\mathcal{O}(n^{2})\sqrt{2\log\frac{1}{\delta}} + 2\epsilon \mathrm{e}^{\frac{\mathcal{O}(n^{2})}{\epsilon}}\sqrt{2\log\frac{1}{\delta}}\right]\\
    &= \mathcal{O}\left(\frac{1}{\sqrt{M}}\left[\left(1+\mathrm{e}^{\left(\frac{n^{2}}{\epsilon}\right)}\right)\left(1+\frac{1}{\epsilon^{\lfloor n/2\rfloor}}\right)\right.+ \left.n^{2}\sqrt{2\log\frac{1}{\delta}}  + \epsilon \mathrm{e}^{\frac{n^{2}}{\epsilon}}\sqrt{2\log\frac{1}{\delta}}\right]\right) \label{sinkhornbornsamplecomplexity_supp}
\end{align}
since $\kappa = 2n^2+n = \mathcal{O}(n^{2})$. 

Clearly, due to the asymptotic behaviour of the Sinkhorn divergence, we would like to choose $\epsilon$ sufficiently large in order to remove as much dependence on the dimension, $n$, as possible. This is because, in our case, the dimension of the space is equivalent to the number of qubits, and hence to derive a favourable sample complexity, we would hope for the dependence on $n$ to be polynomial in the number of qubits. By examining, \eqref{sinkhornbornsamplecomplexity_supp}, we could see that a good choice might be $\epsilon = \mathcal{O}(n^{2})$. In this case, we get:
\begin{equation}
    |\mathcal{L}_{\SH}^{\mathcal{O}(n^{2})} - \hat{\mathcal{L}}_{\SH}^{\mathcal{O}(n^{2})}| 
    = \mathcal{O}\left(\frac{1}{\sqrt{M}}\left[\left(1+\frac{1}{n^{\lfloor n/2\rfloor}}\right)\right.\right.+ \left.\left.n\sqrt{2\log\frac{1}{\delta}}\right]\right) \label{sinkhornbornsamplecomplexitychoosed_supp}
\end{equation}
with probability $1-\delta$. 

It is likely in practice however, that a much smaller value of $\epsilon$ could be chosen, without blowing up the sample complexity. This is evidenced by numerical results in \citeS{genevay_learning_2018, feydy_interpolating_2019, genevay_sample_2018}. Again, by ignoring constant terms, we get the scaling observed in \eqref{sinkhornborn_samplecomplexity_choosed_MAIN} in the main text.

\section{Extra Numerical Results \label{supp_matt:numericalresults}}

In this section, we present extra numerical results to supplement those in the main text. 

Firstly, we will elaborate on our goals. We aim for two properties that our cost functions should exhibit in order to claim they have outperformed the $\MMD$-with-classical-kernel training method:
\begin{itemize}
    \item \textit{Speed of Converge:} The $\MMD$-with-quantum-kernel, and both of the cost functions, $\SD$ and $\SH$, should achieve equal or lower $\TV$ than the $\MMD$ in a \textit{shorter} time period (even accounting for various learning rates).
    \item \textit{Accuracy:} Since the cost functions we employ are in some sense `stronger' than $\MMD$, we would like for them to achieve a \textit{smaller} $\TV$ than is possible with the $\MMD$ in an equal or quicker time. 
\end{itemize}

It should be noted that it would not be possible to compute $\TV$ in general as the number of qubits scales, but it is possible to do so to benchmark the small examples we use here. 

Firstly, we revisit the quantum (\eqref{quantumkernel}) vs.\@ Gaussian  (\eqref{gaussiankernel}) kernels of the main text with the corresponding example for two qubits in \suppfigref{fig:QvGkernel2}. These results corroborate those seen in the main text for four qubits; that $\kappa_Q$ does not provide any obvious advantage in training the Born machine. In fact it actually performs slightly worse than the Guassian kernel on average. This is indicated in \suppfigref{fig:QvGkernel2}(a) and \figref{fig:QvGkernel4}(a). As such, this calls into question the usefulness of quantum kernels first explored in refs. \citeS{havlicek_supervised_2019, schuld_quantum_2019}. This reinforces the key point that classical hardness does not translate directly into learning advantage, and this topic should be explored further.

Again, the data was taken during 5 independent runs in each case, with averages and errors (maximum and minimum values achieved during training) computed over the runs. This behaviour is apparent even for a range of learning rates.

\begin{figure}
    \centering
    \includegraphics[width=0.9\columnwidth]{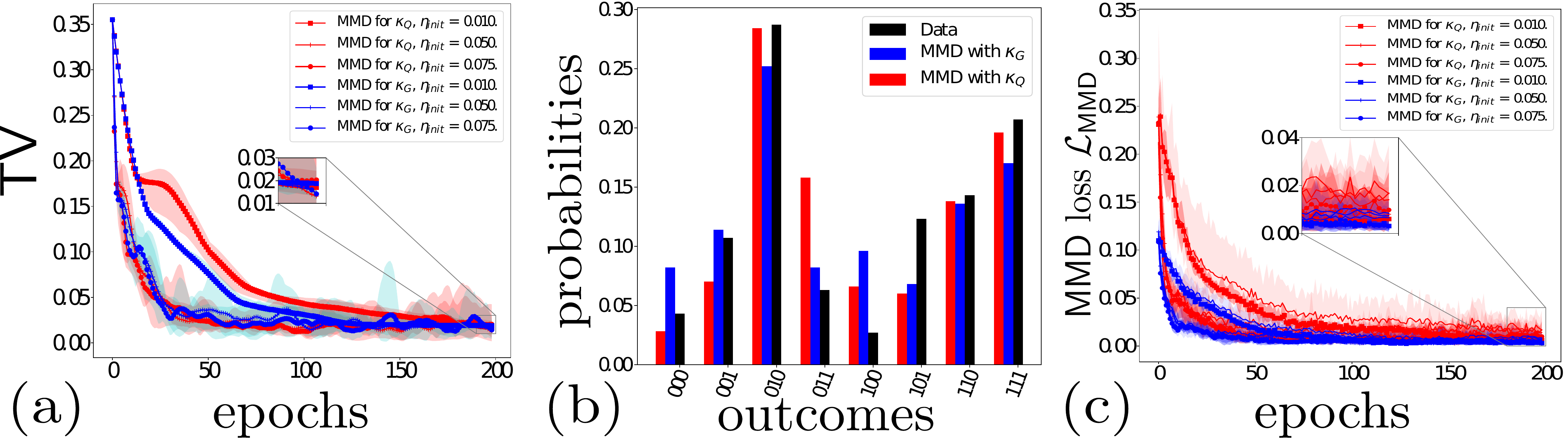}
    \caption{Performance of quantum $\kappa_Q$ [\crule[red]{0.2cm}{0.2cm}] vs. Gaussian kernel,  $\kappa_G$ [\crule[blue]{0.2cm}{0.2cm}]  for 3 qubits. To train, we sample from the $\IBM$ and the data $500$ times and use a minibatch size of $250$. One epoch is one complete update of all parameters according to gradient descent. Error bars represent maximum, minimum and mean values achieved over 5 independent training runs, with the same initial conditions on the same data samples. (a) $\TV$ Difference achieved with both kernel methods during training. Quantum kernel provides no obvious advantage over the classical kernel, as in the 4 qubit case. (b) Final learned probabilities with $\eta_{\mathsf{init}} = 0.1$ using the Adam optimiser. (c) $\MMD$ computed using $400$ samples as training points, $100$ as test points \red{(training on this set observed as thin lines without markers)}, independent of the training data. }
    \label{fig:QvGkernel2}
\end{figure}
\suppfigref{fig:MMDvSinkvStein4} illustrates the differences between using the $\MMD$ cost function and either the Sinkhorn divergence or the Stein discrepancy for four qubits, and supplements \figref{fig:MMDvSinkvStein3} in the main text. We found that both the Stein discrepancy and the Sinkhorn divergence are able to learn with a higher \textit{accuracy} as shown in \suppfigref{fig:MMDvSinkvStein4}(a), reinforcing \figref{fig:MMDvSinkvStein3}(a) in the main text. It should be noted that the results for the Sinkhorn training were highly dependent on the value of the regularisation, as expected. Also, simply because the Sinkhorn achieved better results on one particular data set, does not imply that it would do so over most. However, it does support our theoretical reasoning that the Sinkhorn divergence is better at minimising $\TV$.

For the case of the `Spectral' score method, (\figref{fig:MMDvSinkvStein3}(a) and \suppfigref{fig:MMDvSinkvStein4}(a)) we used many fewer samples than with the previous methods. This is due to the (classical) computational cost of computing the score using this method in our naive implementation. For computing the Spectral score, we used 4 and 6 Nystr{\"o}m eigenvectors for 3 and 4 qubits respectively, for details see \appref{supp_matt:steinscoremethod}. However, the spectral Stein method seemed to perform worse than other methods, and did not agree with the exact Stein discrepancy. As such, further investigation of this point is needed, including perhaps better methods to approximate the score.

One may comment on the fact that we allowed the Stein discrepancy to use the exact probabilities of the data, $\pi$, and this constitutes an unfair advantage against the $\MMD$. In fact, the high number of samples we used ensured that the approximate data distributions that the $\MMD$ received was very close to the exact data, and we found no major improvement for training with the $\MMD$ by allowing oracle data access, i.e.\@ the exact probabilities, $\pi$, reinforcing the fundamental weakness of the $\MMD$ that is discussed above.

\begin{figure}
    \centering
    \includegraphics[width=\columnwidth, height=0.3\columnwidth]{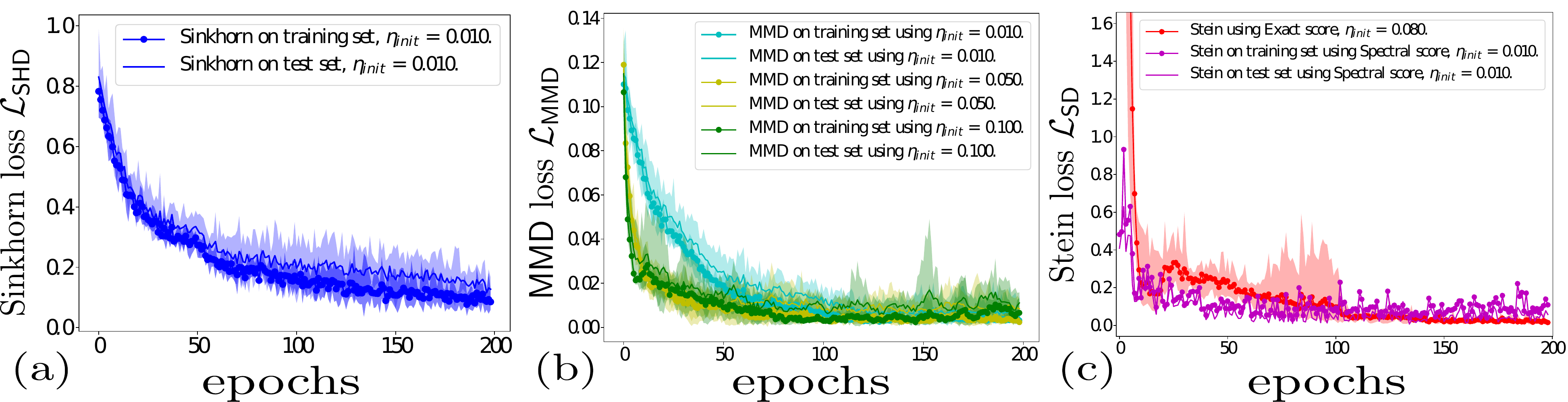}
    \caption{$\MMD$ [\crule[cyan]{0.2cm}{0.2cm}, \crule[yellow]{0.2cm}{0.2cm}, \crule[ForestGreen]{0.2cm}{0.2cm}] vs.\@ Sinkhorn [\crule[blue]{0.2cm}{0.2cm}] and Stein training with Exact score function [\crule[red]{0.2cm}{0.2cm}] and Spectral score method [\crule[magenta]{0.2cm}{0.2cm}] for 3 qubits with fully connected topology, Rigetti {\fontfamily{cmtt}\selectfont 3q-qvm}, \protect\threeqqvm \ \ \ , trained on the data, \eqref{toydatadistribution}. 500 data points are used for training, with 400 used as a training set, and 100 used as a test set for all except the Stein discrepancy using the spectral score, which used only 40 samples. Plots show mean, maximum and minimum values achieved over 5 independent training runs on the same dataset. (a) $\mathcal{L}^{0.08}_{\SH}$   using 500 samples and a batch size of 250. (b) $\mathcal{L}_{\MMD}$ using 500 samples and a batch size of 250, with three different initial learning rates. (c) $\mathcal{L}_{\SD}$ using 500 samples and a batch size of 250 for Exact score, and 40 samples and batch size of 20 for the Spectral score. \red{Training on test set observed as thin lines without markers for all methods, excluding the Exact score method, since this uses exact probabilities.} Corresponds to \figref{fig:MMDvSinkvStein3} in main text. }
    \label{fig:MMDvSinkvStein3_supp}
\end{figure}
\begin{figure}
    \centering
    \includegraphics[width=\columnwidth, height=0.5\columnwidth]{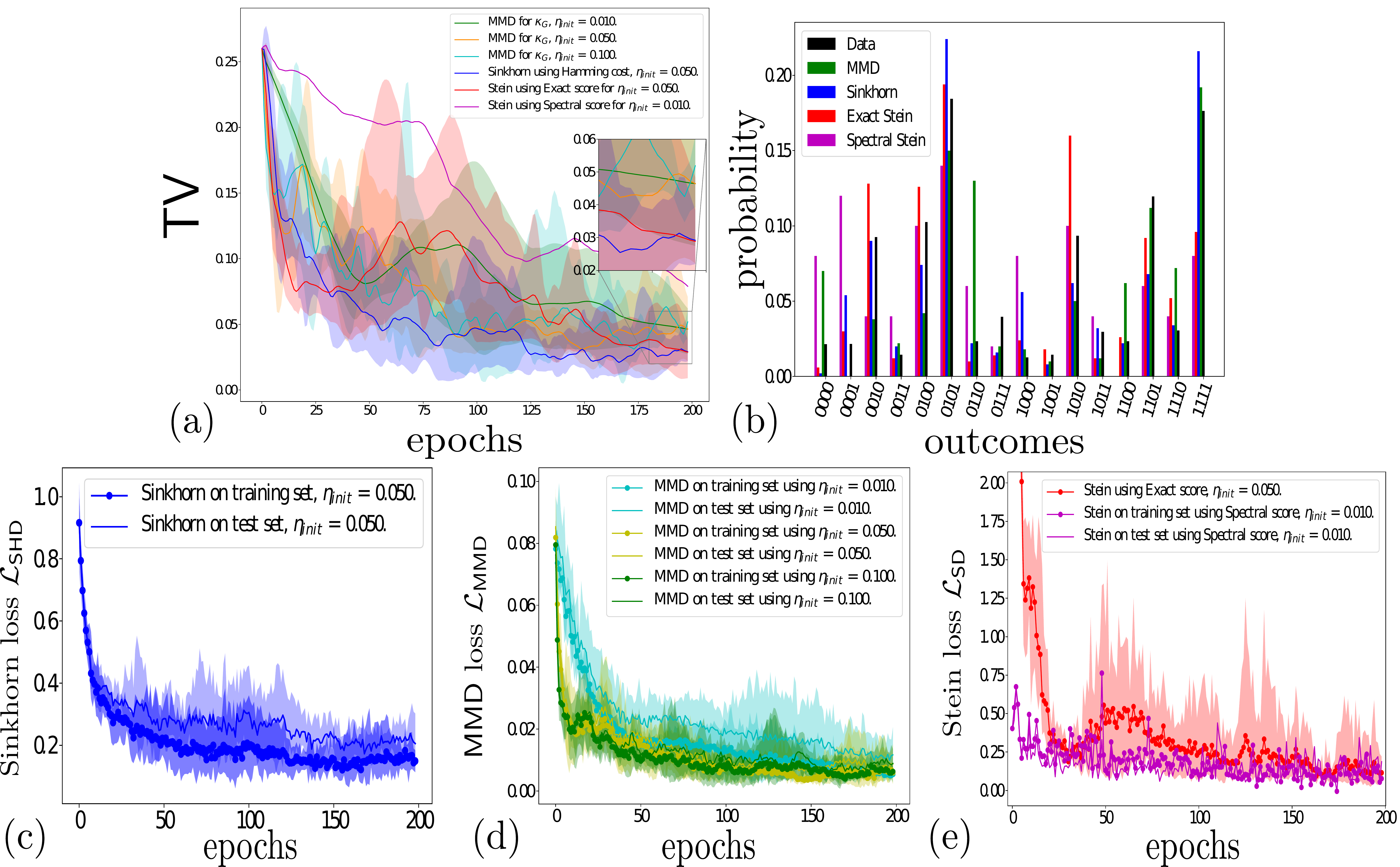}
    \caption{$\MMD$ [\crule[cyan]{0.2cm}{0.2cm}, \crule[yellow]{0.2cm}{0.2cm}, \crule[ForestGreen]{0.2cm}{0.2cm}] vs.\@ Sinkhorn [\crule[blue]{0.2cm}{0.2cm}]  with regularisation parameter $\epsilon = 1$ and Stein training with Spectral score method [\crule[magenta]{0.2cm}{0.2cm}] using 6 eigenvectors for 4 qubits. We use qubit topology in fully connected graph for four qubits, i.e.\@ Rigetti {\fontfamily{cmtt}\selectfont 4q-qvm},  \protect\fourqqvm \ \ \ . Plots show mean, maximum and minimum values over 5 independent training runs on the same dataset. (a) $\TV$ Difference between training methods. Both Sinkhorn divergence and Stein discrepancy can achieve lower $\TV$ values than the $\MMD$. (b) Final learned probabilities of target data, \eqref{toydatadistribution} [\crule[black]{0.2cm}{0.2cm}]. (c) $\mathcal{L}^{1}_{\SH}$ using 500 samples and a batch size of 250. 400 samples used as a training set, 100 for the test set. Trained using Hamming optimal transport cost function (d) $\mathcal{L}_{\MMD}$ using 500 samples and a batch size of 250, with three different initial learning rates. (e) $\mathcal{L}_{\SD}$ with the Exact score and a learning rate of $\eta_{init} = 0.05$ and Spectral score, using 50 Samples and batch size of 25 using a learning rate $\eta_{init}=0.01$. 40 samples used as training data, 10 samples used for test set.}
    \label{fig:MMDvSinkvStein4}
\end{figure}
%


We also performed experiments on the 16 qubit QPU of Rigetti, {\fontfamily{cmtt}\selectfont Aspen}, as seen in \suppfigref{fig:MMDvSinkvStein3_real} and \figref{fig:MMDvSink4_real} in the main text, to determine the performance of the training on real quantum hardware. We used two sublattices for 3 and 4 qubits respectively, the {\fontfamily{cmtt}\selectfont Aspen-4-3Q-A} and {\fontfamily{cmtt}\selectfont Aspen-4-4Q-A}, and their respective {\fontfamily{cmtt}\selectfont qvm} versions. As before we ran 5 independent runs of the training procedure from the same initial condition, and took averages over the run. We restricted to the native connectivity of the chip, as seen in \figref{fig:MMDvSinkvStein3_real}(e), and used the available qubits in the {\fontfamily{cmtt}\selectfont Aspen-4-3Q-A} chip, $(10, 11, 17)$, with no direct connection between qubits $11$ and $17$, as illustrated. Taking this into account, we did not enforce a fully connected topology on any of the real on-chip experiments, since this would have resulted in the compiler implementing SWAP gates, and introducing extra noise. This is one reason why the models trained reasonably well on the hardware, with the two qubit gates in our $\IBM$ being close to the native architecture of the Rigetti chip (where CZ gates are the native entangling links), and so providing an advantage.

\suppfigref{fig:MMDvSinkvStein3_real}(a) compares training using the $\MMD$ with a Gaussian kernel and the Sinkhorn divergence on the QPU, benchmarked relative to $\TV$ as with the previous numerical results. As expected, the QPU noise leads to a higher variance between training runs, but the large number of samples taken $(500)$ still permits the model to learn quickly. This reproduces the behaviour seen in \figref{fig:MMDvSink4_real}(a) in the main text. However, with the smaller amount of qubits, the advantage of Sinkhorn versus $\MMD$ on the real hardware is not observed as it was with 4 qubit.

Interestingly, in \suppfigref{fig:MMDvSinkvStein3_real}(d), the hardware actually trains to lower values of $\MMD$ than the simulator, despite the reverse situation relative to total variation seen in \suppfigref{fig:MMDvSinkvStein3_real}(a) (i.e.\@ the simulator outperforms the hardware). This potentially indicates that the training with the $\MMD$ on noisy hardware could lead to overconfident, and incorrect results. 

\begin{figure}
    \centering
    \includegraphics[width=0.95\columnwidth, height=0.48\columnwidth]{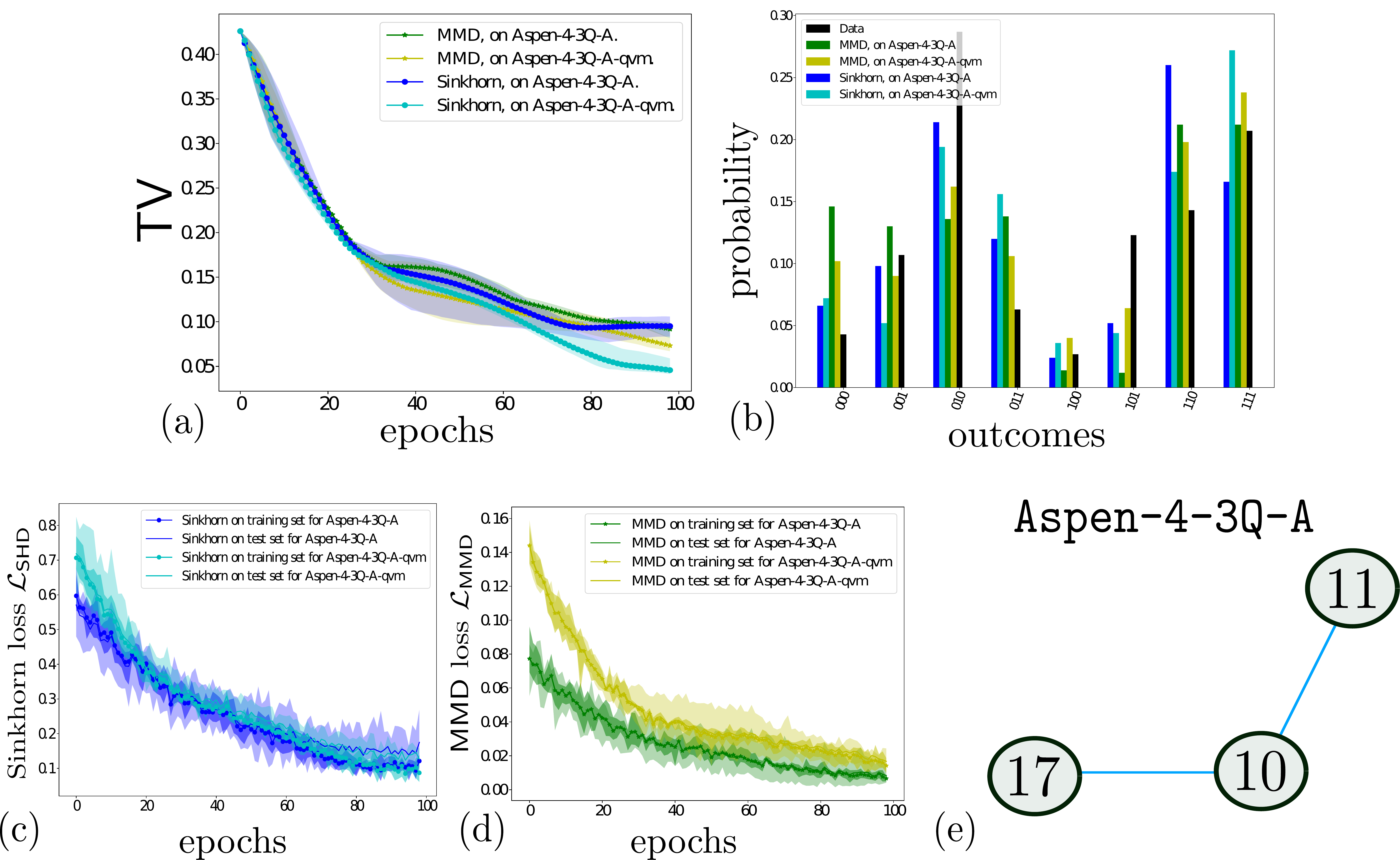}
    \caption{$\MMD$ vs.\@ Sinkhorn for 3 qubits comparing performance on the real QPU ({\fontfamily{cmtt}\selectfont Aspen-4-3Q-A}) vs.\@ simulated behaviour on QVM ({\fontfamily{cmtt}\selectfont Aspen-4-3Q-A-qvm}) using 500 samples and a batch size of 250, with an initial learning rate of $\eta_{init}=0.01$ for Adam. Sinkhorn divergence training outperforms the $\MMD$ both in simulator, relative to total variation, but this advantage does not persist on hardware as with 4 qubit case. (a) $\TV$ Difference between $\MMD$ [\crule[ForestGreen]{0.2cm}{0.2cm}, \crule[yellow]{0.2cm}{0.2cm}], and Sinkhorn [\crule[blue]{0.2cm}{0.2cm}, \crule[cyan]{0.2cm}{0.2cm}] with regularisation parameter $\epsilon = 0.1$ on QVM vs QPU. (b) Final learned probabilities of target data [\crule[black]{0.2cm}{0.2cm}] using $\MMD$ [\crule[ForestGreen]{0.2cm}{0.2cm}, \crule[yellow]{0.2cm}{0.2cm}] LR $\eta_{\mathsf{init}} = 0.2$ and Sinkhorn [\crule[blue]{0.2cm}{0.2cm}, \crule[cyan]{0.2cm}{0.2cm}] with $\epsilon = 0.1, \eta_{\mathsf{init}} = 0.08$. (c) $\mathcal{L}^{0.2}_{\SH}$ for 3 qubits trained on the data \eqref{toydatadistribution} on QVM  [\crule[cyan]{0.2cm}{0.2cm}] vs.\@ QPU  [\crule[blue]{0.2cm}{0.2cm}]. (e) $\mathcal{L}_{\MMD}$ for 3 qubits trained on the data \eqref{toydatadistribution} on QVM  [\crule[yellow]{0.2cm}{0.2cm}] vs.\@ QPU  [\crule[ForestGreen]{0.2cm}{0.2cm}]. (f) Qubit `line' topology in Rigetti {\fontfamily{cmtt}\selectfont Aspen-4-3Q-A} chip, using qubits, $(10, 11, 17)$. }
    \label{fig:MMDvSinkvStein3_real}
\end{figure}
Finally, \suppfigref{fig:autocompilationthreequbits} illustrates the compilation procedure discussed in the main text for three qubits, on the {\fontfamily{cmtt}\selectfont 3q-qvm}. The results are similar to that of the two qubit example; the model is not able to learn \textit{the same} parameters as the target data, but it is able to mimic the output distribution.

\begin{figure}
    \centering
    \includegraphics[width=0.72\columnwidth, height=0.46\columnwidth]{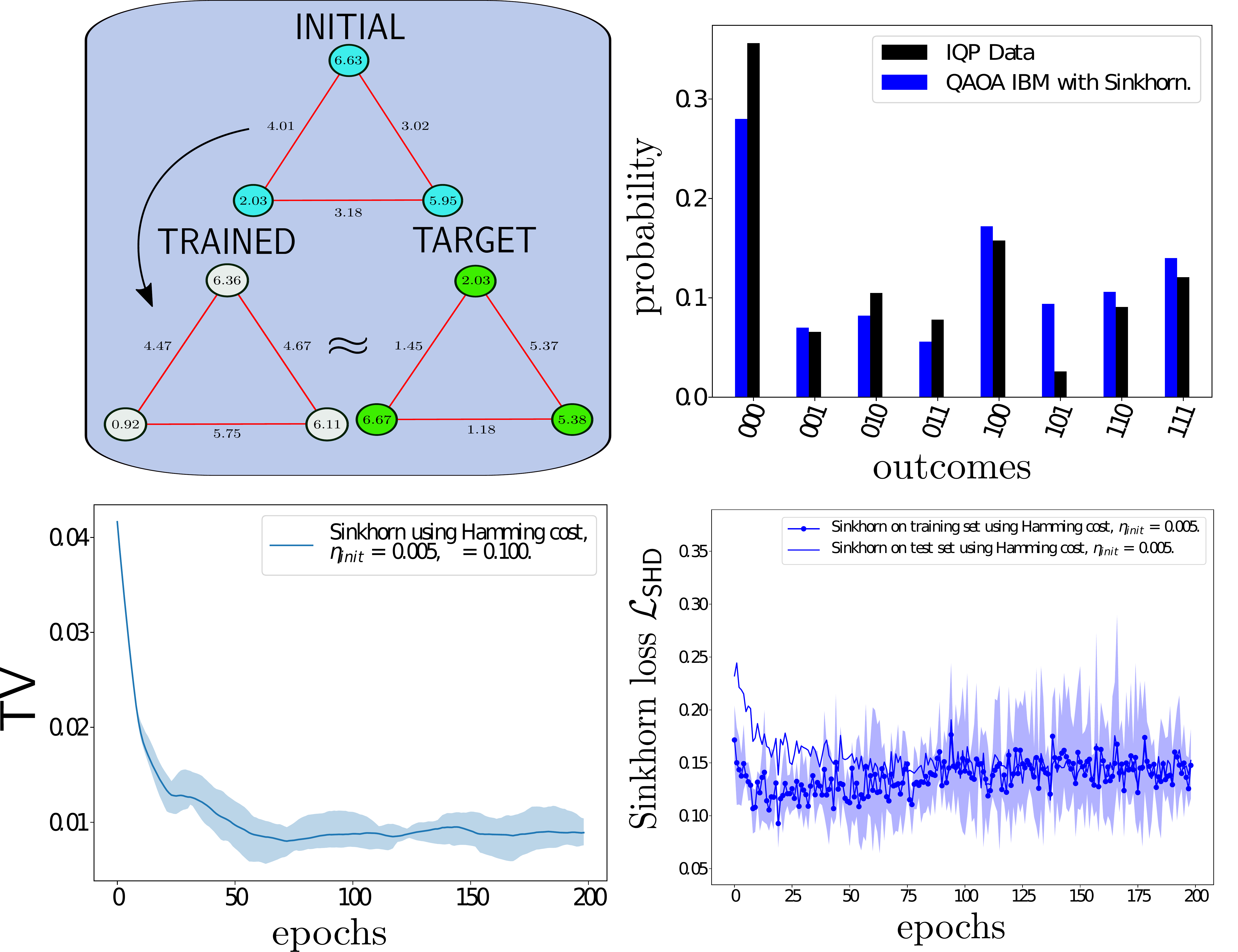}
    \caption{Automatic Compilation of $\IQP$ circuit to a $p = 1 \QAOA$ circuit with three qubits with $\mathcal{L}_{\SH}^\epsilon$ with $\epsilon = 0.1$. A learning rate of $\eta_{init} = 0.005$ was used in the Adam optimiser. $\IBM$ circuit is able to mimic the target distribution well, even though actual achieved parameter values, and circuit families are different. Error bars represent mean, maximum and minimum values achieved over 5 independent training runs on the same data set. (a)  Initial [\crule[cyan]{0.2cm}{0.2cm}] and trained [\crule[Lavender]{0.2cm}{0.2cm}] $\QAOA$ circuit parameters for three qubits. Target $\IQP$ circuit parameters [\crule[green]{0.2cm}{0.2cm}]. (b) Final learned probabilities of $\IBM$ ($\QAOA$) [\crule[blue]{0.2cm}{0.2cm}] circuit versus `data' probabilities ($\IQP$) [\crule[black]{0.2cm}{0.2cm}]. (c) Total Variation Distance during training. (d) Sinkhorn divergence for  400 training samples, 100 test samples, using a Hamming optimal transport cost.}
    \label{fig:autocompilationthreequbits}
\end{figure}

\section{Classical Simulation of Quantum Computations}
\label{supp_matt:hardness}

In this section, we provide relevant definitions relating to quantum computational supremacy, and the notions of simulation error we utilise. We then provide the theorem detailing the classes of parameters for which the $\IBM$ is hard to simulate, which we alluded to in the main text, and prove said theorem. We also show how the conditions of that theorem can be enforced for most of the circuit families encountered during training. Firstly, however, it is necessary to provide relevant definitions.

The central question behind the exploration of quantum supremacy is whether or not it is possible to design a classical algorithm which could, for any output distribution $p(\mathbf{x})$ that can be sampled from by a quantum computer, produce samples from a probability distribution $q(\mathbf{x})$, which is \textit{close} to $p(\mathbf{x})$. This notion of reproducing a quantum distribution can be formalised as classical simulation, of which there are typically two types. For our purposes, the more relevant notion is that of \textit{weak simulation}, which better captures the process of sampling.
\begin{definition}[Strong and Weak Classical Simulation\citeS{bremner_classical_2011, fujii_commuting_2017}]  \label{defn:strong_weak_sim}
    A uniformly generated quantum circuit, $C$, from a family of  circuits, with input size $n$, is \emph{weakly simulatable} if, given a classical description of the circuit, a classical algorithm can produce samples, $\mathbf{x}$, from the output distribution, $p(\mathbf{x})$, in $poly(n)$ time. 
    
    On the other hand, a \emph{strong simulator} of the family would be able to compute the output probabilities, $p(\mathbf{x})$, and also all the marginal distributions over any arbitrary subset of the outputs. Both of these notions apply to with some notion of error, $\epsilon$.
\end{definition}
As mentioned in ref. \citeS{bremner_classical_2011}, strong simulation is a harder task than weak simulation, and it is this weak simulatability which we want to rule out as being classically hard. The suitable notion of error, $\epsilon$, for strong simulation would be the precision to which the probabilities can be computed. 

The specific instances of problems which are classically hard is captured by \text{worst case} and \textit{average case} hardness. Informally, worst case implies there is \textit{at least} one instance of the problem which is hard to simulate. This worst case hardness holds for $\IQP/\QAOA$ circuits, ref. \citeS{bremner_classical_2011, farhi_quantum_2016}, which we will illustrate shortly. A stronger notion is that of average case hardness, which has been proven for Random Circuit Sampling\citeS{bouland_quantum_2018}, and \textsf{BosonSampling} \citeS{aaronson_computational_2013}, but is only conjectured to hold for $\IQP$ circuits for example.

One could ask ``What if we do not care about getting samples from the \textit{exact} distribution, and instead an approximation is good enough?". `Exact' in this case refers to the outcome probabilities of the simulator being identical to those outputted by the quantum device; $q(\mathbf{z}) = p(\mathbf{z})~ \forall \mathbf{z}$ or $\epsilon = 0$. This is a very important and relevant question to ask when discussing quantum supremacy since experimental noise means it could be that even quantum computers cannot produce the exact dynamics that they are supposed to, according to the theory. Worse still, noise typically results in decoherence and the destruction of entanglement and interference in quantum circuits, so in the presence of noise the resulting output distribution could become classically simulatable.

We wish to have strong theoretical guarantees that experiments which claim to demonstrate supremacy, even in the presence of reasonable noise, do in fact behave as expected. Since we are dealing with probability distributions, there are many notions of error one could choose. One of the simplest examples is multiplicative error.

\begin{definition}[Multiplicative Error]
    \label{defnmulterror}
    A circuit family is weakly simulatable within multiplicative (relative) error, if there exists a classical probabilistic algorithm, $Q$, which produces samples, $\mathbf{z}$, according to the distribution, $q(\mathbf{z})$,  in time which is polynomial in the input size, which differs from the ideal quantum distribution, $p(\mathbf{z})$, by a multiplicative constant, $c > 1$:
    \begin{equation}
        \frac{1}{c}p(\mathbf{z}) \leq q(\mathbf{z}) \leq c p(\mathbf{z}) \qquad \forall \mathbf{z} \label{multerror}
    \end{equation}
\end{definition}

As noted in ref.\citeS{fujii_impossibility_2018}, it would be desirable to have a quantum sampler which could achieve the bound of \eqref{multerror}, but this is not believed to be experimentally achievable. It is not believed that a \textit{physical} quantum device could achieve such a multiplicative error bound on its probabilities, relative to its ideal functionality (i.e.\@ replacing $q$ in \eqref{multerror} by the output distribution of a noisy quantum device). That is why much effort has been put into trying to find systems for which supremacy could be provably demonstrated according to the total variation distance error condition, \eqref{variationdistanceerror}, which is easier to achieve on near term quantum devices. 

\begin{definition}[Total Variation ($\TV$) Error]
    \label{defnvardiserror}
    A circuit family is weakly simulable within variation distance  error, $\epsilon$, if there exists a classical probabilistic algorithm, $Q$, which produces samples, $\mathbf{x}$, according to the distribution, $q(\mathbf{x})$, in polynomial time, such that it differs from the ideal quantum distribution, $p(\mathbf{x})$ in total variation distance, $\epsilon$:
    \begin{equation}
       \TV(p, q)  \coloneqq \frac{1}{2}\sum\limits_{\mathbf{x}}|p(\mathbf{x})- q(\mathbf{x})| \leq \epsilon \label{variationdistanceerror} 
    \end{equation}
\end{definition}

Intuitively, multiplicative error sampling is `harder' since it requires stricter relationships between the real and target distributions at every outcome. For example multiplicative error must preserve the probability of those outcomes with zero probability of occurring in the target distribution which is particularly unlikely to happen in noisy devices. Total variation distance error bounds indicate that the distributions can have a bounded difference in probability that is constant over each output, as opposed to multiplicative error where the difference depends on the output. This allows for greater flexibility by requiring the distributions to \emph{only} be similar overall. One could also ask for hardness within other metrics, perhaps those studied here, including the Wasserstein metric for example.

\subsection{Multiplicative Error Hardness of Ising Born Machine Circuits}

Now that we have the necessary definitions, we demonstrate why the core of the $\IBM$ (the underlying circuit) should be hard to sample from efficiently by purely classical means, up to multiplicative error, by pulling together the relevant hardness results from previous works in this area. Specifically, we formalise the discussion in the above text and prove the following theorem:

\begin{theorem}\label{thm:ibmmultsimulationhardness}
    If the output probability distributions generated by uniform families of $\IBM({\boldsymbol\theta})$ circuits could be weakly classically simulated to within multiplicative error $1 \leq c \leq \sqrt{2}$ then $\PBPP = \PP$ (and hence the polynomial hierarchy collapses to the third level), where $\forall k, i, j,$:
    \begin{equation}
        J_{ij}, b_{k} = 
        \begin{cases}
            \frac{(2l+1)\pi}{8d} &\text{ for integers, } d, l\\
            2\nu \pi& \nu \in[0,1) \text{ irrational.}
        \end{cases}\label{hardparametervalues1}
    \end{equation}
    and we consider the following instances:
    \begin{align}
        \Gamma_k = 0, \Delta_k &=  
        \begin{cases}
            \frac{(2l+1)\pi}{8d} &\text{ for integers, } d, l\\
            2\nu \pi& \nu \in[0,1) \text{ irrational.}
        \end{cases}\label{hardparametervalues2}\\
        \Delta_k = 0, \Gamma_k &= 
        \begin{cases}
            \frac{(2l+1)\pi}{4d} &\text{ for integers, } d, l\\
            2\nu \pi& \nu \in[0,1) \text{ irrational.}
        \end{cases}\label{hardparametervalues3}\\
        \Delta_k = 0, \Gamma_k = \Sigma_k &=  
        \begin{cases}
            \frac{(2l+1)\pi}{2\sqrt{2}d} &\text{ for integers, } d, l\\
            2\nu \pi& \nu \in[0,1) \text{ irrational.}
        \end{cases}\label{hardparametervalues4}
\end{align}
\end{theorem}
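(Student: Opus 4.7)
The plan is to reduce each of the four parameter families in \eqref{hardparametervalues1}--\eqref{hardparametervalues4} to a known multiplicative-error hardness result, and then apply the now-standard postselection argument of Bremner--Jozsa--Shepherd. Concretely, I would assume toward contradiction that a $\BPP$ machine can sample from a distribution $q$ with $\tfrac{1}{c}p_{\boldsymbol\theta}(\mathbf{x}) \leq q(\mathbf{x}) \leq c\,p_{\boldsymbol\theta}(\mathbf{x})$ for every $\mathbf{x}$, with $1\leq c\leq\sqrt{2}$, and show this collapses $\PP \subseteq \PBPP$: since $\PBQP=\PP$ (Aaronson), this would require $\PBPP = \PP$ and, by Toda's theorem, collapse the polynomial hierarchy to its third level.

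The reduction proceeds family by family. For the instance in \eqref{hardparametervalues4}, setting $\Gamma_k=\Sigma_k$ inside $U_f$ with the stated angles reproduces (up to a global phase) the $H^{\otimes n}$ layer used in the main-text recipe for $\IQP$ circuits, so these are exactly the random-$\IQP$ instances shown to be classically hard to simulate multiplicatively in \cite{bremner_classical_2011}, with the Fujii--Morimae extension \cite{fujii_commuting_2017} handling the Ising structure of $U_z$. For \eqref{hardparametervalues3}, setting $\Gamma_k$ alone recovers the final Pauli-$X$ rotation layer of depth-$1$ $\QAOA$ circuits whose hardness is established in \cite{farhi_quantum_2016}. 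For \eqref{hardparametervalues2}, $Y$-rotations are related to $X$-rotations by Clifford conjugation, which allows an identical reduction. In every case, the specific rational angles of the form $(2l+1)\pi/(kd)$ are precisely the ``non-Clifford'' ingredient supporting universality under postselection: combined with the Clifford-plus-$T$ backbone of the $\IQP$/$\QAOA$ hardness proofs, these angles let one encode arbitrary $\BQP$ computations after postselection on ancilla outcomes. The irrational $2\nu\pi$ branch is immediate: Weyl equidistribution guarantees a dense orbit, and any sufficiently good approximation to the hard angle by the Solovay--Kitaev theorem suffices.

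Given these reductions, the final step is the postselection bookkeeping. One picks a subset of output registers to condition on, and the multiplicative bound on $q$ versus $p_{\boldsymbol\theta}$ carries through to a multiplicative bound on the conditional distributions with constant $c^2 \leq 2$. This suffices to separate the acceptance gap in the definition of $\PBQP$ (which is $1/2$-vs-$1/2$) by a constant margin using $\PBPP$, so any $\PBQP$ language is decided in $\PBPP$; chaining with $\PP = \PBQP$ gives the collapse. The parameter updates in each epoch preserve the angle families (either by landing on another $(2l+1)\pi/(8d)$ rational or on an irrational multiple of $\pi$, as argued in the bullet about gradient shifts $\theta \pm \pi/2$), which in turn lets the hardness claim persist through training.

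The main obstacle will be the second step: carefully verifying that for each of the four cases, the chosen angle set, together with the Ising structure of $U_z$, actually supports the postselection universality construction with the explicit multiplicative constant $\sqrt{2}$. In particular the bound $c \leq \sqrt{2}$ is tight for the Aaronson-style acceptance-probability argument, and any loss incurred by a Solovay--Kitaev approximation of hard angles or by the slight non-standard form of $U_f$ in \eqref{finalmeasurementgate} must be absorbed without weakening that constant. Sorting out the global phase and basis-change conventions relating $U_f$ to the canonical final Hadamard/Pauli-rotation layers used in the cited hardness theorems is where most of the care is needed; everything else is bookkeeping on top of \cite{bremner_classical_2011, farhi_quantum_2016, fujii_commuting_2017}.
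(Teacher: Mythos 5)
Your proposal follows essentially the same route as the paper: reduce each parameter family to the known postselection-hardness results for $\IQP$ (Bremner--Jozsa--Shepherd), depth-$1$ $\QAOA$ (Farhi--Harrow) and general Ising angles (Fujii--Morimae), handle the $Y$-rotation family by a Clifford-type relation back to the Hadamard/$\IQP$ case, and conclude via $\PBQP = \PP$ and the collapse of the polynomial hierarchy to the third level. The only cosmetic difference is that the paper realises the hard angles by exact or equidistributed \emph{powers of a single gate} rather than by Solovay--Kitaev, absorbing the residual error of the irrational branch into a fault-tolerance threshold --- precisely the approximation issue you flag as the main obstacle.
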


As discussed in the main text, the following choices of $\mathbf{\Gamma}, \mathbf{\Delta}, \mathbf{\Sigma}$ give the more well known circuit classes:

\begin{align}
    \IBM\left(\{J_{ij}, b_{k}\}, \forall k: \Gamma_k = \frac{\pi}{2\sqrt{2}}, \Delta_k = 0, \Sigma_k =  \frac{\pi}{2\sqrt{2}}\right) &= \IQP(\{J_{ij}, b_{k}\})\\
    \IBM(\{J_{ij}, b_{k}\}, \mathbf{\Gamma} = -\mathbf{\Gamma}, \mathbf{0} , \mathbf{0}) &= \QAOA_{p=1}(\{J_{ij}, b_{k}\}, \mathbf{\Gamma})
\end{align}

\begin{proof}

The proof of \theref{thm:ibmmultsimulationhardness} involves stitching together several results, some well known, some less so, and, to the best of out knowledge, some unknown. The proof will proceed systematically through the instances of the parameters, $\left\{ J_{ij}, b_{k}\right\}, \mathbf{\Gamma}, \mathbf{\Delta}, \mathbf{\Sigma}$, and will show that in each of the cases mentioned in \theref{thm:ibmmultsimulationhardness} the class of circuits generated cannot, in the worst case, be simulated to within multiplicative error efficiently by any classical means. For the remainder of this proof, we use the phrase `simulate' to mean exactly this.

\subsubsection*{\texorpdfstring{$\IQP$}{IQP}}\label{appa:iqphardnessproof}
As stated above, $\IQP$ circuits correspond to the setting of the parameters, $\{J_{ij}, b_{k}\}, \forall k: \Gamma_k = \pi/2\sqrt{2} , \Delta_k = 0, \Sigma_k= \pi/2\sqrt{2} $. This means the unitary $U_z(\theta)$ is applied to the initial superposition state, $\ket{+}^{\otimes n}$, followed finally by a Hadamard applied to all qubits, $H^{\otimes n}$. It is known that $\IQP$ circuits, with the homogeneous parameters $\alpha = J_{ij} = b_{k} =  \pi/8\  \forall i,j,k$ in \eqref{iqp_ising_hamiltonian}, are hard to simulate in the worst case \citeS{bremner_classical_2011}. We will denote $\IQP$ circuits with the two sets of parameters used to define them, $\{J_{ij}, b_k\}$, by $\IQP({J_{ij}}, {b_k})$.

\begin{theorem}[Theorem 2 from ref. \citeS{bremner_classical_2011}]\label{thm:iqphardnessbremner}
If the output probability distributions generated by uniform families of $\IQP(\pi/8, \pi/8)$ circuits could be weakly classically simulated to within multiplicative error $1 \leq c \leq \sqrt{2}$ then $\PBPP$ = $\PP$ (and hence the polynomial hierarchy collapses to the third level).
\end{theorem}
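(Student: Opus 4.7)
The plan is to follow the strategy of Bremner--Jozsa--Shepherd by establishing a chain of complexity-class inclusions whose collapse is known to force $\PH$ to collapse. Specifically, I will show that a hypothetical classical multiplicative simulator of $\IQP(\pi/8,\pi/8)$ circuits, combined with postselection, would give $\PostBPP \supseteq \PostIQP$; then I will argue $\PostIQP = \PostBQP = \PP$, yielding $\PP \subseteq \PostBPP \subseteq \PH$. Combined with Toda's theorem ($\PH \subseteq \Pee^{\PP}$), this forces $\PH$ to collapse to its third level.

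First I would set up postselection: $\PostBPP$ consists of languages decided by a randomised algorithm that may condition on a non-zero-probability event $E$, with acceptance probability bounded in the usual $[\tfrac13,\tfrac23]$ way conditional on $E$. The key lemma is that, given a classical sampler producing $q$ with $p(\mathbf{z})/c \leq q(\mathbf{z}) \leq c\,p(\mathbf{z})$ for every output $\mathbf{z}$ of the quantum circuit, conditioning on any subset $E$ of the outputs satisfies
\begin{equation}
\frac{1}{c^{2}}\,\Pee_{p}[\mathbf{z}\in A\mid E] \;\leq\; \Pee_{q}[\mathbf{z}\in A\mid E] \;\leq\; c^{2}\,\Pee_{p}[\mathbf{z}\in A\mid E].
\end{equation}
Thus if $c\leq\sqrt{2}$ then $c^{2}\leq 2$, and the standard $(\tfrac23,\tfrac13)$ acceptance/rejection gap of $\PostBQP$ remains strictly on the correct side of $1/2$ after classical simulation, so any $\PostIQP$ computation can be mimicked in $\PostBPP$.

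Next I would prove $\PostIQP = \PostBQP$ via an IQP gadget construction. The standard approach is to realise a universal (non-diagonal) gate set by measurement-based teleportation gadgets in which the teleportation byproduct operators become either identity or a $\pi/8$ phase, and then postselect on the outcomes that yield the identity correction. Since $\IQP$ circuits natively support diagonal $\exp(\mathrm{i}\alpha Z)$, $\exp(\mathrm{i}J_{ij}Z_iZ_j)$ rotations at angle $\pi/8$, sandwiched by Hadamards, one can stitch together Hadamard and $T=\exp(\mathrm{i}\pi Z/8)$ via these gadgets, which is a universal gate set. Combined with Aaronson's theorem $\PostBQP = \PP$, this gives $\PostIQP = \PP$. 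Strictly speaking only $\PostIQP \supseteq \PP$ is needed here, since trivially $\PostIQP \subseteq \PostBQP = \PP$.

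Putting the pieces together: the hypothetical multiplicative simulator gives $\PP = \PostIQP \subseteq \PostBPP$. Since $\PostBPP \subseteq \mathsf{BPP}^{\NP} \subseteq \Sigma_{3}^{\Pee}$ (via Han--Hemaspaandra--Thierauf), we obtain $\PP \subseteq \Sigma_{3}^{\Pee}$. By Toda's theorem $\PH \subseteq \Pee^{\PP}$, hence $\PH \subseteq \Pee^{\Sigma_{3}^{\Pee}} = \Sigma_{3}^{\Pee}$, i.e.\ the polynomial hierarchy collapses to its third level, contradicting the standard complexity assumption. The main obstacle is the gadget construction for $\PostIQP \supseteq \PP$: one must verify that the postselected teleportation gadgets really do lie in the syntactic $\IQP$ form (diagonal gates at angles from $\{\pi/8\}$ conjugated by Hadamards on $\ket{0}^{\otimes n}$) even when used compositionally, and that the postselection events have strictly positive probability so that conditioning is well defined. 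The constant $c\leq\sqrt{2}$ is optimal here precisely because $c^{2}=2$ is the threshold beyond which the $\PostBPP$ gap argument breaks down.
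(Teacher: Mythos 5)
The paper does not actually prove this statement: it is imported verbatim as Theorem 2 of Bremner--Jozsa--Shepherd and used as a black box inside the proof of Theorem \ref{thm:ibmmultsimulationhardness}, so your reconstruction is really being measured against the original BJS argument rather than anything in this manuscript. Your overall strategy is the correct one and matches theirs: show $\PIQP = \PBQP = \PP$ via postselected Hadamard gadgets (with the $\pi/8$ phase available natively as a diagonal gate) together with Aaronson's $\PBQP = \PP$; observe that a multiplicative-error-$c$ simulator distorts postselected conditional probabilities by at most $c^{2}$; and conclude $\PP \subseteq \PBPP$, forcing the collapse.

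Two steps fail as literally written, however. First, the gap arithmetic: with $c^{2}=2$ the standard thresholds map to $\Pr_{q}[\mathrm{acc}\mid E]\geq \tfrac{1}{2}\cdot\tfrac{2}{3}=\tfrac{1}{3}$ for yes-instances and $\leq 2\cdot\tfrac{1}{3}=\tfrac{2}{3}$ for no-instances, which overlap entirely; the yes-probability certainly does not ``remain strictly on the correct side of $1/2$'', since $\tfrac{2}{3}/2 = \tfrac{1}{3} < \tfrac{1}{2}$. The necessary repair, which BJS make, is to first amplify the postselected quantum computation to conditional error $\delta$ small enough that $(1-\delta)/c^{2} > c^{2}\delta$ by a constant margin, and then re-amplify the surviving constant gap on the $\PBPP$ side; without this the reduction does not place the language in $\PBPP$, and your closing claim about the optimality of $c=\sqrt{2}$ is not supported by the argument you give. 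Second, $\Pee^{\Sigma_{3}^{\Pee}} = \Sigma_{3}^{\Pee}$ is not a theorem --- it is itself a hierarchy collapse. The correct chain is $\PH \subseteq \Pee^{\PP} = \Pee^{\PBPP} \subseteq \Pee^{\BPP^{\NP}} = \BPP^{\NP} \subseteq \Sigma_{3}^{\Pee}$, where the crucial absorption step uses the self-lowness of $\BPP$ relative to an $\NP$ oracle, not any closure property of $\Sigma_{3}^{\Pee}$. Both defects are repairable with standard tools, but as submitted the two inequalities at the heart of the reduction are not established.
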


A collapse of $\PH$ to any level is considered unlikely, and in some sense is a generalisation of $\Pee \neq \NP$.
Next, we consider the question of for \textit{which} homogeneous parameters, $\boldsymbol\alpha_l = \alpha = J_{ij} = b_{k}$ result in $\IQP$ circuit families which are hard to simulate. The answer is quite a few of them.

\begin{theorem}[Adapted from Theorem 5 of ref.\citeS{fujii_commuting_2017}]\label{thm:iqphardnessfujii}
If the output probability distributions generated by uniform families of $\IQP(\theta, \theta)$ circuits
could be weakly classically simulated to within multiplicative error $1 \leq c \leq \sqrt{2}$ then $\PBPP = \PP$ (and hence $\PH$ collapses to the third level), where:
\begin{equation}
\theta = \begin{cases}
\frac{(2l+1)\pi}{8d} &\text{ for integers, } d, l\\
 2\nu \pi& \nu \in[0,1) \text{ irrational.}
 \end{cases}\label{hardparametervaluesiqp}
\end{equation}
\end{theorem}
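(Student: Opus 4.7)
The plan is a reduction from the BJS-hard instance $\IQP(\pi/8,\pi/8)$ of \theref{thm:iqphardnessbremner} to $\IQP(\theta,\theta)$ for every $\theta$ in the claimed set, routed through post-selection: the BJS argument already establishes $\PIQP(\pi/8,\pi/8) = \PP$, so a containment $\PIQP(\theta,\theta) \supseteq \PIQP(\pi/8,\pi/8)$ immediately implies that any polynomial-time classical weak simulator of $\IQP(\theta,\theta)$ within multiplicative error would place $\PP \subseteq \PBPP$, forcing the stated collapse of the polynomial hierarchy to its third level.

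For the rational family $\theta = (2l+1)\pi/(8d)$, I would exploit the one-parameter-subgroup structure of the diagonal layer: since each $\exp(i\theta Z_k)$ and $\exp(i\theta Z_iZ_j)$ adds angles under sequential composition and all middle-layer gates commute, stacking $d$ copies of a single gate produces angle $(2l+1)\pi/8$. When $l=0$ this is literally the BJS target $\pi/8$, already giving the containment. For general odd $2l+1 \geq 3$, I would split $(2l+1)\pi/8 = \pi/8 + l\pi/4$ and absorb the $l\pi/4$ pieces into diagonal Clifford corrections ($S$-powers on single qubits, $CZ$-powers on pairs). These Cliffords can be commuted through to the post-selection ancillas, where they become classically-simulable preparation/measurement transformations by Gottesman--Knill, leaving the residual $\pi/8$ non-Clifford structure---and hence BJS-hardness---intact.

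For the irrational family $\theta = 2\nu\pi$ with $\nu$ irrational, I would use the Fujii--Morimae-style post-selection gadget: an ancilla-assisted construction that entangles the data register with an auxiliary qubit via an $\exp(i\theta Z_i Z_j)$ rotation and conditions on a specific ancilla outcome so that the induced operation on the data register is exactly a $\pi/8$ rotation. Combined with Weyl equidistribution---which ensures that $\{k\cdot 2\nu\pi \bmod 2\pi : k \in \mathbb{N}\}$ is dense in $[0,2\pi)$, so that an appropriate integer combination of $\theta$-gates can set the gadget's ancilla amplitudes to the required ratio---this suffices to encode any $\PIQP(\pi/8,\pi/8)$ computation as a $\PIQP(2\nu\pi,2\nu\pi)$ one.

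The main obstacle is ensuring that the multiplicative-error guarantee of \defref{defnmulterror} survives the reduction: additive perturbations of gate angles can destroy multiplicative closeness on exponentially small probabilities, which is precisely where $\IQP$ amplitudes live. The resolution is that post-selection onto an event $E$ rescales conditional probabilities by $1/\Pr[E]$, so a $c$-multiplicative weak simulator of the joint distribution induces a $c^2$-multiplicative simulator of the conditional---still within the BJS window $[1,\sqrt{2}]^{2} \subseteq [1,2]$ that makes the BJS proof go through. By packaging any angle approximation \emph{inside} the post-selection gadget (so that conditioning on the correct ancilla outcome yields the $\pi/8$ rotation exactly, not up to perturbation), the reduction never incurs an uncontrolled additive error on amplitudes, and invoking \theref{thm:iqphardnessbremner} closes the argument with $\PBPP = \PP$.
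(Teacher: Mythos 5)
Your overall skeleton -- reduce to the Bremner--Jozsa--Shepherd instance $\IQP(\pi/8,\pi/8)$ by showing $\PIQPy \supseteq \PIQP(\pi/8,\pi/8)$ for the stated angles, then let the $c\mapsto c^2\le 2$ postselection bookkeeping carry the multiplicative-error guarantee through -- is the same strategy the paper uses (following Fujii and Morimae). The differences are in how you synthesise the $\pi/8$ gate from the $\theta$ gates, and both branches of your synthesis have problems.

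For the rational case $\theta=(2l+1)\pi/(8d)$, stacking $d$ copies gives $(2l+1)\pi/8$, and your proposal to peel off the surplus $l\pi/4$ as diagonal Clifford factors and ``commute them to the post-selection ancillas'' is not a valid step as stated: the surplus $C=\prod_j \exp(\mathrm{i}l\pi/4\, Z_{S_j})$ sits between the two Hadamard layers and genuinely changes the output state ($C\ket{+}^{\otimes n}\neq\ket{+}^{\otimes n}$), so it cannot simply be declared harmless by Gottesman--Knill without redesigning the BJS gadgets. The detour is also unnecessary: because $2l+1$ is odd, the congruence $m(2l+1)\equiv d \pmod{16d}$ is solvable, so a suitable repetition count $m$ yields $\exp(\mathrm{i}\pi/8\, Z_S)$ \emph{exactly}, which is the route the paper (and Fujii--Morimae) take. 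The more serious gap is in the irrational case. Your claim that an ancilla-plus-postselection gadget built from $\exp(\mathrm{i}\theta Z_iZ_j)$ rotations induces ``exactly a $\pi/8$ rotation'' is unsupported and almost certainly false: postselecting such a gadget on a basis (or $\ket{\pm}$) outcome yields operators whose entries live in the ring generated by $\mathrm{e}^{\mathrm{i}\theta}$, and Weyl equidistribution only gives density, never exact attainment of $\pi/8$ when $\nu$ is irrational. You correctly identify the danger -- an additive perturbation of gate angles does not translate into a multiplicative guarantee on exponentially small output probabilities -- but your resolution (exactness via the gadget) does not exist. The paper's resolution is different and is the ingredient your proof is missing: accept that each repeated gate is only $\epsilon_j$-close to the target, and invoke the fault-tolerance threshold theorem inside the postselected simulation, so that the \emph{logical} $\PBQP$ computation is still performed with bounded error even though the physical gates are inexact; this is what licenses $\PIQPy(\theta,\theta)=\PIQP(\pi/8,\pi/8)=\PBQP=\PP$ and closes the argument.
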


Finally we note that there is no need for the circuit parameters to be homogeneous. We may also allow each single and two qubit gate to have an independent parameter, as long as they are \textit{all} of the form of \eqref{hardparametervaluesiqp}. In \theref{thm:iqphardnessbremner} it was shown that a general computation can be simulated by $\IQP$ circuits generated by a homogeneous parameter value, $\theta = \pi/8$. We will now use the argument of ref.\citeS{fujii_commuting_2017} to show why $\IQP(J_{ij}, b_{k})$ circuits, with almost all parameter angles, can simulate $\IQP(\pi/8, \pi/8)$ efficiently. The result of ref.\citeS{fujii_commuting_2017} will be a subcase of this, which accounts for the case $J_{ij} = b_k \neq \pi/8$.  Specifically, if the original $\IQP$ circuit is generated by gates in the form $D_1(\pi/8) = e^{\mathrm{i}\frac{\pi}{8}Z}, D_2(\pi/8) = e^{\mathrm{i}\frac{\pi}{8}Z \otimes Z}$, general $\IQP$ circuits with gates acting on at most two qubits ($|S_j|\leq 2$) will be generated by gates $D_1(b_{k}) = e^{\mathrm{i}b_{k}Z}, D_2(J_{ij}) = e^{\mathrm{i}J_{ij} Z\otimes Z}$. Therefore, it is necessary to show how each of these gates can simulate each of the former gates with a homogeneous rotation angle, $\pi/8$. To do so we can use the error measure defined as follows\citeS{nielsen_quantum_2011} as the difference between the operations of two arbitrary gates on a quantum state, when maximised over all possible states:
\begin{equation}
    E(U, V)  \coloneqq \max\limits_{\ket{\psi}}||(U-V)\ket{\psi}|| \label{error}
\end{equation}
where $||\cdot||$ is the norm of a vector: $||U\ket{\psi}|| = \sqrt{\bra{\psi}U^\dagger U\ket{\psi}}$. If, by $m$ repetitions, the gate is within $\epsilon$ of the required gate with parameter $\pi/8$, $U(\pi/8+\epsilon)$, the error induced by this extra $\epsilon$ factor will be $\mathcal{O}(\epsilon)$:
\begin{align}
    E\left(U(\pi/8+\epsilon), U(\pi/8)\right) &= |1-e^{i\frac{\epsilon}{2}}| =|1-(1-i\epsilon/2-\epsilon^2/8+\mathcal{O}(\epsilon^3))|\\
    &=|i\epsilon/2+\epsilon^2/8+\mathcal{O}(\epsilon^3))| = \sqrt{\left(\epsilon^2/8\right)^2 + \left(\epsilon/2\right)^2 +\mathcal{O}(\epsilon^3)}\\
    &= \sqrt{\epsilon^4/64 + \epsilon^2/4^2 +\mathcal{O}(\epsilon^3)} = \epsilon/2\sqrt{1+\epsilon^2/16 +\mathcal{O}(\epsilon^3)} = \mathcal{O}(\epsilon)
\end{align}
Specifically, with the two-qubit gate for example:
\begin{equation}
    D_2(J_{ij})^m = D_2(mJ_{ij}) = D_2\left(\frac{\pi}{8}+\epsilon\right) \label{epsilonclose}
\end{equation}
so the gate with parameter $\theta_2$ can be made $\epsilon$ close to the required $CZ \sim D_2(\frac{\pi}{8})$ with respect to the error, as noted by ref. \citeS{fujii_commuting_2017, nielsen_quantum_2011}. The same thing holds for the single qubit gates with angle $\theta_1$ approximating the $Z$ or $T$ gates for example.

The irrationality of the parameter values makes \eqref{epsilonclose} true since $2\pi\nu m\mod 2\pi$ is distributed uniformly. See ref.\citeS{boykin_universal_1999} for a proof that any arbitrary phase can be approximated to accuracy, $\epsilon$, with $\mathsf{poly}(1/\epsilon)$ repetitions of a phase which is an irrational multiple of $2\pi$. This shows that it is possible to achieve any gate: $e^{\mathrm{i}2\nu\pi \hat{n}\dot\sigma}$ using $m$ repetitions of $(e^{\mathrm{i}2\nu\pi \hat{n}\dot\sigma})^n$ if $\nu$ is irrational. In this construction, each individual gate, $j$, will have an error, $\epsilon_j$, in contrast to ref.\citeS{fujii_commuting_2017} in which all of the errors would be the same (or a constant multiple ($\epsilon = \epsilon_j \forall k$). However, as long as this parameter, $\epsilon_j$,  for each gate is lower than the threshold for fault-tolerant quantum computation, as noted in ref.\citeS{fujii_commuting_2017}, then we can reliably simulate universal quantum computation, and hence $\PIQP( J_{ij}, b_{k}) = \PIQP(\pi/8, \pi/8)$.

\subsubsection*{\texorpdfstring{$p = 1 \QAOA$}{pQAOA} \label{appa:hardnessqaoaproof}}

Now, we can prove the analogous statement to that of $\IQP$ with this setting of the parameters, in an identical way.
\begin{enumerate}
    \item The case $\forall k: \Gamma_k =  -\pi/4$ is covered by ref.\citeS{farhi_quantum_2016} where it is shown how $\QAOA(\pi/8, \pi/8, \pi/4)$ equipped with postselection is equivalent to $\BQP$ with postselection. The proof involves the design of a gadget similar to the $\IQP$ gadget in the $\IQP$ proof of hardness, and results in a similar collapse of the $\PH$ as in Theorem (\ref{thm:iqphardnessbremner}).
    \item Extending the parameters, $J_{ij} = b_{k} = \pi/8$ in the diagonal unitary $U_z(\theta)$ to general parameters, $J_{ij}, b_k$ follows identically to the argument for $\IQP$ in the previous section. More specifically, any diagonal gate with parameter, $\pi/8$, can be simulated by a gate with any parameter of the form of \eqref{hardparametervalues1}, applied a constant number of times to get within a fixed error of the desired $\pi/8$ gate.
    \item Finally, if we wish to simulate the behaviour of any gate $\tilde{H} = e^{-\mathrm{i}\pi/4 X_k}$, by any general gate $U_f(\Gamma_k) =  e^{-\mathrm{i}\Gamma_k X_k}$ on qubit $k$, we just need to apply the latter gate a constant, $m = \mathcal{O}(1/\epsilon)$, number of times, to get within $\epsilon$, of the gate $\tilde{H}$, exactly as in \eqref{epsilonclose}. Now, we will have $n$ such gates, each requiring the application of a constant number of repetitions, $m_k = \mathcal{O}(1/\epsilon_k)$, so the total number of gates that would have to be applied is $\prod\limits_{k=1}^n m_k$.
    Overall, we will acquire a polynomial overhead in the simulation of the circuit generated by the fixed parameters, $\forall i,j,k: J_{ij} = \pi/8,b_k =  \pi/8,\Gamma_k =  -\pi/4$. Hence we can achieve universal quantum computation with postselection in exactly the same way, and hence $\PQAOA(\{J_{ij}, b_{k}\} , \mathbf{\Gamma}) = \PBQP$.
\end{enumerate}

\subsubsection*{Remaining Cases}\label{ssec:remaininghardnesscases}
The above two sections cover many of the angles for the final measurement unitary to result in a hard circuit class. We can immediately extend the argument to cover many more angles. 
The following case is a generalisation of $\IQP$:
\begin{equation}
        \Delta_k = 0, \Gamma_k = \Sigma_k =  
        \begin{cases}
            \frac{(2l+1)\pi}{2\sqrt{2}d} &\text{ for integers, } d, l\\
            2\nu \pi& \nu \in[0,1) \text{ irrational.}
        \end{cases}\label{hardparametervaluesgenhadamardapp}
\end{equation}
where the final angle is a rotation in the `Hadamard' axis, i.e.\@\@ a rotation around the axis $1/\sqrt{2}(X+Z)$, but by a more general angle than $\pi$.

Secondly, we consider when the final measurement unitary is a rotation around the Pauli-$Y$ axis, i.e.\@ the $\IBM$ has the following parameters:
\begin{equation}
        \Sigma_k = 0, \Gamma_k = 0, \Delta_k =  
        \begin{cases}
            \frac{(2l+1)\pi}{4d} &\text{ for integers, } d, l\\
            2\nu \pi& \nu \in[0,1) \text{ irrational.}
        \end{cases}\label{hardparametervaluesYbasis}
\end{equation}
where the final measurement unitary is
\begin{equation}
U_f(\Delta_k) = R_y(-2\Delta_{k}) =  e^{\mathrm{i}\Delta_{k} Y_k} \label{yrotation}
\end{equation}
We can see how this family is also hard to simulate by relating it to $\IQP$ circuits as follows. If there exists no classical randomised algorithm to produce samples, $\mathbf{z}$, in polynomial time according to the $\IQP$ distribution:

\begin{equation}
p_{\IQP}(\mathbf{z}) = |\bra{\mathbf{z}}H^{\otimes n}U_z(\boldsymbol\alpha) H^{\otimes n}\ket{0}^{\otimes n}|^2 \label{iqpdist2}
\end{equation}
Let $p^*$ be the output distribution produced by these types of circuits with a final Pauli-$Y$ rotation:
\begin{equation}
p^*(\mathbf{z}') = |\bra{\mathbf{z}'}\bigotimes\limits_{k=1}^n e^{\mathrm{i}\frac{\Delta_k}{2} Y_k}U_z(\boldsymbol\alpha) H^{\otimes n}\ket{0}^{\otimes n}|^2 \label{iqpydistribution}
\end{equation}
This is due to the relationship: $H  =\frac{1}{\sqrt{\mathrm{i}}}XY^{\frac{1}{2}} = \frac{1}{\sqrt{\mathrm{i}}}XR_y(\frac{\pi}{2})$. Therefore, choosing $\Delta_k = \pi/4$, we get that the two distributions in \eqref{iqpdist2}, \eqref{iqpydistribution} and are related as follows:

\begin{align}
p_{\IQP}(\mathbf{z}) &= |\mathrm{i}^{-n}\bra{\mathbf{z}}X^{\otimes n}\sqrt{Y}^{\otimes n}U_z(\boldsymbol\alpha) H^{\otimes n}\ket{0}^{\otimes n}|^2\\
&= |\bra{\mathbf{z}'}\bigotimes\limits_{k=1}^ne^{\mathrm{i}\frac{\pi}{4} Y_k}U_z(\boldsymbol\alpha) H^{\otimes n}\ket{0}^{\otimes n}|^2  = p^*(\mathbf{z}')
\end{align}
so $\mathbf{z}'$ is simply $\mathbf{z}$ with every bit flipped. Clearly, if one could produce samples, $\mathbf{z}$, by some classical means efficiently, then the same algorithm with one extra step can be used to produce the samples, $\mathbf{z}'$, and hence the choice of parameters $\IBM(\boldsymbol\alpha, \mathbf{0}, \{\Delta_k = \pi/4\},\mathbf{0})$ is also classically hard to sample from, where $\boldsymbol\alpha = \{ J_{ij}, b_{k}\}$ have the same constraints as in $\IQP$ or $\QAOA$. The extension from $\Delta_k = \pi/4$ to almost any general $\Delta_k$, according to \eqref{hardparametervaluesYbasis}, follows in an identical fashion from the above two cases so these circuit classes with postselection are also equal to $\PBPP$ and if they could be simulated, once again the $\PH$ collapses to the third level.

\end{proof}

\subsection{Total Variation Distance Hardness of Ising Born Machine Circuits \label{supp_matt:ibm_variation_hardness}}

We can improve the hardness of the model by incorporating a stronger result about the circuit class, $\IQP$. The $\IBM$ model must be initialised to some setting of the circuit parameters to begin the training process. One such possible initialisation is to randomly assign $J_{ij}, b_{k}$ to some subset with uniform probability. The random initialisation is typical in machine learning, but it is not the only way one could initialise the parameters. In this case however, we will do so in order to use the $\IQP$ result of ref.\citeS{bremner_average-case_2016}.

Firstly, define the \textit{partition function},  $\mathcal{Z}$, associated with the Ising Hamiltonian, \eqref{iqp_ising_hamiltonian_supp}:

\begin{align}
 \mathcal{H}  \coloneqq \mathrm{i}\sum\limits_{i<j} J_{ij}z_iz_j + \mathrm{i}\sum\limits_{k=1}^n b_k z_k \label{iqp_ising_hamiltonian_supp}\\
    \mathcal{Z}  \coloneqq \sum\limits_{z\in\{\pm 1\}^n}e^{  \mathrm{i}\sum\limits_{i<j} J_{ij}z_iz_j + \mathrm{i}\sum\limits_{k=1}^n b_k z_k} \label{iqp_isingpartition_fun_supp}
\end{align}
Now, an output amplitude of an $\IQP$ circuit can be written as this partition function:
\begin{equation}
    2^n\bra{\mathbf{z}}U_z(\boldsymbol\alpha)\ket{\mathbf{z}} = \mathcal{Z}
\end{equation} 
The hardness of $\IQP$ circuits is related to computing this partition function in a similar way that in which $\BosonSampling$ is related to computing the permanent of a matrix\citeS{aaronson_computational_2013}.
Now, we can immediately use the following theorem:

\begin{theorem}[From ref.\citeS{bremner_average-case_2016}] 
\label{thm:iqphardnessbremneradditive}
    Assume it is $\#\Pee$-hard to approximate $|\mathcal{Z}|^2$ up to a relative error  $1/4+o(1)$ for $1/24$ fraction of instances over the choice of the weights and biases, $J_{ij}, b_k$. If it is possible to classically sample from the output probability distribution of any $\IQP$ circuit in polynomial time, up to an additive error of $1/384$ in total variation distance, then there is a $\BPP^{\NP}$ algorithm to solve any problem in $\Pee^{\#\Pee}$, and hence the polynomial hierarchy collapses to the third level)
\end{theorem}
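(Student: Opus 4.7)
The plan is to derive a $\BPP^{\NP}$ algorithm for approximating $|\mathcal{Z}|^2$ from the hypothetical classical sampler, and then invoke Toda's theorem to conclude the collapse. I would proceed by contradiction: assume there exists a classical polynomial-time sampler $\mathsf{C}$ that, on any $\IQP$ circuit description, produces samples from a distribution $q$ with $\TV(p, q) \leq 1/384$, where $p$ is the true output distribution.

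First, I would apply Stockmeyer's counting theorem. Since $\mathsf{C}$ runs in classical randomised polynomial time, the probability $q(\mathbf{x})$ that it outputs any fixed $\mathbf{x}$ is a quantity computable by a $\#\Pee$ machine, and Stockmeyer's theorem provides a $\BPP^{\NP}$ procedure that estimates $q(\mathbf{x})$ to within a multiplicative factor $1 \pm 1/\mathrm{poly}(n)$. This yields a $\BPP^{\NP}$ estimator $\tilde{q}(\mathbf{x})$ of the sampler's probabilities at any target string. Second, I would exploit a symmetry of $\IQP$ circuits so that estimating $p(0^n)$ of a random instance is equivalent to estimating $|\mathcal{Z}|^2/4^n$ with random coefficients $\{J_{ij},b_k\}$: a target string $\mathbf{x}$ can be absorbed into the single-qubit phases, mapping the problem of estimating $p(\mathbf{x})$ at arbitrary $\mathbf{x}$ for a fixed circuit to estimating $p(0^n)$ over a modified distribution of circuits drawn from the same family.

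The third and most delicate step is to upgrade the additive $\TV$ bound into a multiplicative bound on a sufficiently large fraction of instances. Writing the TV guarantee as $\sum_{\mathbf{x}} |p(\mathbf{x}) - q(\mathbf{x})| \leq 1/192$ and then, by the hiding argument, averaging over the random circuit ensemble, I would apply Markov's inequality twice (once over the sampling error and once over Stockmeyer's internal randomness), combining the results with a union bound. A careful accounting, following the scheme of Bremner--Montanaro--Shepherd, shows that the surviving fraction of instances for which $\tilde{q}(0^n)$ approximates $p(0^n) = |\mathcal{Z}|^2/4^n$ to within relative error $1/4 + o(1)$ is strictly greater than $1 - 1/24$; this is exactly where the constants $1/384$ and $1/24$ have been chosen to make the arithmetic close.

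Finally, by the stated $\#\Pee$-hardness of approximating $|\mathcal{Z}|^2$ with relative error $1/4 + o(1)$ on a $1/24$ fraction of instances, the $\BPP^{\NP}$ procedure just constructed would solve a $\#\Pee$-hard problem, giving $\Pee^{\#\Pee} \subseteq \BPP^{\NP}$. Toda's theorem ($\PH \subseteq \Pee^{\#\Pee}$) then collapses $\PH$ into $\BPP^{\NP} \subseteq \Sigma_3^{\Pee}$, proving the claimed collapse to the third level. The main obstacle I expect is the bookkeeping in the third step: one must ensure that the Stockmeyer approximation error, the Markov slack, and the union bound over the two sources of randomness are all absorbed within the $1/24$ failure budget, which is what pins down the precise constant $1/384$ and is the reason the theorem cannot tolerate a noticeably larger TV error.
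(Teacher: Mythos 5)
The paper does not actually prove this statement: it is imported verbatim from Bremner--Montanaro--Shepherd (ref.~\citeS{bremner_average-case_2016}), and the text surrounding it only gives a one-sentence outline (``converts an average-case conjecture in multiplicative error into a worst-case result in total variation error'') before deferring to that reference. Your sketch reconstructs the genuine BMS proof, and its skeleton is right: Stockmeyer's counting theorem applied to the hypothetical sampler, the hiding argument that absorbs the target string $\mathbf{x}$ into the single-qubit $Z$-phases so that estimating $p(\mathbf{x})$ becomes estimating $p(0^n)=|\mathcal{Z}|^2/4^n$ over the same circuit ensemble, Markov's inequality applied to the $\TV$ guarantee, and Toda's theorem to get the collapse from $\Pee^{\#\Pee}\subseteq\BPP^{\NP}\subseteq\Sigma_3^{\Pee}$.

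There is, however, one genuine gap in your third step, and it changes the quantifiers. Markov plus the $\TV$ bound only gives an \emph{additive} error of order $2\epsilon/(\delta 2^n)$ on a $1-\delta$ fraction of (hidden) instances; to turn this into a \emph{relative} error of $1/4$ on $|\mathcal{Z}|^2/4^n$ you need a lower bound $|\mathcal{Z}|^2 \gtrsim 2^n$ on a constant fraction of instances. This is the anticoncentration ingredient: for this ensemble BMS compute $\mathbb{E}|\mathcal{Z}|^2 = 2^n$ and $\mathbb{E}|\mathcal{Z}|^4 \leq 3\cdot 4^n$, so Paley--Zygmund gives $\Pr[|\mathcal{Z}|^2 \geq 2^n/2] \geq 1/12$. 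Taking $\delta = 1/24$ and $\epsilon = 1/384$, the intersection of the good-Markov set (measure $\geq 1 - 1/24$) with the anticoncentrated set (measure $\geq 2/24$) has measure at least $1/24$, and on that set the additive error $2\epsilon/(\delta 2^n) = \tfrac{1}{8}\cdot 2^{-n}$ is a relative error $\leq 1/4$ of $|\mathcal{Z}|^2/4^n \geq \tfrac{1}{2}\cdot 2^{-n}$. So the algorithm is only guaranteed to succeed on a $1/24$ fraction of instances --- not, as you claim, on ``strictly greater than $1-1/24$'' of them --- and the conjecture is deliberately phrased so that succeeding on a $1/24$ fraction already solves a $\#\Pee$-hard problem. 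Without the Paley--Zygmund bound your ``careful accounting'' cannot close, and the direction of the fraction bound is inverted. (A minor further point: you do not need Markov over Stockmeyer's internal randomness, since its failure probability can be made exponentially small.)
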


This holds if the parameters are chosen uniformly at random from $\{J_{ij}, b_k\} \in\{0,\frac{\pi}{8}, \dots, \frac{7\pi}{8}\}$, which corresponds to randomly choosing a circuit from the $\IQP$ circuit class according to some measure over the unitary group. 

We refer to ref.\citeS{bremner_average-case_2016} for the proof of the above but outline it here. This proof relies on a conjecture (included in Theorem (\ref{thm:iqphardnessbremneradditive})), which claims that \textit{on average} (i.e.\@\@ over a $\frac{1}{24}$ fraction of instances) the Ising partition function is hard to compute up to a multiplicative error. Again, this holds only in the worst case, and converts an average case conjecture in multiplicative error into a worst case result in total variation error. 

If we were to choose the setting of parameters such that they correspond to an $\IQP$ circuit, and initialise $J_{ij}, b_k$ to the values above, then we shall start with an $\IBM$ in a regime which is also hard to simulate classically up to a variation distance error in the worst case. A random initialisation of parameterised quantum circuits has been shown to potentially lead to `barren plateaus'\citeS{mcclean_barren_2018}, in which the gradient with respect to some quantum circuits becomes exponentially small, and so one would need exponential resources to estimate it. This indeed could be an issue for training such $\IBM$ circuits as the number of qubits scales, but this question is currently under investigation, with some potential solutions found for circuit initialisation \citeS{grant_initialization_2019, verdon_learning_2019}. However, in this work, we assume a random (but fixed) initialisation for simplicity as we are solely interested in the performance of various cost functions.

Collecting the results required in Theorem (\ref{thm:ibmmultsimulationhardness}) is necessary, since we want to make the strongest arguments for the intractability of classically simulating the $\IBM$. If we were not careful, the parameter updates could lead us into a regime which was classically simulatable.
Now, the initialisation of the parameters, $\theta^{(0)}$, will lead to a circuit class which is hard to sample from classically in the worst case, up to variation distance error, if $U_f$ is chosen such that the $\IBM$ is exactly an $\IQP$ circuit, otherwise it will only be (provably) hard up to multiplicative error. These samples can be used to compute the cost function, $\mathcal{L}_B$. 

Computing the gradients, $\frac{\partial \mathcal{L}}{\partial \theta_k}$, for all choices of cost function, $B = \{\MMD, \SD, \SH\}$ requires running parameter shifted circuits, $p^-_{\theta_k}, p^+_{\theta_k}$. If the original circuit is hard to sample from, these shifted circuits will \textit{also} be hard to sample from. This is due to the fact that they only differ by a single parameter shift, which can be reabsorbed into the original parameters in such a way that the resulting circuit also resides in the hard circuit classes. As an example, using an $\IQP$ circuit, if the parameter to be updated is $J_{ij}$, then that parameter should have been initialised according to Theorem (\ref{thm:iqphardnessfujii}):
\begin{equation}
    J_{ij} = \begin{cases}
        \frac{(2l+1)\pi}{8d} &\text{ for integers, } d, l\\
        2\nu \pi& \nu \in[0,1) \text{ irrational.}
    \end{cases} \label{hardanglesend}
\end{equation}
Therefore:
\begin{equation}
    J_{ij}\pm \frac{\pi}{2} = \begin{cases}
        \frac{(2(l \pm 4d)+1)\pi}{8d} &\text{ for integers, } d, l\\
        2\left(\frac{2\nu \pm 1}{4}\right) \pi& \nu \in[0,1) \text{ irrational.}
    \end{cases}
\end{equation}
A relabelling of $l\pm 4d \rightarrow k, \frac{1}{4}(2\nu \pm 1) \rightarrow \mu$, where $\mu$ is irrational if $\nu$ is, and $k$ is still an integer, gives that the parameter shifted circuits should have exactly the same structure as (\ref{hardanglesend}), and hence should be just as hard. 

The above argument illustrates how the circuits required to compute the gradient for a given parameter, $J_{ij}$, should be as hard to sample from as the original circuit. However, the training algorithm requires we update the parameters of the $\IBM$ circuit, over a number of epochs, in order to provide better fits to the data. As detailed above, this is done using the following update rule:
\begin{equation}
    \theta_l^{(d+1)} \leftarrow \theta_l^{(d)} - \eta \frac{\partial \mathcal{L}_B}{\partial \theta^{(d)}_l} \label{updaterule}
\end{equation}
We can ensure that the system remains in a class which has worst case hardness to simulate as long as we start with an initial configuration of the parameters that demonstrates supremacy, and update them in such a way that this remains the case. For example, if we choose the initial parameters, $\theta^{(0)} = 2\pi\nu$, where $\nu$ is irrational, then an update will be of the form $\theta^{(d+1)} = \theta^{(d)} + \mu$. If we choose $\mu = 2\pi\alpha$, then the new parameters at epoch, $d+1$, will be $\theta^{(d+1)} = 2\pi\beta, \beta = \nu+\alpha$, where $\alpha$ will be the learning rate multiplied by the gradient, $\eta \frac{\partial \mathcal{L}}{\partial \theta^{(d)}_l}$. Since $\nu$ was irrational originally, $\beta$ will also be irrational, and therefore the new configuration should be hard to simulate classically. Although, clearly we cannot allow updates like $\alpha = -\nu (+ \delta)$, where $\delta$ is rational since this will result in the parameter going to $0 (\delta)$ and could make the model classically simulatable. As an example of this, choose the following circuit parameters for the $\IBM(\{J_{ij}, b_k\}, -\mathbf{\Gamma}_k, \Delta_k = 0, \Sigma_k = 0)$, which is a $p=1 \QAOA$ circuit, and then set $\Gamma_k = 0$. In this case, we create a uniform superposition from the initial Hadamards, $1/\sqrt{2^n}\sum_\mathbf{z}\ket{\mathbf{z}}$, and then apply gates diagonal in the computational basis. This will only add phases to each contribution $e^{i\theta}\ket{z_j}$. If we have no final `measurement' gate (i.e.\@ the parameter is zero), this is equivalent to measuring the state in the computational basis at the end. Since the phases do not have the ability to interfere with each other, they will have no effect on the measurement probabilities, and the final distribution will be simply the uniform distribution over all $n$-bit strings, $\mathbf{z}$. This can clearly be classically simulated by a sequence of coin flips. Another example is if we were to allow the entangling parameters, $J_{ij} \rightarrow 0$. This would lead to a circuit composed only of single qubit gates, which again is not classically hard.

There is, of course, a caveat to this argument. The hardness results in Theorem (\ref{thm:ibmmultsimulationhardness}) are only valid in the \textit{worst case}. This means that there exists \textit{some} instance of the problem generated by the set of parameters which cannot be classically simulated efficiently. More specifically, while with each instance of `hard' parameter values, we may end up in a parameter landscape which admits a worst case hardness result, there is obviously no guarantee that the particular instance \textit{implemented} in the $\IBM$ is in fact \textit{the} hard one, and we do not claim to have found such a thing. Furthermore, there is also no guarantee that we go from the `hard' circuit generated from one set of parameters, to the hard circuit using the next set of parameters (given by the gradient update).

Further adding to this hardness argument is the required computation of the intermediate classically-hard kernel (\ref{quantumkernel}), should that be the one which is chosen. This computation must be done \textit{for all} pairs of samples required in the loss, $\mathcal{L}_B$ and its gradient, and as such, increases the conjectured hardness of classically simulating the training algorithm. This is due to the argument of ref.\citeS{havlicek_supervised_2019}, which conjectures that this kernel should be hard to compute for any classical algorithm given only a classical description of the states, up to additive error.

\section{Supremacy of Quantum Learning}
\label{supp_mat:superioritydefinitions}

Here we provide, to the best of our knowledge, the first formalisation of what we call `\textit{quantum learning supremacy}' (QLS), specifically for distribution learning. We model our definitions around those provided in ref.\citeS{kearns_learnability_1994}, which pertain to the theory of classical distribution learnability. 

As discussed in the main text, intuitively a generative quantum machine learning algorithm can be said to have demonstrated QLS, if it is possible for it to efficiently learn a representation of a distribution for which there does not exist a classical learning algorithm achieving the same end. More specifically, the quantum device has the ability to produce samples according to a distribution that is close in total variation to some target, using a polynomial number of samples from the target. However, there should be no classical algorithm which could achieve this. 

We now formalise this intuition. First we must understand the inputs and outputs to learning algorithm. The inputs are samples, either classical vectors, or quantum states encoding a superposition of such bitstring states, i.e.\@ \textit{qsamples} \citeS{schuld_supervised_2018}. A generator can be interpreted as a routine that simulates sampling from the distribution. As in ref.\citeS{kearns_learnability_1994}, we will assume only discrete distribution classes, $\mathcal{D}_n$, over binary vectors of length $n$.

\begin{definition}[Generator \citeS{kearns_learnability_1994}]\label{defn:generator_supp}
    A class of distributions, $\mathcal{D}_n$ has efficient Generators, $\GEN_{D}$, if for every distribution $D \in \mathcal{D}_n$, $\GEN_{D}$ produces samples in $\{0, 1\}^n$ according to the exact distribution $D$, using polynomial resources.
    The generator may take a string of uniformly random bits, of size polynomial in $n$, $r(n)$, as input.
\end{definition}

The reader will notice that this definition allows, for example, for the Generator to be either a classical circuit, or a quantum circuit, with polynomially many gates. Further, in the definition of a classical Generator \citeS{kearns_learnability_1994} a string of uniformly random bits is taken as input, and then transformed into the randomness of $D$. However, a quantum Generator would be able to produce its own randomness and so no such input is necessary. In this case the algorithm could ignore the input string $r(n)$. 

While we are predominately interested in efficient learning with a Generator, one can also define a similar \textit{Evaluator}:
\begin{definition}[Evaluator \citeS{kearns_learnability_1994}]\label{defn:evaluator_supp}
     A class of distributions, $\mathcal{D}_n$ has efficient Evaluators, $\EVAL_{D}$, if for every distribution $D \in \mathcal{D}_n$, $\EVAL_{D}$ produces the weight of an input $\mathbf{y}$ in $\{0, 1\}^n$ under the exact distribution $D$, i.e.\@ the probability of $\mathbf{y}$ according to $D$. The Evaluator is \emph{efficient} if it uses polynomial resources.
\end{definition}

The distinction between $\EVAL$ and $\GEN$ is important and interesting in this case since the output probabilities of even $\IQP$ circuits are $\#\Pee$-Hard to compute\citeS{bremner_classical_2011} and also hard to sample from by classical means, yet the distributions they produce can be sampled from efficiently by a quantum computer. This draws parallels to examples in ref.\citeS{kearns_learnability_1994} where certain classes of distributions are shown not to be learnable efficiently with an Evaluator, but they \textit{are} learnable with a Generator. We also wish to highlight connections to the definitions of strong and weak simulators of quantum circuits in \defref{defn:strong_weak_sim} to reinforce the similarity between computational supremacy and learning. An Evaluator for a quantum circuit would be a strong simulator of it, and a Generator would be a weak simulator. However, we keep these definitions separate in order to connect the hardness and learnability ideas explicitly.

For our purposes, the following definitions of learnable will be used. In contrast to ref.\citeS{kearns_learnability_1994}, which was concerned with defining a `good' generator to be one which achieves closeness relative to the Kullback-Leibler ($\KL$) divergence, we wish to expand this to general cost functions, $d$. This is due to the range of cost functions we have access to and our wish to connect to the quantum circuit hardness results mentioned above, which typically strive for closeness in $\TV$.

\begin{definition}[$\left( d , \epsilon \right)$-Generator]
    For a cost function, $d$, let $D \in \mathcal{D}_n$. Let $\GEN_{D'}$ be a Generator for a distribution $D'$. We say $\GEN$ is a $\left( d , \epsilon \right)$-Generator for $D$ if $d(D, D') \leq \epsilon$.
\end{definition}

A similar notion of an $\epsilon$-good Evaluator could be defined.

\begin{definition}[$\left( d , \epsilon, \C \right)$-Learnable]
    \label{def:weak_learnable_supp}
    For a metric $d$, $\epsilon > 0$, and complexity class $\C$, a class of distributions $\mathcal{D}_n$ is called $\left( d , \epsilon, C \right)$-learnable (with a Generator) if there exists an algorithm $\mathcal{A} \in \C$, called a learning algorithm for $\mathcal{D}_n$, which given $0 < \delta < 1$ as input, and given access to $\GEN_{D}$ for any distribution $D \in \mathcal{D}_n$, outputs $\GEN_{D'}$, a $\left( d , \epsilon \right)$-Generator for $D$, with high probability:
    \begin{equation}
        Pr \left[ d \left( D, D' \right) \leq \epsilon \right] \geq 1 - \delta
    \end{equation}
    $\mathcal{A}$ should run in time $\mathsf{poly}(1/\epsilon, 1/\delta, n)$.
\end{definition}

In \defref{def:weak_learnable_supp}, $\epsilon$ may, for example, be a function of the inputs to the learning algorithm. We may also wish to require a learnability definition which holds for all $\epsilon > 0$. This definition would, however, be too strong for our purposes. In order to claim quantum learning supremacy of a learning algorithm, we only need to achieve closeness up to a \textit{fixed} $\TV$ distance, as mentioned in the main text, and discussed in detail in \appref{supp_matt:hardness}. Finally, we define what it would mean for a quantum algorithm to be superior to any classical algorithm for the problem of distribution learning:

\begin{definition}[Quantum learning supremacy (QLS)]    
    \label{defn:superiority_supp}
    An algorithm $\mathcal{A} \in \BQP$ is said to have demonstrated the supremacy of quantum learning over classical learning if there exists a class of distributions $\mathcal{D}_n$ for which there exists $d ,\epsilon$ such that $\mathcal{D}_n$ is $\left( d ,\epsilon, \BQP \right)$-Learnable, but $\mathcal{D}_n$ is not $\left( d ,\epsilon, \BPP \right)$-Learnable.
\end{definition}

As mentioned above, a typical choice for $d$ would be $d = \TV$, but one could imagine weaker definitions by using weaker cost functions. One may also be more restrictive and look for a demonstration of learning superiority by a class which was efficiently $\IQP$-Learnable, but not $\BPP$-Learnable. This case may be more challenging to prove theoretically, but may be more amenable for the near term, precisely the original motivation for quantum supremacy, and, indeed, implies \defref{defn:superiority_supp}.

\bibliographystyleS{naturemag}


\begin{thebibliography}{10}
\expandafter\ifx\csname url\endcsname\relax
  \def\url#1{\texttt{#1}}\fi
\expandafter\ifx\csname urlprefix\endcsname\relax\def\urlprefix{URL }\fi
\providecommand{\bibinfo}[2]{#2}
\providecommand{\eprint}[2][]{\url{#2}}

\bibitem{preskill_quantum_2018}
\bibinfo{author}{Preskill, J.}
\newblock \bibinfo{title}{Quantum {Computing} in the {NISQ} era and beyond}.
\newblock \emph{\bibinfo{journal}{Quantum}} \textbf{\bibinfo{volume}{2}},
  \bibinfo{pages}{79} (\bibinfo{year}{2018}).
\newblock \urlprefix\url{https://doi.org/10.22331/q-2018-08-06-79}.

\bibitem{shor_polynomial-time_1997}
\bibinfo{author}{Shor, P.}
\newblock \bibinfo{title}{Polynomial-{Time} {Algorithms} for {Prime}
  {Factorization} and {Discrete} {Logarithms} on a {Quantum} {Computer}}.
\newblock \emph{\bibinfo{journal}{SIAM Journal on Computing}}
  \textbf{\bibinfo{volume}{26}}, \bibinfo{pages}{1484--1509}
  (\bibinfo{year}{1997}).
\newblock \urlprefix\url{https://doi.org/10.1137/S0097539795293172}.

\bibitem{harrow_quantum_2009}
\bibinfo{author}{Harrow, A.~W.}, \bibinfo{author}{Hassidim, A.} \&
  \bibinfo{author}{Lloyd, S.}
\newblock \bibinfo{title}{Quantum {Algorithm} for {Linear} {Systems} of
  {Equations}}.
\newblock \emph{\bibinfo{journal}{Phys. Rev. Lett.}}
  \textbf{\bibinfo{volume}{103}}, \bibinfo{pages}{150502}
  (\bibinfo{year}{2009}).
\newblock
  \urlprefix\url{https://link.aps.org/doi/10.1103/PhysRevLett.103.150502}.

\bibitem{bremner_classical_2011}
\bibinfo{author}{Bremner, M.~J.}, \bibinfo{author}{Jozsa, R.} \&
  \bibinfo{author}{Shepherd, D.~J.}
\newblock \bibinfo{title}{Classical simulation of commuting quantum
  computations implies collapse of the polynomial hierarchy}.
\newblock \emph{\bibinfo{journal}{Proceedings of the Royal Society of London A:
  Mathematical, Physical and Engineering Sciences}}
  \textbf{\bibinfo{volume}{467}}, \bibinfo{pages}{459--472}
  (\bibinfo{year}{2011}).
\newblock
  \urlprefix\url{http://rspa.royalsocietypublishing.org/content/467/2126/459}.

\bibitem{gao_quantum_2017}
\bibinfo{author}{Gao, X.}, \bibinfo{author}{Wang, S.-T.} \&
  \bibinfo{author}{Duan, L.-M.}
\newblock \bibinfo{title}{Quantum {Supremacy} for {Simulating} a
  {Translation}-{Invariant} {Ising} {Spin} {Model}}.
\newblock \emph{\bibinfo{journal}{Phys. Rev. Lett.}}
  \textbf{\bibinfo{volume}{118}}, \bibinfo{pages}{040502}
  (\bibinfo{year}{2017}).
\newblock
  \urlprefix\url{https://link.aps.org/doi/10.1103/PhysRevLett.118.040502}.

\bibitem{bremner_average-case_2016}
\bibinfo{author}{Bremner, M.~J.}, \bibinfo{author}{Montanaro, A.} \&
  \bibinfo{author}{Shepherd, D.~J.}
\newblock \bibinfo{title}{Average-{Case} {Complexity} {Versus} {Approximate}
  {Simulation} of {Commuting} {Quantum} {Computations}}.
\newblock \emph{\bibinfo{journal}{Phys. Rev. Lett.}}
  \textbf{\bibinfo{volume}{117}}, \bibinfo{pages}{080501}
  (\bibinfo{year}{2016}).
\newblock
  \urlprefix\url{https://link.aps.org/doi/10.1103/PhysRevLett.117.080501}.

\bibitem{aaronson_computational_2013}
\bibinfo{author}{Aaronson, S.} \& \bibinfo{author}{Arkhipov, A.}
\newblock \bibinfo{title}{The {Computational} {Complexity} of {Linear}
  {Optics}}.
\newblock \emph{\bibinfo{journal}{Theory of Computing}}
  \textbf{\bibinfo{volume}{9}}, \bibinfo{pages}{143--252}
  (\bibinfo{year}{2013}).
\newblock \urlprefix\url{http://www.theoryofcomputing.org/articles/v009a004}.

\bibitem{farhi_quantum_2016}
\bibinfo{author}{Farhi, E.} \& \bibinfo{author}{Harrow, A.~W.}
\newblock \bibinfo{title}{Quantum {Supremacy} through the {Quantum}
  {Approximate} {Optimization} {Algorithm}}.
\newblock \emph{\bibinfo{journal}{arXiv:1602.07674 [quant-ph]}}
  (\bibinfo{year}{2016}).
\newblock \urlprefix\url{http://arxiv.org/abs/1602.07674}.
\newblock \bibinfo{note}{ArXiv: 1602.07674}.

\bibitem{boixo_characterizing_2018}
\bibinfo{author}{Boixo, S.} \emph{et~al.}
\newblock \bibinfo{title}{Characterizing quantum supremacy in near-term
  devices}.
\newblock \emph{\bibinfo{journal}{Nature Physics}}
  \textbf{\bibinfo{volume}{14}}, \bibinfo{pages}{595--600}
  (\bibinfo{year}{2018}).
\newblock \urlprefix\url{https://doi.org/10.1038/s41567-018-0124-x}.

\bibitem{arute_quantum_2019}
\bibinfo{author}{Arute, F.} \emph{et~al.}
\newblock \bibinfo{title}{Quantum supremacy using a programmable
  superconducting processor}.
\newblock \emph{\bibinfo{journal}{Nature}} \textbf{\bibinfo{volume}{574}},
  \bibinfo{pages}{505--510} (\bibinfo{year}{2019}).
\newblock \urlprefix\url{https://www.nature.com/articles/s41586-019-1666-5}.

\bibitem{maron_automatic_1961}
\bibinfo{author}{Maron, M.~E.}
\newblock \bibinfo{title}{Automatic {Indexing}: {An} {Experimental} {Inquiry}}.
\newblock \emph{\bibinfo{journal}{J. ACM}} \textbf{\bibinfo{volume}{8}},
  \bibinfo{pages}{404--417} (\bibinfo{year}{1961}).
\newblock \urlprefix\url{http://doi.acm.org/10.1145/321075.321084}.

\bibitem{goodfellow_generative_2014}
\bibinfo{author}{Goodfellow, I.~J.} \emph{et~al.}
\newblock \bibinfo{title}{Generative {Adversarial} {Networks}}.
\newblock \emph{\bibinfo{journal}{arXiv:1406.2661 [cs, stat]}}
  (\bibinfo{year}{2014}).
\newblock \urlprefix\url{http://arxiv.org/abs/1406.2661}.
\newblock \bibinfo{note}{ArXiv: 1406.2661}.

\bibitem{cheng_information_2017}
\bibinfo{author}{Cheng, S.}, \bibinfo{author}{Chen, J.} \&
  \bibinfo{author}{Wang, L.}
\newblock \bibinfo{title}{Information {Perspective} to {Probabilistic}
  {Modeling}: {Boltzmann} {Machines} versus {Born} {Machines}}.
\newblock \emph{\bibinfo{journal}{arXiv:1712.04144 [cond-mat, physics:physics,
  physics:quant-ph, stat]}}  (\bibinfo{year}{2017}).
\newblock \urlprefix\url{http://arxiv.org/abs/1712.04144}.
\newblock \bibinfo{note}{ArXiv: 1712.04144}.

\bibitem{liu_differentiable_2018}
\bibinfo{author}{Liu, J.-G.} \& \bibinfo{author}{Wang, L.}
\newblock \bibinfo{title}{Differentiable learning of quantum circuit {Born}
  machines}.
\newblock \emph{\bibinfo{journal}{Phys. Rev. A}} \textbf{\bibinfo{volume}{98}},
  \bibinfo{pages}{062324} (\bibinfo{year}{2018}).
\newblock \urlprefix\url{https://link.aps.org/doi/10.1103/PhysRevA.98.062324}.

\bibitem{benedetti_generative_2019}
\bibinfo{author}{Benedetti, M.} \emph{et~al.}
\newblock \bibinfo{title}{A generative modeling approach for benchmarking and
  training shallow quantum circuits}.
\newblock \emph{\bibinfo{journal}{npj Quantum Inf}}
  \textbf{\bibinfo{volume}{5}}, \bibinfo{pages}{1--9} (\bibinfo{year}{2019}).
\newblock \urlprefix\url{https://www.nature.com/articles/s41534-019-0157-8}.

\bibitem{du_expressive_2018}
\bibinfo{author}{Du, Y.}, \bibinfo{author}{Hsieh, M.-H.}, \bibinfo{author}{Liu,
  T.} \& \bibinfo{author}{Tao, D.}
\newblock \bibinfo{title}{The {Expressive} {Power} of {Parameterized} {Quantum}
  {Circuits}}.
\newblock \emph{\bibinfo{journal}{arXiv:1810.11922 [quant-ph]}}
  (\bibinfo{year}{2018}).
\newblock \urlprefix\url{http://arxiv.org/abs/1810.11922}.
\newblock \bibinfo{note}{ArXiv: 1810.11922}.

\bibitem{zeng_learning_2019}
\bibinfo{author}{Zeng, J.}, \bibinfo{author}{Wu, Y.}, \bibinfo{author}{Liu,
  J.-G.}, \bibinfo{author}{Wang, L.} \& \bibinfo{author}{Hu, J.}
\newblock \bibinfo{title}{Learning and inference on generative adversarial
  quantum circuits}.
\newblock \emph{\bibinfo{journal}{Phys. Rev. A}} \textbf{\bibinfo{volume}{99}},
  \bibinfo{pages}{052306} (\bibinfo{year}{2019}).
\newblock \urlprefix\url{https://link.aps.org/doi/10.1103/PhysRevA.99.052306}.

\bibitem{romero_variational_2019}
\bibinfo{author}{Romero, J.} \& \bibinfo{author}{Aspuru-Guzik, A.}
\newblock \bibinfo{title}{Variational quantum generators: {Generative}
  adversarial quantum machine learning for continuous distributions}.
\newblock \emph{\bibinfo{journal}{arXiv:1901.00848 [quant-ph]}}
  (\bibinfo{year}{2019}).
\newblock \urlprefix\url{http://arxiv.org/abs/1901.00848}.
\newblock \bibinfo{note}{ArXiv: 1901.00848}.

\bibitem{benedetti_parameterized_2019}
\bibinfo{author}{Benedetti, M.}, \bibinfo{author}{Lloyd, E.},
  \bibinfo{author}{Sack, S.} \& \bibinfo{author}{Fiorentini, M.}
\newblock \bibinfo{title}{Parameterized quantum circuits as machine learning
  models}.
\newblock \emph{\bibinfo{journal}{Quantum Sci. Technol.}}
  \textbf{\bibinfo{volume}{4}}, \bibinfo{pages}{043001} (\bibinfo{year}{2019}).
\newblock \urlprefix\url{https://doi.org/10.1088%2F2058-9565%2Fab4eb5}.

\bibitem{tang_quantum-inspired_2018}
\bibinfo{author}{Tang, E.}
\newblock \bibinfo{title}{Quantum-inspired classical algorithms for principal
  component analysis and supervised clustering}.
\newblock \emph{\bibinfo{journal}{arXiv:1811.00414 [quant-ph]}}
  (\bibinfo{year}{2018}).
\newblock \urlprefix\url{http://arxiv.org/abs/1811.00414}.
\newblock \bibinfo{note}{ArXiv: 1811.00414}.

\bibitem{tang_quantum-inspired_2018-1}
\bibinfo{author}{Tang, E.}
\newblock \bibinfo{title}{A quantum-inspired classical algorithm for
  recommendation systems}.
\newblock \emph{\bibinfo{journal}{arXiv:1807.04271 [quant-ph]}}
  (\bibinfo{year}{2018}).
\newblock \urlprefix\url{http://arxiv.org/abs/1807.04271}.
\newblock \bibinfo{note}{ArXiv: 1807.04271}.

\bibitem{andoni_solving_2018}
\bibinfo{author}{Andoni, A.}, \bibinfo{author}{Krauthgamer, R.} \&
  \bibinfo{author}{Pogrow, Y.}
\newblock \bibinfo{title}{On {Solving} {Linear} {Systems} in {Sublinear}
  {Time}}.
\newblock \emph{\bibinfo{journal}{arXiv:1809.02995 [cs]}}
  (\bibinfo{year}{2018}).
\newblock \urlprefix\url{http://arxiv.org/abs/1809.02995}.
\newblock \bibinfo{note}{ArXiv: 1809.02995}.

\bibitem{chia_quantum-inspired_2018}
\bibinfo{author}{Chia, N.-H.}, \bibinfo{author}{Lin, H.-H.} \&
  \bibinfo{author}{Wang, C.}
\newblock \bibinfo{title}{Quantum-inspired sublinear classical algorithms for
  solving low-rank linear systems}.
\newblock \emph{\bibinfo{journal}{arXiv:1811.04852 [quant-ph]}}
  (\bibinfo{year}{2018}).
\newblock \urlprefix\url{http://arxiv.org/abs/1811.04852}.
\newblock \bibinfo{note}{ArXiv: 1811.04852}.

\bibitem{gilyen_quantum-inspired_2018}
\bibinfo{author}{Gily{\'e}n, A.}, \bibinfo{author}{Lloyd, S.} \&
  \bibinfo{author}{Tang, E.}
\newblock \bibinfo{title}{Quantum-inspired low-rank stochastic regression with
  logarithmic dependence on the dimension}.
\newblock \emph{\bibinfo{journal}{arXiv:1811.04909 [quant-ph]}}
  (\bibinfo{year}{2018}).
\newblock \urlprefix\url{http://arxiv.org/abs/1811.04909}.
\newblock \bibinfo{note}{ArXiv: 1811.04909}.

\bibitem{kearns_learnability_1994}
\bibinfo{author}{Kearns, M.} \emph{et~al.}
\newblock \bibinfo{title}{On the {Learnability} of {Discrete} {Distributions}}.
\newblock In \emph{\bibinfo{booktitle}{Proceedings of the {Twenty}-sixth
  {Annual} {ACM} {Symposium} on {Theory} of {Computing}}}, {STOC} '94,
  \bibinfo{pages}{273--282} (\bibinfo{publisher}{ACM}, \bibinfo{address}{New
  York, NY, USA}, \bibinfo{year}{1994}).
\newblock \urlprefix\url{http://doi.acm.org/10.1145/195058.195155}.
\newblock \bibinfo{note}{Event-place: Montreal, Quebec, Canada}.

\bibitem{shepherd_temporally_2009}
\bibinfo{author}{Shepherd, D.} \& \bibinfo{author}{Bremner, M.~J.}
\newblock \bibinfo{title}{Temporally unstructured quantum computation}.
\newblock \emph{\bibinfo{journal}{PROC R SOC A}}  (\bibinfo{year}{2009}).
\newblock
  \urlprefix\url{http://rspa.royalsocietypublishing.org/content/early/2009/02/18/rspa.2008.0443.abstract}.

\bibitem{farhi_quantum_2014}
\bibinfo{author}{Farhi, E.}, \bibinfo{author}{Goldstone, J.} \&
  \bibinfo{author}{Gutmann, S.}
\newblock \bibinfo{title}{A {Quantum} {Approximate} {Optimization}
  {Algorithm}}.
\newblock \emph{\bibinfo{journal}{arXiv:1411.4028 [quant-ph]}}
  (\bibinfo{year}{2014}).
\newblock \urlprefix\url{http://arxiv.org/abs/1411.4028}.
\newblock \bibinfo{note}{ArXiv: 1411.4028}.

\bibitem{farhi_quantum_2000}
\bibinfo{author}{Farhi, E.}, \bibinfo{author}{Goldstone, J.},
  \bibinfo{author}{Gutmann, S.} \& \bibinfo{author}{Sipser, M.}
\newblock \bibinfo{title}{Quantum {Computation} by {Adiabatic} {Evolution}}.
\newblock \emph{\bibinfo{journal}{arXiv:quant-ph/0001106}}
  (\bibinfo{year}{2000}).
\newblock \urlprefix\url{http://arxiv.org/abs/quant-ph/0001106}.
\newblock \bibinfo{note}{ArXiv: quant-ph/0001106}.

\bibitem{bremner_achieving_2017}
\bibinfo{author}{Bremner, M.~J.}, \bibinfo{author}{Montanaro, A.} \&
  \bibinfo{author}{Shepherd, D.~J.}
\newblock \bibinfo{title}{Achieving quantum supremacy with sparse and noisy
  commuting quantum computations}.
\newblock \emph{\bibinfo{journal}{Quantum}} \textbf{\bibinfo{volume}{1}},
  \bibinfo{pages}{8} (\bibinfo{year}{2017}).
\newblock \urlprefix\url{https://doi.org/10.22331/q-2017-04-25-8}.

\bibitem{fujii_commuting_2017}
\bibinfo{author}{Fujii, K.} \& \bibinfo{author}{Morimae, T.}
\newblock \bibinfo{title}{Commuting quantum circuits and complexity of {Ising}
  partition functions}.
\newblock \emph{\bibinfo{journal}{New Journal of Physics}}
  \textbf{\bibinfo{volume}{19}}, \bibinfo{pages}{033003}
  (\bibinfo{year}{2017}).
\newblock
  \urlprefix\url{http://stacks.iop.org/1367-2630/19/i=3/a=033003?key=crossref.cefbe34cf11242886552ceea447a4526}.

\bibitem{leyton-ortega_robust_2019}
\bibinfo{author}{Leyton-Ortega, V.}, \bibinfo{author}{Perdomo-Ortiz, A.} \&
  \bibinfo{author}{Perdomo, O.}
\newblock \bibinfo{title}{Robust {Implementation} of {Generative} {Modeling}
  with {Parametrized} {Quantum} {Circuits}}.
\newblock \emph{\bibinfo{journal}{arXiv:1901.08047 [quant-ph]}}
  (\bibinfo{year}{2019}).
\newblock \urlprefix\url{http://arxiv.org/abs/1901.08047}.
\newblock \bibinfo{note}{ArXiv: 1901.08047}.

\bibitem{hamilton_generative_2019}
\bibinfo{author}{Hamilton, K.~E.}, \bibinfo{author}{Dumitrescu, E.~F.} \&
  \bibinfo{author}{Pooser, R.~C.}
\newblock \bibinfo{title}{Generative model benchmarks for superconducting
  qubits}.
\newblock \emph{\bibinfo{journal}{Phys. Rev. A}} \textbf{\bibinfo{volume}{99}},
  \bibinfo{pages}{062323} (\bibinfo{year}{2019}).
\newblock \urlprefix\url{https://link.aps.org/doi/10.1103/PhysRevA.99.062323}.

\bibitem{lloyd_quantum_2018}
\bibinfo{author}{Lloyd, S.} \& \bibinfo{author}{Weedbrook, C.}
\newblock \bibinfo{title}{Quantum {Generative} {Adversarial} {Learning}}.
\newblock \emph{\bibinfo{journal}{Phys. Rev. Lett.}}
  \textbf{\bibinfo{volume}{121}}, \bibinfo{pages}{040502}
  (\bibinfo{year}{2018}).
\newblock
  \urlprefix\url{https://link.aps.org/doi/10.1103/PhysRevLett.121.040502}.

\bibitem{dallaire-demers_quantum_2018}
\bibinfo{author}{Dallaire-Demers, P.-L.} \& \bibinfo{author}{Killoran, N.}
\newblock \bibinfo{title}{Quantum generative adversarial networks}.
\newblock \emph{\bibinfo{journal}{Phys. Rev. A}} \textbf{\bibinfo{volume}{98}},
  \bibinfo{pages}{012324} (\bibinfo{year}{2018}).
\newblock \urlprefix\url{https://link.aps.org/doi/10.1103/PhysRevA.98.012324}.

\bibitem{borgwardt_integrating_2006}
\bibinfo{author}{Borgwardt, K.~M.} \emph{et~al.}
\newblock \bibinfo{title}{Integrating structured biological data by {Kernel}
  {Maximum} {Mean} {Discrepancy}}.
\newblock \emph{\bibinfo{journal}{Bioinformatics}}
  \textbf{\bibinfo{volume}{22}}, \bibinfo{pages}{e49--e57}
  (\bibinfo{year}{2006}).
\newblock \urlprefix\url{http://dx.doi.org/10.1093/bioinformatics/btl242}.

\bibitem{gretton_kernel_2007}
\bibinfo{author}{Gretton, A.}, \bibinfo{author}{Borgwardt, K.~M.},
  \bibinfo{author}{Rasch, M.}, \bibinfo{author}{Sch{\"o}lkopf, B.} \&
  \bibinfo{author}{Smola, A.~J.}
\newblock \bibinfo{title}{A {Kernel} {Method} for the
  {Two}-{Sample}-{Problem}}.
\newblock In \bibinfo{editor}{Sch{\"o}lkopf, B.}, \bibinfo{editor}{Platt,
  J.~C.} \& \bibinfo{editor}{Hoffman, T.} (eds.)
  \emph{\bibinfo{booktitle}{Advances in {Neural} {Information} {Processing}
  {Systems} 19}}, \bibinfo{pages}{513--520} (\bibinfo{publisher}{MIT Press},
  \bibinfo{year}{2007}).
\newblock
  \urlprefix\url{http://papers.nips.cc/paper/3110-a-kernel-method-for-the-two-sample-problem.pdf}.

\bibitem{havlicek_supervised_2019}
\bibinfo{author}{Havl{\'i}{\v c}ek, V.} \emph{et~al.}
\newblock \bibinfo{title}{Supervised learning with quantum-enhanced feature
  spaces}.
\newblock \emph{\bibinfo{journal}{Nature}} \textbf{\bibinfo{volume}{567}},
  \bibinfo{pages}{209--212} (\bibinfo{year}{2019}).
\newblock \urlprefix\url{https://www.nature.com/articles/s41586-019-0980-2}.

\bibitem{schuld_supervised_2018}
\bibinfo{author}{Schuld, M.} \& \bibinfo{author}{Petruccione, F.}
\newblock \emph{\bibinfo{title}{Supervised {Learning} with {Quantum}
  {Computers}}}.
\newblock Quantum {Science} and {Technology} (\bibinfo{publisher}{Springer
  International Publishing}, \bibinfo{year}{2018}).
\newblock \urlprefix\url{https://www.springer.com/us/book/9783319964232}.

\bibitem{mitarai_quantum_2018}
\bibinfo{author}{Mitarai, K.}, \bibinfo{author}{Negoro, M.},
  \bibinfo{author}{Kitagawa, M.} \& \bibinfo{author}{Fujii, K.}
\newblock \bibinfo{title}{Quantum circuit learning}.
\newblock \emph{\bibinfo{journal}{Phys. Rev. A}} \textbf{\bibinfo{volume}{98}},
  \bibinfo{pages}{032309} (\bibinfo{year}{2018}).
\newblock \urlprefix\url{https://link.aps.org/doi/10.1103/PhysRevA.98.032309}.

\bibitem{schuld_evaluating_2019}
\bibinfo{author}{Schuld, M.}, \bibinfo{author}{Bergholm, V.},
  \bibinfo{author}{Gogolin, C.}, \bibinfo{author}{Izaac, J.} \&
  \bibinfo{author}{Killoran, N.}
\newblock \bibinfo{title}{Evaluating analytic gradients on quantum hardware}.
\newblock \emph{\bibinfo{journal}{Phys. Rev. A}} \textbf{\bibinfo{volume}{99}},
  \bibinfo{pages}{032331} (\bibinfo{year}{2019}).
\newblock \urlprefix\url{https://link.aps.org/doi/10.1103/PhysRevA.99.032331}.

\bibitem{liu_kernelized_2016}
\bibinfo{author}{Liu, Q.}, \bibinfo{author}{Lee, J.~D.} \&
  \bibinfo{author}{Jordan, M.}
\newblock \bibinfo{title}{A {Kernelized} {Stein} {Discrepancy} for
  {Goodness}-of-fit {Tests}}.
\newblock In \emph{\bibinfo{booktitle}{Proceedings of the 33rd {International}
  {Conference} on {International} {Conference} on {Machine} {Learning} -
  {Volume} 48}}, {ICML}'16, \bibinfo{pages}{276--284}
  (\bibinfo{publisher}{JMLR.org}, \bibinfo{address}{New York, NY, USA},
  \bibinfo{year}{2016}).
\newblock \urlprefix\url{http://dl.acm.org/citation.cfm?id=3045390.3045421}.

\bibitem{stein_bound_1972}
\bibinfo{author}{Stein, C.}
\newblock \bibinfo{title}{A bound for the error in the normal approximation to
  the distribution of a sum of dependent random variables}.
\newblock In \emph{\bibinfo{booktitle}{Proceedings of the {Sixth} {Berkeley}
  {Symposium} on {Mathematical} {Statistics} and {Probability}, {Volume} 2:
  {Probability} {Theory}}}, \bibinfo{pages}{583--602}
  (\bibinfo{publisher}{University of California Press},
  \bibinfo{address}{Berkeley, Calif.}, \bibinfo{year}{1972}).
\newblock \urlprefix\url{https://projecteuclid.org/euclid.bsmsp/1200514239}.

\bibitem{yang_goodness--fit_2018}
\bibinfo{author}{Yang, J.}, \bibinfo{author}{Liu, Q.}, \bibinfo{author}{Rao,
  V.} \& \bibinfo{author}{Neville, J.}
\newblock \bibinfo{title}{Goodness-of-{Fit} {Testing} for {Discrete}
  {Distributions} via {Stein} {Discrepancy}}.
\newblock In \bibinfo{editor}{Dy, J.} \& \bibinfo{editor}{Krause, A.} (eds.)
  \emph{\bibinfo{booktitle}{Proceedings of the 35th {International}
  {Conference} on {Machine} {Learning}}}, vol.~\bibinfo{volume}{80} of
  \emph{\bibinfo{series}{Proceedings of {Machine} {Learning} {Research}}},
  \bibinfo{pages}{5561--5570} (\bibinfo{publisher}{PMLR},
  \bibinfo{address}{Stockholmsm{\"a}ssan, Stockholm Sweden},
  \bibinfo{year}{2018}).
\newblock \urlprefix\url{http://proceedings.mlr.press/v80/yang18c.html}.

\bibitem{gorham_measuring_2015}
\bibinfo{author}{Gorham, J.} \& \bibinfo{author}{Mackey, L.}
\newblock \bibinfo{title}{Measuring {Sample} {Quality} with {Stein}'s
  {Method}}.
\newblock In \bibinfo{editor}{Cortes, C.}, \bibinfo{editor}{Lawrence, N.~D.},
  \bibinfo{editor}{Lee, D.~D.}, \bibinfo{editor}{Sugiyama, M.} \&
  \bibinfo{editor}{Garnett, R.} (eds.) \emph{\bibinfo{booktitle}{Advances in
  {Neural} {Information} {Processing} {Systems} 28}}, \bibinfo{pages}{226--234}
  (\bibinfo{publisher}{Curran Associates, Inc.}, \bibinfo{year}{2015}).
\newblock
  \urlprefix\url{http://papers.nips.cc/paper/5768-measuring-sample-quality-with-steins-method.pdf}.

\bibitem{li_gradient_2018}
\bibinfo{author}{Li, Y.} \& \bibinfo{author}{Turner, R.~E.}
\newblock \bibinfo{title}{Gradient {Estimators} for {Implicit} {Models}}.
\newblock In \emph{\bibinfo{booktitle}{International {Conference} on {Learning}
  {Representations}}} (\bibinfo{year}{2018}).
\newblock \urlprefix\url{https://openreview.net/forum?id=SJi9WOeRb}.

\bibitem{shi_spectral_2018}
\bibinfo{author}{Shi, J.}, \bibinfo{author}{Sun, S.} \& \bibinfo{author}{Zhu,
  J.}
\newblock \bibinfo{title}{A {Spectral} {Approach} to {Gradient} {Estimation}
  for {Implicit} {Distributions}}.
\newblock \emph{\bibinfo{journal}{arXiv:1806.02925 [cs, stat]}}
  (\bibinfo{year}{2018}).
\newblock \urlprefix\url{http://arxiv.org/abs/1806.02925}.
\newblock \bibinfo{note}{ArXiv: 1806.02925}.

\bibitem{nystrom_uber_1930}
\bibinfo{author}{Nystr{\"o}m, E.~J.}
\newblock \bibinfo{title}{{\"U}ber {Die} {Praktische} {Aufl{\"o}sung} von
  {Integralgleichungen} mit {Anwendungen} auf {Randwertaufgaben}}.
\newblock \emph{\bibinfo{journal}{Acta Math.}} \textbf{\bibinfo{volume}{54}},
  \bibinfo{pages}{185--204} (\bibinfo{year}{1930}).
\newblock \urlprefix\url{https://doi.org/10.1007/BF02547521}.

\bibitem{ramdas_wasserstein_2015}
\bibinfo{author}{Ramdas, A.}, \bibinfo{author}{Garcia, N.} \&
  \bibinfo{author}{Cuturi, M.}
\newblock \bibinfo{title}{On {Wasserstein} {Two} {Sample} {Testing} and
  {Related} {Families} of {Nonparametric} {Tests}}.
\newblock \emph{\bibinfo{journal}{arXiv:1509.02237 [math, stat]}}
  (\bibinfo{year}{2015}).
\newblock \urlprefix\url{http://arxiv.org/abs/1509.02237}.
\newblock \bibinfo{note}{ArXiv: 1509.02237}.

\bibitem{genevay_learning_2018}
\bibinfo{author}{Genevay, A.}, \bibinfo{author}{Peyre, G.} \&
  \bibinfo{author}{Cuturi, M.}
\newblock \bibinfo{title}{Learning {Generative} {Models} with {Sinkhorn}
  {Divergences}}.
\newblock In \bibinfo{editor}{Storkey, A.} \& \bibinfo{editor}{Perez-Cruz, F.}
  (eds.) \emph{\bibinfo{booktitle}{Proceedings of the {Twenty}-{First}
  {International} {Conference} on {Artificial} {Intelligence} and
  {Statistics}}}, vol.~\bibinfo{volume}{84} of
  \emph{\bibinfo{series}{Proceedings of {Machine} {Learning} {Research}}},
  \bibinfo{pages}{1608--1617} (\bibinfo{publisher}{PMLR},
  \bibinfo{address}{Playa Blanca, Lanzarote, Canary Islands},
  \bibinfo{year}{2018}).
\newblock \urlprefix\url{http://proceedings.mlr.press/v84/genevay18a.html}.

\bibitem{feydy_interpolating_2019}
\bibinfo{author}{Feydy, J.} \emph{et~al.}
\newblock \bibinfo{title}{Interpolating between {Optimal} {Transport} and {MMD}
  using {Sinkhorn} {Divergences}}.
\newblock In \bibinfo{editor}{Chaudhuri, K.} \& \bibinfo{editor}{Sugiyama, M.}
  (eds.) \emph{\bibinfo{booktitle}{Proceedings of {Machine} {Learning}
  {Research}}}, vol.~\bibinfo{volume}{89} of \emph{\bibinfo{series}{Proceedings
  of {Machine} {Learning} {Research}}}, \bibinfo{pages}{2681--2690}
  (\bibinfo{publisher}{PMLR}, \bibinfo{year}{2019}).
\newblock \urlprefix\url{http://proceedings.mlr.press/v89/feydy19a.html}.

\bibitem{villani_optimal_2009}
\bibinfo{author}{Villani, C.}
\newblock \emph{\bibinfo{title}{Optimal {Transport}: {Old} and {New}}}.
\newblock Grundlehren der mathematischen {Wissenschaften}
  (\bibinfo{publisher}{Springer-Verlag}, \bibinfo{address}{Berlin Heidelberg},
  \bibinfo{year}{2009}).
\newblock \urlprefix\url{//www.springer.com/gb/book/9783540710493}.

\bibitem{arjovsky_wasserstein_2017}
\bibinfo{author}{Arjovsky, M.}, \bibinfo{author}{Chintala, S.} \&
  \bibinfo{author}{Bottou, L.}
\newblock \bibinfo{title}{Wasserstein {GAN}}.
\newblock \emph{\bibinfo{journal}{arXiv:1701.07875 [cs, stat]}}
  (\bibinfo{year}{2017}).
\newblock \urlprefix\url{http://arxiv.org/abs/1701.07875}.
\newblock \bibinfo{note}{ArXiv: 1701.07875}.

\bibitem{dudley_speed_1969}
\bibinfo{author}{Dudley, R.~M.}
\newblock \bibinfo{title}{The {Speed} of {Mean} {Glivenko}-{Cantelli}
  {Convergence}}.
\newblock \emph{\bibinfo{journal}{Ann. Math. Statist.}}
  \textbf{\bibinfo{volume}{40}}, \bibinfo{pages}{40--50}
  (\bibinfo{year}{1969}).
\newblock \urlprefix\url{https://doi.org/10.1214/aoms/1177697802}.

\bibitem{genevay_sample_2018}
\bibinfo{author}{Genevay, A.}, \bibinfo{author}{Chizat, L.},
  \bibinfo{author}{Bach, F.}, \bibinfo{author}{Cuturi, M.} \&
  \bibinfo{author}{Peyr{\'e}, G.}
\newblock \bibinfo{title}{Sample {Complexity} of {Sinkhorn} divergences}.
\newblock \emph{\bibinfo{journal}{arXiv:1810.02733 [math, stat]}}
  (\bibinfo{year}{2018}).
\newblock \urlprefix\url{http://arxiv.org/abs/1810.02733}.
\newblock \bibinfo{note}{ArXiv: 1810.02733}.

\bibitem{sriperumbudur_integral_2009}
\bibinfo{author}{Sriperumbudur, B.~K.}, \bibinfo{author}{Fukumizu, K.},
  \bibinfo{author}{Gretton, A.}, \bibinfo{author}{Sch{\"o}lkopf, B.} \&
  \bibinfo{author}{Lanckriet, G. R.~G.}
\newblock \bibinfo{title}{On integral probability metrics, phi-divergences and
  binary classification}.
\newblock \emph{\bibinfo{journal}{arXiv:0901.2698 [cs, math]}}
  (\bibinfo{year}{2009}).
\newblock \urlprefix\url{http://arxiv.org/abs/0901.2698}.

\bibitem{gibbs_choosing_2002}
\bibinfo{author}{Gibbs, A.~L.} \& \bibinfo{author}{Su, F.~E.}
\newblock \bibinfo{title}{On {Choosing} and {Bounding} {Probability}
  {Metrics}}.
\newblock \emph{\bibinfo{journal}{International Statistical Review}}
  \textbf{\bibinfo{volume}{70}}, \bibinfo{pages}{419--435}
  (\bibinfo{year}{2002}).
\newblock
  \urlprefix\url{https://onlinelibrary.wiley.com/doi/abs/10.1111/j.1751-5823.2002.tb00178.x}.

\bibitem{smith_practical_2016}
\bibinfo{author}{Smith, R.~S.}, \bibinfo{author}{Curtis, M.~J.} \&
  \bibinfo{author}{Zeng, W.~J.}
\newblock \bibinfo{title}{A {Practical} {Quantum} {Instruction} {Set}
  {Architecture}}.
\newblock \emph{\bibinfo{journal}{arXiv:1608.03355 [quant-ph]}}
  (\bibinfo{year}{2016}).
\newblock \urlprefix\url{http://arxiv.org/abs/1608.03355}.
\newblock \bibinfo{note}{ArXiv: 1608.03355}.

\bibitem{arunachalam_survey_2017}
\bibinfo{author}{Arunachalam, S.} \& \bibinfo{author}{de~Wolf, R.}
\newblock \bibinfo{title}{A {Survey} of {Quantum} {Learning} {Theory}}.
\newblock \emph{\bibinfo{journal}{arXiv:1701.06806 [quant-ph]}}
  (\bibinfo{year}{2017}).
\newblock \urlprefix\url{http://arxiv.org/abs/1701.06806}.
\newblock \bibinfo{note}{ArXiv: 1701.06806}.

\bibitem{arunachalam_quantum_2019}
\bibinfo{author}{Arunachalam, S.}, \bibinfo{author}{Grilo, A.~B.} \&
  \bibinfo{author}{Sundaram, A.}
\newblock \bibinfo{title}{Quantum hardness of learning shallow classical
  circuits}.
\newblock \emph{\bibinfo{journal}{arXiv:1903.02840 [quant-ph]}}
  (\bibinfo{year}{2019}).
\newblock \urlprefix\url{http://arxiv.org/abs/1903.02840}.
\newblock \bibinfo{note}{ArXiv: 1903.02840}.

\bibitem{khatri_quantum-assisted_2019}
\bibinfo{author}{Khatri, S.} \emph{et~al.}
\newblock \bibinfo{title}{Quantum-assisted quantum compiling}.
\newblock \emph{\bibinfo{journal}{Quantum}} \textbf{\bibinfo{volume}{3}},
  \bibinfo{pages}{140} (\bibinfo{year}{2019}).
\newblock \urlprefix\url{https://doi.org/10.22331/q-2019-05-13-140}.

\bibitem{jones_quantum_2018}
\bibinfo{author}{Jones, T.} \& \bibinfo{author}{Benjamin, S.~C.}
\newblock \bibinfo{title}{Quantum compilation and circuit optimisation via
  energy dissipation}.
\newblock \emph{\bibinfo{journal}{arXiv:1811.03147 [quant-ph]}}
  (\bibinfo{year}{2018}).
\newblock \urlprefix\url{http://arxiv.org/abs/1811.03147}.
\newblock \bibinfo{note}{ArXiv: 1811.03147}.

\bibitem{gao_efficient_2017}
\bibinfo{author}{Gao, X.}, \bibinfo{author}{Zhang, Z.} \&
  \bibinfo{author}{Duan, L.}
\newblock \bibinfo{title}{An efficient quantum algorithm for generative machine
  learning}.
\newblock \emph{\bibinfo{journal}{arXiv:1711.02038 [quant-ph, stat]}}
  (\bibinfo{year}{2017}).
\newblock \urlprefix\url{http://arxiv.org/abs/1711.02038}.
\newblock \bibinfo{note}{ArXiv: 1711.02038}.

\bibitem{hangleiter_sample_2019}
\bibinfo{author}{Hangleiter, D.}, \bibinfo{author}{Kliesch, M.},
  \bibinfo{author}{Eisert, J.} \& \bibinfo{author}{Gogolin, C.}
\newblock \bibinfo{title}{Sample {Complexity} of {Device}-{Independently}
  {Certified} {\textquotedblleft}{Quantum} {Supremacy}{\textquotedblright}}.
\newblock \emph{\bibinfo{journal}{Phys. Rev. Lett.}}
  \textbf{\bibinfo{volume}{122}}, \bibinfo{pages}{210502}
  (\bibinfo{year}{2019}).
\newblock
  \urlprefix\url{https://link.aps.org/doi/10.1103/PhysRevLett.122.210502}.

\bibitem{goldreich_property_1998}
\bibinfo{author}{Goldreich, O.}, \bibinfo{author}{Goldwasser, S.} \&
  \bibinfo{author}{Ron, D.}
\newblock \bibinfo{title}{Property {Testing} and {Its} {Connection} to
  {Learning} and {Approximation}}.
\newblock \emph{\bibinfo{journal}{J. ACM}} \textbf{\bibinfo{volume}{45}},
  \bibinfo{pages}{653--750} (\bibinfo{year}{1998}).
\newblock \urlprefix\url{http://doi.acm.org/10.1145/285055.285060}.

\bibitem{amin_quantum_2018}
\bibinfo{author}{Amin, M.~H.}, \bibinfo{author}{Andriyash, E.},
  \bibinfo{author}{Rolfe, J.}, \bibinfo{author}{Kulchytskyy, B.} \&
  \bibinfo{author}{Melko, R.}
\newblock \bibinfo{title}{Quantum {Boltzmann} {Machine}}.
\newblock \emph{\bibinfo{journal}{Phys. Rev. X}} \textbf{\bibinfo{volume}{8}},
  \bibinfo{pages}{021050} (\bibinfo{year}{2018}).
\newblock \urlprefix\url{https://link.aps.org/doi/10.1103/PhysRevX.8.021050}.

\bibitem{verdon_quantum_2017}
\bibinfo{author}{Verdon, G.}, \bibinfo{author}{Broughton, M.} \&
  \bibinfo{author}{Biamonte, J.}
\newblock \bibinfo{title}{A quantum algorithm to train neural networks using
  low-depth circuits}.
\newblock \emph{\bibinfo{journal}{arXiv:1712.05304 [cond-mat,
  physics:quant-ph]}}  (\bibinfo{year}{2017}).
\newblock \urlprefix\url{http://arxiv.org/abs/1712.05304}.
\newblock \bibinfo{note}{ArXiv: 1712.05304}.

\bibitem{kingma_adam:_2014}
\bibinfo{author}{Kingma, D.~P.} \& \bibinfo{author}{Ba, J.}
\newblock \bibinfo{title}{Adam: {A} {Method} for {Stochastic} {Optimization}}.
\newblock \emph{\bibinfo{journal}{arXiv:1412.6980 [cs]}}
  (\bibinfo{year}{2014}).
\newblock \urlprefix\url{http://arxiv.org/abs/1412.6980}.
\newblock \bibinfo{note}{ArXiv: 1412.6980}.


\bibitem{brian_coyle_briancoyleisingbornmachine_2020}
\bibinfo{author}{Coyle, B.}
\newblock \bibinfo{title}{{BrianCoyle}/{IsingBornMachine}: {Ising} {Born}
  {Machine}} (\bibinfo{year}{2020}).
\newblock \urlprefix\url{https://zenodo.org/record/3779865#.XqvfknVKhrk}.

\end{thebibliography}

\begin{thebibliography}{10}
\expandafter\ifx\csname url\endcsname\relax
  \def\url#1{\texttt{#1}}\fi
\expandafter\ifx\csname urlprefix\endcsname\relax\def\urlprefix{URL }\fi
\providecommand{\bibinfo}[2]{#2}
\providecommand{\eprint}[2][]{\url{#2}}

\bibitem{muller_integral_1997}
\bibinfo{author}{Müller, A.}
\newblock \bibinfo{title}{Integral {Probability} {Metrics} and {Their}
  {Generating} {Classes} of {Functions}}.
\newblock \emph{\bibinfo{journal}{Advances in Applied Probability}}
  \textbf{\bibinfo{volume}{29}}, \bibinfo{pages}{429--443}
  (\bibinfo{year}{1997}).
\newblock \urlprefix\url{http://www.jstor.org/stable/1428011}.

\bibitem{gretton_kernel_2012}
\bibinfo{author}{Gretton, A.}, \bibinfo{author}{Borgwardt, K.~M.},
  \bibinfo{author}{Rasch, M.~J.}, \bibinfo{author}{Schölkopf, B.} \&
  \bibinfo{author}{Smola, A.}
\newblock \bibinfo{title}{A {Kernel} {Two}-{Sample} {Test}}.
\newblock \emph{\bibinfo{journal}{Journal of Machine Learning Research}}
  \textbf{\bibinfo{volume}{13}}, \bibinfo{pages}{723−773}
  (\bibinfo{year}{2012}).
\newblock \urlprefix\url{http://jmlr.csail.mit.edu/papers/v13/gretton12a.html}.

\bibitem{sriperumbudur_universality_2011}
\bibinfo{author}{Sriperumbudur, B.~K.}, \bibinfo{author}{Fukumizu, K.} \&
  \bibinfo{author}{Lanckriet, G. R.~G.}
\newblock \bibinfo{title}{Universality, {Characteristic} {Kernels} and {RKHS}
  {Embedding} of {Measures}}.
\newblock \emph{\bibinfo{journal}{Journal of Machine Learning Research}}
  \textbf{\bibinfo{volume}{12}}, \bibinfo{pages}{2389--2410}
  (\bibinfo{year}{2011}).
\newblock \urlprefix\url{http://www.jmlr.org/papers/v12/sriperumbudur11a.html}.

\bibitem{sriperumbudur_hilbert_2010}
\bibinfo{author}{Sriperumbudur, B.~K.}, \bibinfo{author}{Gretton, A.},
  \bibinfo{author}{Fukumizu, K.}, \bibinfo{author}{Schölkopf, B.} \&
  \bibinfo{author}{Lanckriet, G. R.~G.}
\newblock \bibinfo{title}{Hilbert {Space} {Embeddings} and {Metrics} on
  {Probability} {Measures}}.
\newblock \emph{\bibinfo{journal}{Journal of Machine Learning Research}}
  \textbf{\bibinfo{volume}{11}}, \bibinfo{pages}{1517--1561}
  (\bibinfo{year}{2010}).
\newblock \urlprefix\url{http://www.jmlr.org/papers/v11/sriperumbudur10a.html}.

\bibitem{dudley_real_2002}
\bibinfo{author}{Dudley, R.~M.}
\newblock \emph{\bibinfo{title}{Real {Analysis} and {Probability}}}.
\newblock Cambridge {Studies} in {Advanced} {Mathematics}
  (\bibinfo{publisher}{Cambridge University Press}, \bibinfo{year}{2002}),
  \bibinfo{edition}{2} edn.

\bibitem{schuld_quantum_2019}
\bibinfo{author}{Schuld, M.} \& \bibinfo{author}{Killoran, N.}
\newblock \bibinfo{title}{Quantum {Machine} {Learning} in {Feature} {Hilbert}
  {Spaces}}.
\newblock \emph{\bibinfo{journal}{Phys. Rev. Lett.}}
  \textbf{\bibinfo{volume}{122}}, \bibinfo{pages}{040504}
  (\bibinfo{year}{2019}).
\newblock
  \urlprefix\url{https://link.aps.org/doi/10.1103/PhysRevLett.122.040504}.

\bibitem{muandet_kernel_2017}
\bibinfo{author}{Muandet, K.}, \bibinfo{author}{Fukumizu, K.},
  \bibinfo{author}{Sriperumbudur, B.} \& \bibinfo{author}{Schölkopf, B.}
\newblock \bibinfo{title}{Kernel {Mean} {Embedding} of {Distributions}: {A}
  {Review} and {Beyond}}.
\newblock \emph{\bibinfo{journal}{Foundations and Trends® in Machine
  Learning}} \textbf{\bibinfo{volume}{10}}, \bibinfo{pages}{1--141}
  (\bibinfo{year}{2017}).
\newblock \urlprefix\url{http://arxiv.org/abs/1605.09522}.
\newblock \bibinfo{note}{ArXiv: 1605.09522}.

\bibitem{kubler_quantum_2019}
\bibinfo{author}{Kübler, J.~M.}, \bibinfo{author}{Muandet, K.} \&
  \bibinfo{author}{Schölkopf, B.}
\newblock \bibinfo{title}{Quantum mean embedding of probability distributions}.
\newblock \emph{\bibinfo{journal}{Phys. Rev. Research}}
  \textbf{\bibinfo{volume}{1}}, \bibinfo{pages}{033159} (\bibinfo{year}{2019}).
\newblock
  \urlprefix\url{https://link.aps.org/doi/10.1103/PhysRevResearch.1.033159}.

\bibitem{lloyd_quantum_1999}
\bibinfo{author}{Lloyd, S.} \& \bibinfo{author}{Braunstein, S.~L.}
\newblock \bibinfo{title}{Quantum {Computation} over {Continuous} {Variables}}.
\newblock \emph{\bibinfo{journal}{Physical Review Letters}}
  \textbf{\bibinfo{volume}{82}}, \bibinfo{pages}{1784--1787}
  (\bibinfo{year}{1999}).
\newblock \urlprefix\url{https://link.aps.org/doi/10.1103/PhysRevLett.82.1784}.

\bibitem{suzuki_analyzing_2019}
\bibinfo{author}{Suzuki, Y.} \emph{et~al.}
\newblock \bibinfo{title}{Analyzing feature space via {Pauli} decomposition for
  quantum classifier}.
\newblock \emph{\bibinfo{journal}{arXiv:1906.10467 [quant-ph]}}
  (\bibinfo{year}{2019}).
\newblock \urlprefix\url{http://arxiv.org/abs/1906.10467}.
\newblock \bibinfo{note}{ArXiv: 1906.10467}.

\bibitem{fukumizu_kernel_2007}
\bibinfo{author}{Fukumizu, K.}, \bibinfo{author}{Gretton, A.},
  \bibinfo{author}{Sun, X.} \& \bibinfo{author}{Schölkopf, B.}
\newblock \bibinfo{title}{Kernel {Measures} of {Conditional} {Dependence}}.
\newblock In \emph{\bibinfo{booktitle}{{NIPS}}} (\bibinfo{year}{2007}).

\bibitem{sriperumbudur_injective_2008}
\bibinfo{author}{Sriperumbudur, B.~K.}, \bibinfo{author}{Gretton, A.},
  \bibinfo{author}{Fukumizu, K.}, \bibinfo{author}{Lanckriet, G. R.~G.} \&
  \bibinfo{author}{Schölkopf, B.}
\newblock \bibinfo{title}{Injective {Hilbert} {Space} {Embeddings} of
  {Probability} {Measures}}.
\newblock In \emph{\bibinfo{booktitle}{{COLT}}} (\bibinfo{year}{2008}).

\bibitem{mohamed_learning_2016}
\bibinfo{author}{Mohamed, S.} \& \bibinfo{author}{Lakshminarayanan, B.}
\newblock \bibinfo{title}{Learning in {Implicit} {Generative} {Models}}.
\newblock \emph{\bibinfo{journal}{arXiv:1610.03483 [cs, stat]}}
  (\bibinfo{year}{2016}).
\newblock \urlprefix\url{http://arxiv.org/abs/1610.03483}.
\newblock \bibinfo{note}{ArXiv: 1610.03483}.

\bibitem{diggle_monte_1984}
\bibinfo{author}{Diggle, P.~J.} \& \bibinfo{author}{Gratton, R.~J.}
\newblock \bibinfo{title}{Monte {Carlo} {Methods} of {Inference} for {Implicit}
  {Statistical} {Models}}.
\newblock \emph{\bibinfo{journal}{Journal of the Royal Statistical Society.
  Series B (Methodological)}} \textbf{\bibinfo{volume}{46}},
  \bibinfo{pages}{193--227} (\bibinfo{year}{1984}).
\newblock \urlprefix\url{http://www.jstor.org/stable/2345504}.

\bibitem{hu_quantum_2019}
\bibinfo{author}{Hu, L.} \emph{et~al.}
\newblock \bibinfo{title}{Quantum generative adversarial learning in a
  superconducting quantum circuit}.
\newblock \emph{\bibinfo{journal}{Science Advances}}
  \textbf{\bibinfo{volume}{5}}, \bibinfo{pages}{eaav2761}
  (\bibinfo{year}{2019}).
\newblock
  \urlprefix\url{http://advances.sciencemag.org/content/5/1/eaav2761.abstract}.

\bibitem{cuturi_sinkhorn_2013}
\bibinfo{author}{Cuturi, M.}
\newblock \bibinfo{title}{Sinkhorn {Distances}: {Lightspeed} {Computation} of
  {Optimal} {Transportation} {Distances}}.
\newblock \emph{\bibinfo{journal}{arXiv:1306.0895 [stat]}}
  (\bibinfo{year}{2013}).
\newblock \urlprefix\url{http://arxiv.org/abs/1306.0895}.
\newblock \bibinfo{note}{ArXiv: 1306.0895}.

\bibitem{sinkhorn_relationship_1964}
\bibinfo{author}{Sinkhorn, R.}
\newblock \bibinfo{title}{A {Relationship} {Between} {Arbitrary} {Positive}
  {Matrices} and {Doubly} {Stochastic} {Matrices}}.
\newblock \emph{\bibinfo{journal}{The Annals of Mathematical Statistics}}
  \textbf{\bibinfo{volume}{35}}, \bibinfo{pages}{876--879}
  (\bibinfo{year}{1964}).
\newblock \urlprefix\url{https://projecteuclid.org/euclid.aoms/1177703591}.

\bibitem{peyre_computational_2018}
\bibinfo{author}{Peyré, G.} \& \bibinfo{author}{Cuturi, M.}
\newblock \bibinfo{title}{Computational {Optimal} {Transport}}.
\newblock \emph{\bibinfo{journal}{arXiv:1803.00567 [stat]}}
  (\bibinfo{year}{2018}).
\newblock \urlprefix\url{http://arxiv.org/abs/1803.00567}.
\newblock \bibinfo{note}{ArXiv: 1803.00567}.

\bibitem{weed_sharp_2017}
\bibinfo{author}{Weed, J.} \& \bibinfo{author}{Bach, F.}
\newblock \bibinfo{title}{Sharp asymptotic and finite-sample rates of
  convergence of empirical measures in {Wasserstein} distance}.
\newblock \emph{\bibinfo{journal}{arXiv:1707.00087 [math, stat]}}
  (\bibinfo{year}{2017}).
\newblock \urlprefix\url{http://arxiv.org/abs/1707.00087}.
\newblock \bibinfo{note}{ArXiv: 1707.00087}.

\bibitem{bouland_quantum_2018}
\bibinfo{author}{Bouland, A.}, \bibinfo{author}{Fefferman, B.},
  \bibinfo{author}{Nirkhe, C.} \& \bibinfo{author}{Vazirani, U.}
\newblock \bibinfo{title}{"{Quantum} {Supremacy}" and the {Complexity} of
  {Random} {Circuit} {Sampling}}.
\newblock In \bibinfo{editor}{Blum, A.} (ed.) \emph{\bibinfo{booktitle}{10th
  {Innovations} in {Theoretical} {Computer} {Science} {Conference} ({ITCS}
  2019)}}, vol. \bibinfo{volume}{124} of \emph{\bibinfo{series}{Leibniz
  {International} {Proceedings} in {Informatics} ({LIPIcs})}},
  \bibinfo{pages}{15:1--15:2} (\bibinfo{publisher}{Schloss
  Dagstuhl–Leibniz-Zentrum fuer Informatik}, \bibinfo{address}{Dagstuhl,
  Germany}, \bibinfo{year}{2018}).
\newblock \urlprefix\url{http://drops.dagstuhl.de/opus/volltexte/2018/10108}.

\bibitem{fujii_impossibility_2018}
\bibinfo{author}{Fujii, K.} \emph{et~al.}
\newblock \bibinfo{title}{Impossibility of {Classically} {Simulating}
  {One}-{Clean}-{Qubit} {Model} with {Multiplicative} {Error}}.
\newblock \emph{\bibinfo{journal}{Phys. Rev. Lett.}}
  \textbf{\bibinfo{volume}{120}}, \bibinfo{pages}{200502}
  (\bibinfo{year}{2018}).
\newblock
  \urlprefix\url{https://link.aps.org/doi/10.1103/PhysRevLett.120.200502}.

\bibitem{nielsen_quantum_2011}
\bibinfo{author}{Nielsen, M.~A.} \& \bibinfo{author}{Chuang, I.~L.}
\newblock \emph{\bibinfo{title}{Quantum {Computation} and {Quantum}
  {Information}: 10th {Anniversary} {Edition}}} (\bibinfo{publisher}{Cambridge
  University Press}, \bibinfo{address}{New York, NY, USA},
  \bibinfo{year}{2011}), \bibinfo{edition}{10th} edn.

\bibitem{boykin_universal_1999}
\bibinfo{author}{Boykin, P.~O.}, \bibinfo{author}{Mor, T.},
  \bibinfo{author}{Pulver, M.}, \bibinfo{author}{Roychowdhury, V.} \&
  \bibinfo{author}{Vatan, F.}
\newblock \bibinfo{title}{On {Universal} and {Fault}-{Tolerant} {Quantum}
  {Computing}}.
\newblock \emph{\bibinfo{journal}{arXiv:quant-ph/9906054}}
  (\bibinfo{year}{1999}).
\newblock \urlprefix\url{http://arxiv.org/abs/quant-ph/9906054}.
\newblock \bibinfo{note}{ArXiv: quant-ph/9906054}.

\bibitem{mcclean_barren_2018}
\bibinfo{author}{McClean, J.~R.}, \bibinfo{author}{Boixo, S.},
  \bibinfo{author}{Smelyanskiy, V.~N.}, \bibinfo{author}{Babbush, R.} \&
  \bibinfo{author}{Neven, H.}
\newblock \bibinfo{title}{Barren plateaus in quantum neural network training
  landscapes}.
\newblock \emph{\bibinfo{journal}{Nature Communications}}
  \textbf{\bibinfo{volume}{9}}, \bibinfo{pages}{1--6} (\bibinfo{year}{2018}).
\newblock \urlprefix\url{https://www.nature.com/articles/s41467-018-07090-4}.

\bibitem{grant_initialization_2019}
\bibinfo{author}{Grant, E.}, \bibinfo{author}{Wossnig, L.},
  \bibinfo{author}{Ostaszewski, M.} \& \bibinfo{author}{Benedetti, M.}
\newblock \bibinfo{title}{An initialization strategy for addressing barren
  plateaus in parametrized quantum circuits}.
\newblock \emph{\bibinfo{journal}{Quantum}} \textbf{\bibinfo{volume}{3}},
  \bibinfo{pages}{214} (\bibinfo{year}{2019}).
\newblock \urlprefix\url{https://doi.org/10.22331/q-2019-12-09-214}.

\bibitem{verdon_learning_2019}
\bibinfo{author}{Verdon, G.} \emph{et~al.}
\newblock \bibinfo{title}{Learning to learn with quantum neural networks via
  classical neural networks}.
\newblock \emph{\bibinfo{journal}{arXiv:1907.05415 [quant-ph]}}
  (\bibinfo{year}{2019}).
\newblock \urlprefix\url{http://arxiv.org/abs/1907.05415}.
\newblock \bibinfo{note}{ArXiv: 1907.05415}.

\end{thebibliography}
\end{document}